\newcommand*{\textoverline}[1]{$\overline{\hbox{#1}}\m@th$}
\providecommand*{\dashv}{%
  \mathrel{%
    \mathpalette\@dashv\vDash
  }%
}
\newcommand*{\@dashv}[2]{%
  \reflectbox{$\m@th#1#2$}%
}
\newcommand{\ok}[1]{\ensuremath{#1\textbf{\text{ ok}}}}
\newcommand{\keyword}[1]{\textbf{#1}\xspace}
\newcommand{\asset}{\keyword{asset}}
\newcommand{\Owned}{\keyword{Owned}}
\newcommand{\Unowned}{\keyword{Unowned}}
\newcommand{\Shared}{\keyword{Shared}}
\newcommand{\contract}{\keyword{contract}}
\newcommand{\stateExpr}{\keyword{state}}
\newcommand{\interface}{\keyword{interface}}
\newcommand{\implements}{\ensuremath{\triangleleft}}
\newcommand{\new}{\keyword{new}}
\newcommand{\letExpr}[4]{\ensuremath{\keyword{let} \ {#1}: {#2} = {#3} \ \keyword{in} \ {#4}}}
\newcommand{\inExpr}{\keyword{in}}
\newcommand{\ifExpr}[5]{\ensuremath{\keyword{if} \ {#1} \  \inExpr\textsubscript{{#2}} \ {#3} \ \keyword{then} \ {#4} \ \keyword{else} \ {#5}}}
\newcommand{\assertExpr}[2]{\ensuremath{\keyword{assert} \  #1 \  \keyword{in} \ #2}}
\newcommand{\trans}{\ensuremath{\operatorname{\rAngle}}}
\newcommand{\ty}[6]{\ensuremath{#1; #2 \vdash_{#3}{}{} #4 : #5 \dashv{}{} #6}}
\newcommand{\subtype}[3]{\ensuremath{#1 \vdash #2 <: #3}}
\newcommand{\subperm}[3]{\ensuremath{#1 \vdash #2 <:_* #3}}
\newcommand{\notsubperm}[3]{\ensuremath{#1 \vdash #2 \not<:_* #3}}
\newcommand{\this}{\keyword{this}}
\newcommand{\disown}[1]{\textbf{disown} \ #1}
\newcommand{\Unit}{\keyword{unit}}
\newcommand{\unit}{\ensuremath{()}}
\newcommand{\pack}{\keyword{pack}}
\newcommand{\lStronger}[3]{\ensuremath{#1 <^l_{#2} #3}}
\newcommand{\splitType}[3]{\ensuremath{#1 \Rrightarrow #2 / #3}}
\newcommand{\okIn}[2]{\ensuremath{#1 \textbf{ ok in } #2}}
\newcommand{\compatibleTypes}[2]{\ensuremath{#1 \leftrightarrow #2}}
\newcommand{\sameOwnership}[2]{\ensuremath{#1 \approx #2}}
\newcommand{\phibox}[2]{\ensuremath{\fbox{$#1$}_{#2}}}
\newcommand{\psibox}[2]{\ensuremath{\fbox{$#1$}^{#2}}}
\newcommand{\code}[1]{\texttt{#1}}
\newcommand{\stepsTo}[2]{\ensuremath{#1 \rightarrow #2}}
\newcommand\stepsTo*[2]{\ensuremath{#1 \rightarrow^* #2}}
\newtheorem{theorem}{Theorem}[section]
\newtheorem{lemma}{Lemma}[section]
\newtheorem{definition}{Definition}[section]
\newcommand{\envUpdate}[3]{\ensuremath{[#1 / #2]\, #3}}
\DeclareMathOperator*{\concat}{+\!\!+}
\newcommand{\bnfdef}{\ensuremath{\Coloneqq}}
\newcommand{\bnfalt}{\ensuremath{\mid}\xspace}
\newcommand{\generic}[2]{\ensuremath{#1\langle #2 \rangle}}
\newcommand{\generics}[2]{\generic{#1}{\overline{#2}}}
\newcommand{\bounded}{\implements} 
\newcommand{\bound}[2]{\ensuremath{#1 \vdash \text{bound}\left( #2 \right)}}
\newcommand{\boundPerm}[2]{\ensuremath{#1 \vdash \text{bound}_{*}\left( #2 \right)}}
\newcommand{\typeBounds}{\ensuremath{\Gamma}}
\newcommand{\subsOk}[3]{\ensuremath{\text{subsOk}_{#1}\left(#2, #3\right)}}
\newcommand{\genericsOk}[2]{\ensuremath{\text{genericsOk}_{#1}\left(#2\right)}}
\newcommand{\states}[1]{\ensuremath{\text{states}\left( #1 \right)}}
\newcommand{\transactionNames}[1]{\ensuremath{\text{transactionNames}\left( #1 \right)}}
\newcommand{\transactionName}[1]{\ensuremath{\text{transactionName}\left( #1 \right)}}
\newcommand{\stateNames}[1]{\ensuremath{\text{stateNames}\left( #1 \right)}}
\newcommand{\stateName}[1]{\ensuremath{\text{stateName}\left( #1 \right)}}
\newcommand{\isVar}[1]{\ensuremath{\text{isVar}\left( #1 \right)}}
\newcommand{\nonVar}[1]{\ensuremath{\text{nonVar}\left( #1 \right)}}
\newcommand{\implementOk}[3]{\ensuremath{\text{implementOk}_{#1}\left(#2, #3\right)}}
\newcommand{\params}[1]{\ensuremath{\text{params}\left( #1 \right)}}
\newcommand{\possibleStates}[2]{\ensuremath{\text{possibleStates}_{#1}\left( #2 \right)}}
\newcommand{\maybeOwned}[1]{\ensuremath{\text{maybeOwned}\left( #1 \right)}}
\newcommand{\notOwned}[1]{\ensuremath{\text{notOwned}\left( #1 \right)}}
\newcommand{\isAsset}[2]{\ensuremath{#1 \vdash \text{isAsset}\left( #2 \right)}}
\newcommand{\nonAsset}[2]{\ensuremath{#1 \vdash \text{nonAsset}\left( #2 \right)}}
\newcommand{\nonAssetState}[2]{\ensuremath{#1 \vdash \text{nonAssetState}\left( #2 \right)}}
\newcommand{\genericParam}[4]{#1\ #2.#3\ \bounded\ #4}
\newcommand{\genericParamOpt}[3]{\genericParam{[\textbf{asset}]}{#1}{#2}{#3}}
\newcommand{\substitute}[2]{\ensuremath{\sigma\left(#1\right)\left(#2\right)}}
\newcommand{\substitution}[1]{\ensuremath{\sigma\left(#1\right)}}
\newcommand{\disposable}[2]{\ensuremath{#1 \vdash \text{disposable}\left( #2 \right)}}
\newcommand{\nonDisposable}[2]{\ensuremath{#1 \vdash \text{nonDisposable}\left( #2 \right)}}
\newcommand{\transaction}[3]{\ensuremath{\text{transaction}_{#1}\left( #2, #3 \right)}}
\newcommand{\permVarMap}{\xi}
\newcommand{\PermVar}[1]{\ensuremath{\text{PermVar}\left( #1 \right)}}
\newcommand{\ToPermission}[1]{\ensuremath{\text{ToPermission}\left( #1 \right)}}
\newcommand{\lookup}[2]{\ensuremath{\text{lookup}_{#1}\left( #2 \right)}}
\newcommand{\funcArg}[3]{\ensuremath{\text{funcArg}\left(#1, #2, #3\right)}}
\newcommand{\funcArgResidual}[3]{\ensuremath{\text{funcArgResidual}\left(#1, #2, #3\right)}}
\newcommand{\ra}[1]{\renewcommand{\arraystretch}{#1}}
\lstdefinelanguage{obsidian}{
	keywords=[1]{contract,state,transaction,disown,new,asset,this,in,return,returns,remote,main,private},
	keywordstyle=[1]\color{blue}\bfseries,
	keywords=[2]{int},
	keywordstyle=[2]\color{teal}\bfseries,
	keywords=[3]{->},
	keywordstyle=[3]\color{blue}\bfseries,
	keywords=[4]{Owned,Unowned,Shared},
	keywordstyle=[4]\color{black}\bfseries,
	comment=[l]{//},
    morecomment=[s]{/*}{*/},
	commentstyle=\color{gray}\ttfamily
}
\lstdefinelanguage{Solidity}{
	keywords=[1]{anonymous, assembly, assert, balance, break, call, callcode, case, catch, class, constant, continue, constructor, contract, debugger, default, delegatecall, delete, do, else, emit, event, experimental, export, external, false, finally, for, function, gas, if, implements, import, in, indexed, instanceof, interface, internal, is, length, library, log0, log1, log2, log3, log4, memory, modifier, new, payable, pragma, private, protected, public, pure, push, require, return, returns, revert, selfdestruct, send, solidity, storage, struct, suicide, super, switch, then, this, throw, transfer, true, try, typeof, using, value, view, while, with, addmod, ecrecover, keccak256, mulmod, ripemd160, sha256, sha3}, 
	keywordstyle=[1]\color{blue}\bfseries,
	keywords=[2]{address, bool, byte, bytes, bytes1, bytes2, bytes3, bytes4, bytes5, bytes6, bytes7, bytes8, bytes9, bytes10, bytes11, bytes12, bytes13, bytes14, bytes15, bytes16, bytes17, bytes18, bytes19, bytes20, bytes21, bytes22, bytes23, bytes24, bytes25, bytes26, bytes27, bytes28, bytes29, bytes30, bytes31, bytes32, enum, int, int8, int16, int24, int32, int40, int48, int56, int64, int72, int80, int88, int96, int104, int112, int120, int128, int136, int144, int152, int160, int168, int176, int184, int192, int200, int208, int216, int224, int232, int240, int248, int256, mapping, string, uint, uint8, uint16, uint24, uint32, uint40, uint48, uint56, uint64, uint72, uint80, uint88, uint96, uint104, uint112, uint120, uint128, uint136, uint144, uint152, uint160, uint168, uint176, uint184, uint192, uint200, uint208, uint216, uint224, uint232, uint240, uint248, uint256, var, unit, ether, finney, szabo, wei, days, hours, minutes, seconds, weeks, years},	
	keywordstyle=[2]\color{teal}\bfseries,
	keywords=[3]{block, blockhash, coinbase, difficulty, gaslimit, number, timestamp, msg, data, gas, sender, sig, value, now, tx, gasprice, origin},	
	keywordstyle=[3]\color{violet}\bfseries,
	identifierstyle=\color{black},
	sensitive=false,
	comment=[l]{//},
	morecomment=[s]{/*}{*/},
	commentstyle=\color{gray}\ttfamily,
	stringstyle=\color{red}\ttfamily,
	morestring=[b]',
	morestring=[b]"
}
\newenvironment{absolutelynopagebreak}
  {\par\nobreak\vfil\penalty0\vfilneg
   \vtop\bgroup}
  {\par\xdef\tpd{\the\prevdepth}\egroup
   \prevdepth=\tpd}
\begin{document}

\title[Obsidian]{Obsidian: Typestate and Assets for Safer Blockchain Programming}         


\author{Michael Coblenz}
\orcid{0000-0002-9369-4069}             
\affiliation{
  \position{Doctoral Candidate}
  \department{Computer Science Department}              
  \institution{Carnegie Mellon University}            
  \streetaddress{5000 Forbes Ave.}
  \city{Pittsburgh}
  \state{PA}
  \postcode{15213}
  \country{USA}                    
}
\email{mcoblenz@cs.cmu.edu}          

\author{Reed Oei}
\affiliation{
  \department{Computer Science Department}              
  \institution{University of Illinois at Urbana-Champaign}            
  \streetaddress{201 North Goodwin Avenue MC 258}
  \city{Urbana}
  \state{IL}
  \postcode{61801}
  \country{USA}                    
}
\email{reedoei2@illinois.edu}          
\authornotemark[1]

\author{Tyler Etzel}
\affiliation{
  \department{Computer Science Department}              
  \institution{Cornell University}            
  \streetaddress{616 Thurston Ave.}
  \city{Ithaca}
  \state{NY}
  \postcode{14853}
  \country{USA}                    
}
\email{tyleretzel1@gmail.com}          
\authornotemark[1]

\author{Paulette Koronkevich}
\affiliation{
  \department{Computer Science Department}              
  \institution{University of British Columbia}            
  \streetaddress{2329 West Mall}
  \city{Vancouver}
  \state{BC}
  \postcode{V6T 1Z4}
  \country{Canada}                    
}
\email{paulette@koronkevi.ch}          
\authornotemark[1]

\author{Miles Baker}
\affiliation{
  \department{Computer Science Department}              
  \institution{Reed College}            
  \streetaddress{3203 SE Woodstock Blvd}
  \city{Portland}
  \state{OR}
  \postcode{97202}
  \country{USA}                    
}
\email{milesabaker@gmail.com}          
\authornotemark[1]

\author{Yannick Bloem}
\affiliation{
  \institution{Apple, Inc}            
  \streetaddress{One Apple Park Way}
  \city{Cupertino}
  \state{CA}
  \postcode{95014}
  \country{USA}   
}
\email{yannickbloem@gmail.com}          
\authornote{Work performed while at Carnegie Mellon University.}

\author{Brad A. Myers}
\orcid{0000-0002-4769-0219}             
\affiliation{
  \position{Professor}
  \department{Human-Computer Interaction Institute}              
  \institution{Carnegie Mellon University}            
  \streetaddress{5000 Forbes Ave.}
  \city{Pittsburgh}
  \state{PA}
  \postcode{15213}
  \country{USA}                    
}
\email{bam@cs.cmu.edu}          

\author{Joshua Sunshine}
\orcid{0000-0002-9672-5297}             
\affiliation{
  \position{Systems Scientist}
  \department{Institute for Software Research}              
  \institution{Carnegie Mellon University}            
  \streetaddress{5000 Forbes Ave.}
  \city{Pittsburgh}
  \state{PA}
  \postcode{15213}
  \country{USA}                    
}
\email{joshua.sunshine@cs.cmu.edu}          

\author{Jonathan Aldrich}
\orcid{0000-0003-0631-5591}             
\affiliation{
  \position{Professor}
  \department{Institute for Software Research}              
  \institution{Carnegie Mellon University}            
  \streetaddress{5000 Forbes Ave.}
  \city{Pittsburgh}
  \state{PA}
  \postcode{15213}
  \country{USA}                    
}
\email{jonathan.aldrich@cs.cmu.edu}          

\begin{abstract}
Blockchain platforms are coming into broad use for processing critical transactions among participants who have not established mutual trust. Many blockchains are programmable, supporting \textit{smart contracts}, which maintain persistent state and support transactions that transform the state. Unfortunately, bugs in many smart contracts have been exploited by hackers. Obsidian is a novel programming language with a type system that enables static detection of bugs that are common in smart contracts today. Obsidian is based on a core calculus, Silica, for which we proved type soundness. Obsidian uses \textit{typestate} to detect improper state manipulation and uses \textit{linear types} to detect abuse of assets. We describe two case studies that evaluate Obsidian's applicability to the domains of parametric insurance and supply chain management, finding that Obsidian's type system facilitates reasoning about high-level states and ownership of resources. We compared our Obsidian implementation to a Solidity implementation, observing that the Solidity implementation requires much boilerplate checking and tracking of state, whereas Obsidian does this work statically.
\end{abstract}

\begin{CCSXML}
<ccs2012>
<concept>
<concept_id>10011007.10011006.10011008.10011024</concept_id>
<concept_desc>Software and its engineering~Language features</concept_desc>
<concept_significance>500</concept_significance>
</concept>
<concept>
<concept_id>10011007.10011006.10011050.10011017</concept_id>
<concept_desc>Software and its engineering~Domain specific languages</concept_desc>
<concept_significance>500</concept_significance>
</concept>
<concept>
<concept_id>10002978.10003022</concept_id>
<concept_desc>Security and privacy~Software and application security</concept_desc>
<concept_significance>300</concept_significance>
</concept>
</ccs2012>
\end{CCSXML}

\ccsdesc[500]{Software and its engineering~Language features}
\ccsdesc[500]{Software and its engineering~Domain specific languages}
\ccsdesc[300]{Security and privacy~Software and application security}

\keywords{typestate, linearity, type systems, blockchain, smart contracts}  

\maketitle

\thispagestyle{empty}

\renewcommand{\shortauthors}{Coblenz et al.}

\section{Introduction}
Blockchains have been proposed to address security and robustness objectives in contexts that lack shared trust. By recording all transactions in a tamper-resistant \textit{ledger}, blockchains attempt to facilitate secure, trusted computation in a network of untrusted peers. Blockchain programs, sometimes called \textit{smart contracts} \citep{Szabo}, can be deployed; once deployed, they can maintain state in the ledger. For example, a program might represent a bank account and store a quantity of virtual currency. Clients could conduct transactions with bank accounts by invoking the appropriate interfaces. Each transaction is appended permanently to the ledger. In this paper, we refer to a deployment of a smart contract as an \textit{object} or \textit{contract instance}.

Proponents have suggested that blockchains be used for a plethora of applications, such as finance, health care \cite{HealthCare}, supply chain management \cite{SupplyChain}, and others \cite{Elsden18:Making}. For example \citep{Provenance}, an electronics manufacturer might accept shipments of components from a variety of manufacturers; if any of those components have been replaced with fraudulent components somewhere in the chain of custody, then the manufactured systems might include defects, including security vulnerabilities. A blockchain could provide a tamper-resistant mechanism for recording signed transactions showing every entity that was ever responsible for each component.

Unfortunately, some prominent blockchain applications have included security vulnerabilities, for example through which over \$80 million worth of virtual currency was stolen \cite{DAO, CNBC}. In addition to the potentially severe consequences of bugs, platforms require that contracts be immutable, so bugs cannot be fixed easily. If organizations are to adopt blockchain environments for business-critical applications, there needs to be a more reliable way of writing smart contracts.

Many techniques promote program correctness, but our focus is on programming language design so that we can prevent bugs as early as possible --- potentially by aiding the programmer's reasoning processes before code is even written. Because of our interest in developing a language that would be effective for programmers, we designed a surface language, \textit{Obsidian}, in addition to a core calculus, \textit{Silica}. Obsidian stands for \textit{Overhauling Blockchains with States to Improve Development of Interactive Application Notation}. Our design is based on formative studies with programmers, and although those studies are not the focus of this paper, our goal of \textit{usability} drove us to focus on features that provide powerful safety guarantees while maintaining as much simplicity as possible. In this paper, we focus on the design of the language itself and make only brief mention of our observations in our user studies. For more detail regarding the user studies, readers may refer to \cite{barnaby}.

Obsidian is a programming language for smart contracts that provides strong compile-time features to prevent bugs. Obsidian is based on a novel type system that uses \textit{typestate} to statically ensure that objects are manipulated correctly according to their current states, and uses \textit{linear types} \citep{wadler1990linear} to enable safe manipulation of assets, which must not be accidentally lost. We prove key soundness theorems so that Silica can serve as a trustworthy foundation for Obsidian and potentially other typestate-oriented languages.

We make the following contributions:
\begin{enumerate}
\item We show how typestate and linear types can be combined in a user-facing programming language, using a rich but simple permission system that captures the required restrictions on aliases.
\item We show an integrated architecture for supporting both smart contracts and client programs. By enabling both on-blockchain and off-blockchain programs to be created with the same language, we ensure safety properties of the language are available for data structures that must be transferred off-blockchain as well as for those stored in the blockchain.
\item We describe Silica, the core calculus that underlies Obsidian. We prove type soundness and asset retention for Silica. Asset retention is the property that owning references to assets (objects that the programmer has designated have value) cannot be lost accidentally. Silica is the first typestate calculus (of which we are aware) that supports assets.
\item As case studies, we show how Obsidian can be used to implement a parametric insurance application and a supply chain. By comparison to Solidity, we show how leveraging typestate can move checks from run time to compile time. Our case studies were implemented by programmers who were not the designers of the language, showing that the language is usable by people other than only the designers.
\end{enumerate}

After summarizing related work, we introduce the Obsidian language with an example (\S \ref{sec:lang-intro}). Section \ref{sec:detailed-design} focuses on the design of particular aspects of the language and describes how qualitative studies influenced our design.  We describe how the language design fits into the Fabric blockchain infrastructure in \S \ref{sec:architecture}. Section \ref{silica} describes Silica, the core calculus underlying Obsidian, and its proof of soundness (though the proof itself is in the Appendix). We discuss two case studies in \S \ref{sec:evaluation}, showing how we have collaborated with external stakeholders to demonstrate the expressiveness and utility of Obsidian. Future work is discussed in \S \ref{sec:future-work}. We conclude in \S \ref{sec:conclusions}.

\section{Related Work}
\label{related-work}
Researchers have previously investigated common causes of bugs in smart contracts \cite{Luu2016, Delmolino2015:Step, atzei2016survey}, created static analyses for existing languages \cite{Kalra18:Zeus}, and worked on applying formal verification \cite{Bhargavan2016}. Our work focuses on preventing bugs in the first place by designing a language in which many commonplace bugs can be prevented as a result of properties of the type system. This enables programmers to reason more effectively about relevant safety properties and enables the compiler to detect many relevant bugs.

There is a large collection of proposals for new smart contract languages, cataloged by \citet{Harz2018:Towards}. One of the languages most closely related to Obsidian is Flint \citep{Schrans2019:Flint}. Flint supports a notion of typestate, but lacks a permission system that, in Obsidian, enables flexible, static reasoning about aliases. Flint supports a trait called \code{Asset}, which enhances safety for resources to protect them from being duplicated or destroyed accidentally. However, Flint models assets as traits rather than as linear types due to the aliasing issues that this would introduce \citep{Schrans2008:Assets}. This leads to significant limitations on assets in Flint. For example, in Flint, assets cannot be returned from functions. Obsidian addresses these issues with a permission system, and thus permits any non-primitive type to be an asset and treated as a first-class value.

There are also proposals for blockchain languages that are more domain-specific. For example, Hull et al. propose formalizing a notion of business artifacts for blockchains \citep{hull2016towards}. DAML \citep{DAML} is more schema-oriented, requiring users to write schemata for their data models. In DAML, which was inspired by financial agreements, contracts specify who can conduct and observe various operations and data. 

The problem of aliasing in object-oriented languages has led to significant research on ways to constrain and reason about aliases \citep{Clarke2013:Aliasing}. Unfortunately, these approaches can be very complex. For example, fractional permissions \citep{Boyland:2003:CIF:1760267.1760273} provide an algebra of permissions to memory cells. These permissions can be split among multiple references so that if the references are combined, one can recover the original (whole) permission. However, aside from the simple approach of reference counting, general fractional permissions have not been adopted in practical languages, perhaps because using them requires understanding a complex algebraic system. 

A significant line of research has focused on \textit{ownership types} \citep{Clarke1998:Ownership}, which refers to a different notion of ownership than we use in Obsidian. Ownership types aim to enforce \textit{encapsulation} by ensuring that the implementation of an object cannot leak outside its owner. In Obsidian, we are less concerned with encapsulation and more focused on sound typestate semantics. This allows us to avoid the strict nature of these encapsulation-based approaches while accepting their premise: typically, good architecture results in an aliasing structure in which one "owner`` of a particular object controls the object's lifetime and, likely, many of the changes to the object.

\citet{xu2017taxonomy} gives a taxonomy of blockchain systems, but the focus is on blockchain platforms, i.e. systems that maintain blockchains and process transactions. The architectural implications of Obsidian are at the application layer, not at the system layer.

Fickle \citep{Drossopoulou2001:Fickle} was a one approach to allowing objects to change class at runtime, but Fickle did not allow references to include any type specifications pertaining to the states of the referenced objects. DeLine investigated using typestate in the context of object-oriented systems \cite{DeLine2004}, finding that subclassing causes complicated issues of partial state changes; we avoid that problem by not supporting subclassing. Plaid \cite{sunshine2011first} and Plural \cite{Bierhoff:2008:PCP:1370175.1370213} are the most closely-related systems in terms of their type systems' features. Both languages were complex, and the authors noted the complexity in certain cases, e.g., fractional permissions make the language harder to use but were rarely used, and even then primarily for concurrency \cite{bierhoff2011checking}. Sunshine et al. showed typestate to be helpful in documentation when users need to understand object protocols \cite{Sunshine2014}; we used that conclusion as motivation for our language design.

\citet{Gordon2012} describes a type system for enabling safe concurrency. In addition to not supporting reasoning about object protocols, these systems sometimes introduce significant restrictions in order to handle concurrency. This is warranted in those contexts, but because blockchain systems are now always sequential, this complexity is not needed. For example, \textit{isolated} references in \citet{Gordon2012} do not allow readonly (\textit{readable}) aliases to mutable objects reachable from the isolated references; owned references in Obsidian have no such restriction because Obsidian is not designed to support concurrency.

Linear types, which facilitate reasoning about \textit{resources}, have been studied in depth since Wadler's 1990 paper \cite{wadler1990linear}, but have not been adopted in many programming languages. Rust \cite{rust} is one exception, using a form of linearity to restrict aliases to mutable objects. This limited use of linearity did not require the language to support as rich a permission system as Obsidian does; for example, Rust types cannot directly express states of referenced objects. Alms \cite{Tov:2011:PAT:1926385.1926436} is an ML-like language that supports linear types; unlike Obsidian, it is not object-oriented. Session types \cite{Caires10:Session} are another way of approaching linear types in programming languages, as in Concurrent C0 \cite{Willsey17:Design}. However, session types are more directly suited for communicating, concurrent processes, which is very different from a sequential, stateful setting as is the case with blockchains.

Silica, the core of Obsidian, is based on Featherweight Typestate (FT) \cite{Garcia:2014:FTP:2684821.2629609}. However, since Silica is designed as the core of a user-facing programming language, there are significant differences because we wanted to formalize particular operations that we found convenient in the surface language. For example, Silica replaces FT's atomic field swap with field assignment. This allows fields that temporarily have modes that differ from their declarations, facilitating a style of programming that our participants preferred in our formative user studies. This approach is related to the approach taken in \citet{naden2012type}, where fields can be unpacked, but in Silica, unpacking is only possible via the \this reference in order to maintain encapsulation.

Support for dynamic state tests is an important feature in order to facilitate practical programming. Support for these dynamic state tests has been found to be critical for expressiveness in other object-oriented contexts as well \cite{Bierhoff2009:Practical}. Unlike in FT, Silica supports expressions that execute dynamic state tests so that programs can branch according to the result.

Other differences with FT are shown in Table \ref{FT-vs-Silica}.

\begin{table}[h]
\ra{1.3}
\begin{tabular}{p{5cm} p{8cm}}
\toprule
\textbf{Featherweight Typestate} & \textbf{Silica}\\
\midrule
\texttt{pure} references cannot be used to mutate fields of referenced objects & \Unowned{} references are only as restricted as necessary for soundness: they cannot be used to change nominal state but can be used to write fields\\
No dynamic state tests & Dynamic state tests\\
Types integrate state guarantees, but do not separate state from class & Separate \contract{} and \stateExpr{} constructs \\
Inheritance & No inheritance \\
Typestate is integrated with class & Typestate implies ownership \\
No linear assets & Linear assets with explicit \disown \\ 
\bottomrule
\end{tabular}
\caption{Key differences between Featherweight Typestate and Silica.}
\label{FT-vs-Silica}
\end{table}

Although the user-centered design aspects of Obsidian are not the focus of this paper, others have had success applying these methods to tools for developers. For example, \cite{Myers2016:Programmers} argued that human-centered methods could be used in a variety of different tools for software engineers. Pane, Myers, and Miller used HCI techniques to design a programming language for children \citep{Pane2002:Using}. Stefik and Siebert used an empirical, quantitative approach regarding the design of syntax \citep{Stefik2013:Empirical}.

\section{Introduction to the Obsidian language}
\label{sec:lang-intro}

Obsidian is based on several guidelines for the design of smart contract languages identified in \citet{Coblenz2019:Smarter}. Briefly, those guidelines are:
\begin{itemize}
\item Strong static safety: bugs are particularly serious when they occur in smart contracts. In general, it can be impossible to fix bugs in deployed smart contracts because of the immutable nature of blockchains. Obsidian emphasizes a novel, strong, static type system in order to detect important classes of bugs at compile time. Among common classes of bugs is \textit{loss of assets}, such as virtual currency.
\item User-centered design: a proposed language should be as usable as possible. We integrated feedback from users in order to maximize users' effectiveness with Obsidian.
\item Blockchain-agnosticism: blockchain platforms are still in their infancies and new ones enter and leave the marketplace regularly. Being a significant investment, a language design should target properties that are common to many blockchain platforms.
\end{itemize}

We were particularly interested in creating a language that we would eventually be able to evaluate with users, while at the same time significantly improving safety relative to existing language designs. In short, we aimed to create a language that we could show was more effective for programmers. In order to make this practical, we made some relatively standard surface-level design choices that would enable our users to learn the core language concepts more easily, while using a sophisticated type system to provide strong guarantees. Where possible, we chose approaches that would enable static enforcement of safety, but in a few cases we moved checks to runtime in order to enable a simple design for users or a more precise analysis (for example, in dynamic state checks, \S \ref{sub:dynamic-state-checks}).

Typestate-oriented programming \cite{Aldrich:2009:TP:1639950.1640073} has been proposed to allow specification of protocols in object-oriented settings. For example, a \code{File} can only be read when it is in the \code{Open} state, not when it is in the \code{Closed} state. By lifting these specifications into types, typestate-oriented programming languages allow static checking of adherence to protocols and improve the ability of programmers to reason effectively about how to use the interfaces correctly \cite{Sunshine2014}. Featherweight Typestate \citep{Garcia:2014:FTP:2684821.2629609} is a core calculus for a class of typestate languages. However, we found in user studies that our early prototypes of Obsidian, which were based on a simplified version of this calculus, led to significant user confusion. In order to address these problems, we elicited requirements for blockchain languages that motivated the design of a new formalism. We designed \textit{Silica}, a new typestate calculus that, despite its simplicity, still allows users to express nearly all the properties that earlier typestate calculi enabled. Silica also supports key features that we observed users expected to have, such as dynamic state tests and field assignment.

We selected an object-oriented approach because smart contracts inevitably implement state that is mutated over time, and object-oriented programming is well-known to be a good match to this kind of situation. This approach is also a good starting point for our users, who likely have some object-oriented programming experience. However, in order to improve safety relative to traditional designs, Obsidian omits inheritance, which is error-prone due to the \textit{fragile base class problem} \cite{Mikhajlov:1998:SFB:646155.679700}. We leveraged some features of the C-family syntax, such as blocks delimited with curly braces, dots for separating references from members, etc.,  to improve learnability for some of our target users. Following blockchain convention, Obsidian uses the keyword \contract rather than \code{class}. Because of the transactional semantics of invocations on blockchain platforms, Obsidian uses the term \code{transaction} rather than \code{method}. Transactions can require that their arguments, including the receiver, be in specific states in order for the transaction to be invoked.

Since smart contracts frequently manipulate assets, such as cryptocurrencies, we designed Obsidian to support linear types \citep{wadler1990linear}, which allow the compiler to ensure that assets are neither duplicated nor lost accidentally. These linear types integrate consistently with typestate, since typestate-bearing references are affine (i.e. cannot be duplicated but can be dropped as needed). A particular innovation in this approach is the fusion of linear references to assets with affine references to non-assets. Whether a reference is linear or affine depends on the declaration of the type to which the reference refers.

The example in Fig. \ref{tiny-vending-machine} shows some of the key features of Obsidian. \code{TinyVendingMachine} is a \code{main} contract, so it can be deployed independently to a blockchain. A \code{TinyVendingMachine} has a very small inventory: just one candy bar. It is either \code{Full}, with one candy bar in inventory, or \code{Empty}. Clients may invoke \code{buy} on a vending machine that is in \code{Full} state, passing a \code{Coin} as payment. When \code{buy} is invoked, the caller must initially \textit{own} the \code{Coin}, but after \code{buy} returns, the caller no longer owns it. \code{buy} returns a \code{Candy} to the caller, which the caller then owns. After \code{buy} returns, the vending machine is in state \code{Empty}. 

\begin{figure}
\begin{lstlisting}[numbers=left, framexleftmargin=1em, xleftmargin=2em, basicstyle=\footnotesize\ttfamily]
// This vending machine sells candy in exchange for candy tokens.
main asset contract TinyVendingMachine {
    // Fields defined at the top level are in scope in all states.
    Coins @ Owned coinBin; 

    state Full {
        // inventory is only in scope when the object is in Full state.
        Candy @ Owned inventory; 
    }
    state Empty; // No candy if the machine is empty.

    TinyVendingMachine() {
        coinBin = new Coins(); // Start with an empty coin bin.
        ->Empty;
    }

    transaction restock(TinyVendingMachine @ Empty >> Full this,
                        Candy @ Owned >> Unowned c) {
        ->Full(inventory = c);
    }

    transaction buy(TinyVendingMachine @ Full >> Empty this,
                    Coin @ Owned >> Unowned c) returns Candy @ Owned {
        coinBin.deposit(c);
        Candy result = inventory;
        ->Empty;
        return result;
    }

    transaction withdrawCoins() returns Coins @ Owned {
        Coins result = coinBin;
        coinBin = new Coins();
        return result;
    }
}
\end{lstlisting}
\caption{A tiny vending machine implementation, showing key features of Obsidian.}
\label{tiny-vending-machine}
\end{figure}

Smart contracts commonly manipulate \textit{assets}, such as virtual currencies. Some common smart contract bugs pertain to accidental loss of assets \cite{Delmolino2015:Step}. If a contract in Obsidian is declared with the \code{asset} keyword, then the type system requires that every instance of that contract have exactly one owner. This enables the type checker to report an error if an owned reference goes out of scope. For example, assuming that \code{Coin} was declared as an \code{asset}, if the author of the \code{buy} transaction had accidentally omitted the \code{deposit} call, the type checker would have reported the loss of the asset in the \code{buy} transaction. Any contract that has an \code{Owned} reference to another asset must itself be an asset.

To enforce this, references to objects have types according to both the \texttt{contract} of the referenced object and a \textit{mode}, which denotes information about ownership. Modes are separated from contract names 
 with an \code{@} symbol. Exactly one reference to each asset contract instance must be \Owned; this reference must not go out of scope. For example, an owned reference to a \code{Coin} object can be written \code{Coin@Owned}. Ownership can be transferred between references via assignment or transaction invocation. The compiler outputs an error if a reference to an \code{asset} goes out of scope while it is \Owned. Ownership can be explicitly discarded with the \code{disown} operator.

\Unowned{} is the complement to \Owned{}: an object has at most one \Owned{} reference but an arbitrary number of \Unowned{} references. \Unowned{} references are not linear, as they do not convey ownership. They are nonetheless useful. For example, a \code{Wallet} object might have owning references to \code{Money} objects, but a \code{Budget} object might have \Unowned{} aliases to those objects so that the value of the \code{Money} can be tracked (even though only the \code{Wallet} is permitted to transfer the objects to another owner). Alternatively, if there is no owner of a non-asset object, it may have \Shared{} and \Unowned{} aliases. Examples of these scenarios are shown in Fig. \ref{aliasing} to provide some intuition.

In Obsidian, the \textit{mode} portion of a type can change due to operations on a reference, so transaction signatures can specify modes both before and after execution. As in Java, a first argument called \code{this} is optional; when present, it is used to specify initial and final modes on the receiver. The $\gg$ symbol separates the initial mode from the final one. In the example above, the signature of \code{buy} (lines 22-23) indicates that \code{buy} must be invoked on a \code{TinyVendingMachine} that is statically known to be in state \code{Full}, passing a \code{Coin} object that the caller owns. When \code{buy} returns, the receiver will be in state \code{Empty} and the caller will no longer have ownership of the \code{Coin} argument.

Obsidian contracts can have constructors (line 10 above), which initialize fields as needed. If a contract has any states declared, then every instance of the contract must be in one of those states from the time each constructor exits.

\begin{figure}
\includegraphics[width=13cm]{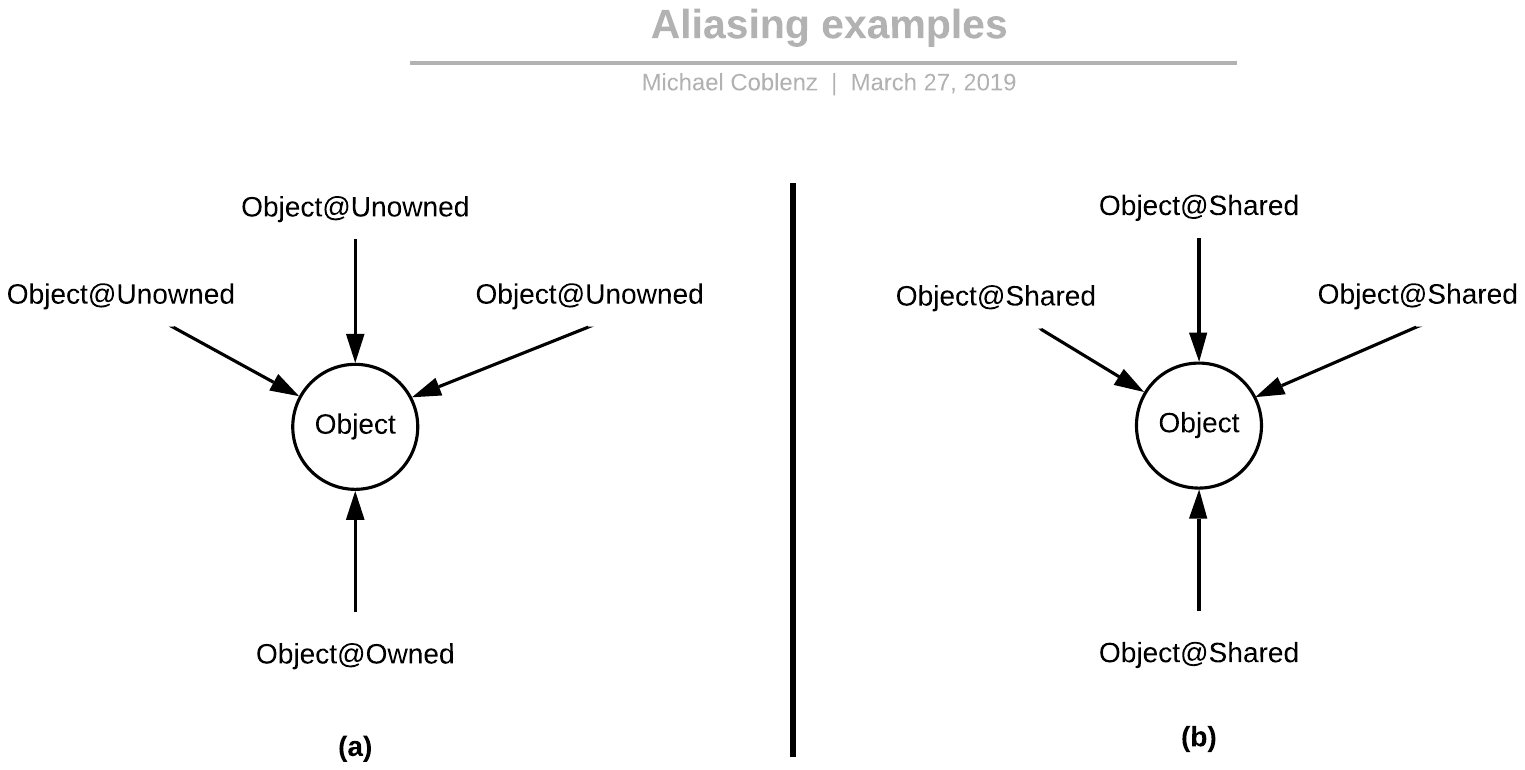}
\caption{Some common aliasing scenarios. (a) shows an object with one owner; (b) shows a shared object.}
\label{aliasing}
\end{figure}

Objects in smart contracts frequently maintain high-level state information \citep{Solidity-Patterns}, with the set of permitted transactions depending on the current state. For example, a \code{TinyVendingMachine} might be \code{Empty} or \code{Full}, and the \code{buy} transaction can only be invoked on a \code{Full} machine. Prior work showed that including state information in documentation helped users understand how to use object protocols \citep{Sunshine2014}, so we include first-class support for states in Obsidian. \textit{Typestate} \citep{Aldrich:2009:TP:1639950.1640073} is the idea of including state information in types, and we take that approach in Obsidian so that the compiler can ensure that objects are manipulated correctly according to their states. State information can be captured in a mode. For example, \code{TinyVendingMachine@Full} is the type of a reference to an object of contract \code{TinyVendingMachine} with mode \code{Full}. In this case, the mode denotes that the referenced object is statically known to be in state \code{Full}. 

State is mutable; objects can transition from their current state to another state via a transition operation. For example, \code{->Full(inventory = c)} might change the state of a \code{TinyVendingMachine} to the \code{Full} state, initializing the \code{inventory} field of the \code{Full} state to \code{c}. This leads to a potential difficulty: what if a reference to a \code{TinyVendingMachine} with mode \code{Empty} exists while the state transitions to \code{Full}? To prevent this problem, typestate is only available with references that also have ownership. Because of this, there is no need to separately denote ownership in the syntax; we simply observe that every typestate-bearing reference is also owned. Then, Obsidian restricts the operations that can be performed through a reference according to the reference's mode. In particular, if an owned reference might exist, then non-owning references cannot be used to mutate typestate. If no owned references exist, then all references permit state mutation. A summary of modes is shown in Table \ref{modes}.

\begin{table}[ht]
\renewcommand{\arraystretch}{1.2}
\begin{tabular}{@{}l p{8.5cm} p{2 cm}@{}}\toprule
\textbf{Mode} & \textbf{Meaning} & \textbf{Typestate mutation}\\
\midrule
\Owned & This is the only reference to the object that is owned. There may be many \Unowned{} aliases but no \Shared aliases. & Permitted\\
\Unowned & There may or may not be any owned aliases to this object, but there may be many other \Unowned or \Shared aliases. & Forbidden\\
\Shared & This is one of potentially many \Shared references to the object. There are no owned aliases. & Permitted \\
\textit{state name(s)} & This is an owned reference and also conveys the fact that the referenced object is in one of the specified states. There may be \Unowned aliases but no \Shared or \Owned aliases. & Permitted\\
\bottomrule
\end{tabular}
\caption{A summary of modes in Obsidian}
\label{modes}
\end{table}

\section{Obsidian language design process and details}
\label{sec:detailed-design}
Obsidian is the first object-oriented language (of which we are aware) to integrate linear assets and typestate. This combination --- and, in fact, even just including typestate --- could result in a design that was hard to use, since typical typestate languages require users to understand a complex permissions model. In designing the language, we focused on \textit{simplicity} in service of usability \citep{Coblenz18:Interdisciplinary}. We maintained static safety where possible, but moved certain checks to runtime where needed to maintain a high level of expressiveness. We also aimed to simplify the job of the programmer relative to existing blockchain programming languages by eliminating onerous, error-prone programming tasks, such as writing serialization and deserialization code. In this section, we describe how we designed language features to improve user experience, in some cases driven by results of formative user studies \citep{barnaby}. Some other system features, such as serialization, are discussed in \S \ref{architecture}. Rather than relying only on our own experience and intuition, we invited participants to help us assess the tradeoffs of different design options. This enabled us to take a more data-driven approach in our language design, as suggested by \citet{LanguageWars} and  \citet{Coblenz18:Interdisciplinary}. We take the perspective that we should integrate \textit{qualitative} methods in addition to quantitative methods in order to drive language design in a direction that is more likely to be beneficial for users.

\subsection{Type declarations, annotations, and static assertions}
Obsidian requires type declarations of local variables, fields, and transaction parameters. In addition to providing familiarity to programmers who have experience with other object-oriented languages, there is a hypothesis that these declarations may aid in usability by providing documentation, particularly at interfaces \cite{Coblenz2014:Considering}. Traditional declarations are also typical in prior typestate-supporting languages, such as Plaid \cite{sunshine2011first}. Unfortunately, typestate is incompatible with the traditional semantics of type declarations: programmers normally expect that the type of a variable always matches its declared type, but mutation can result in the typestate no longer matching the initial type of an identifier. This violates the \textit{consistency} usability heuristic \citep{nielsen1990heuristic} and is a potential source of reduced code readability, since determining the type of an identifier can require reading all the code from the declaration to the program point of interest. 

To alleviate this problem, we introduced \textit{static assertions}. These have the syntax \code{[e @ mode]}. For example, \code{[account @ Open]} statically asserts that the reference \code{account} is owned and refers to an object that the compiler can prove is in \code{Open} state. Furthermore, to avoid confusion about the meanings of local variable declarations, Obsidian forbids mode specifications on local variable declarations.

Static assertions have no implications on the dynamic semantics (and therefore have no runtime cost); instead, they serve as checked documentation. The type checker verifies that the given mode is valid for the expression in the place where the assertion is written. A reader of a typechecked program can be assured, then, that the specified types are correct, and the author can insert the assertions as needed to improve program understandability.

\subsection{State transitions}
Each state definition can include a list of fields, which are in scope only when the object is in the corresponding state (see line 8 of Fig. \ref{tiny-vending-machine}). What, then, should be the syntax for initializing those fields when transitioning to a different state? Some design objectives included:
\begin{itemize}
\item When an object is in a particular state, the fields for that state should be initialized.
\item When an object is \textit{not} in a particular state, the fields for that state should be out of scope.
\item According to the \textit{user control and freedom} heuristic \citep{nielsen1990heuristic} and results by Stylos et al. \citep{Stylos2007}, programmers should be able to initialize the fields in any order, including by assignment. Under this criterion, it does not suffice to only permit constructor-style simultaneous initialization.
\end{itemize}

In order to allow maximum user flexibility without compromising the integrity of the type system, we implemented a flexible approach. When a state transition occurs, all fields of the target state must be initialized. However, they can be initialized either \textit{in} the transition (e.g., \code{->S(x = a)} initializes the field \code{x} to \code{a}) or \textit{prior to} the transition (e.g., \code{S::x = a; ->S}). In addition, fields that are in scope in the current state but will not be in scope in the target state must \textit{not} be owned references to assets. In those cases, ownership must be transferred to another reference or discarded before the transition.

\subsection{Transaction scope}
Transactions in Obsidian are only available when the object is in a particular state. Correspondingly, other typestate-oriented languages support defining methods inside states. For example, Plaid \citep{sunshine2011first} allows users to define the \code{read} method inside the \code{OpenFile} state to make clear that \code{read} can only be invoked when a \code{File} is in the \code{OpenFile} state. However, this is problematic when methods can be invoked when the object is in several states.

\citet{barnaby} considered this question for Obsidian and observed that study participants, who were given a typestate-oriented language that included methods in states, asked many questions about what could happen during and after state transitions. They were unsure what \code{this} meant in that context and what variables were in scope at any given time. One participant thought it should be disallowed to call transactions available in state \code{S1} while writing a transaction that was lexically in state \code{Start}. For this reason, we designed Obsidian so that transactions are defined lexically \textit{outside} states. Transaction signatures indicate (via type annotations on a first argument called \code{this}) from which states each transaction can be invoked. This approach is consistent with other languages, such as Java, which also allows type annotations on a first argument \code{this}.

\subsection{Field type consistency}
In traditional object-oriented languages, fields always reference either \code{null} or objects whose types are subtypes of the fields' declared types. This presents a difficulty for Obsidian, since the mode is part of the type, and the mode can change with operations. For example, a \code{Wallet} might have a reference of type \code{Money@Owned}. How should a programmer implement \code{swap}? One way is shown in Fig. \ref{transition-obsidian}.

\begin{figure}[h]
\begin{lstlisting}[numbers=left, framexleftmargin=1em, xleftmargin=1em]
contract Wallet {
	Money@Owned money;
	
	transaction swap (Money @ Owned m) returns Money @ Owned {
		Money result = money;
		money = m;
		return result;
	}
}
\end{lstlisting}
\caption{Obsidian's approach for handling transitions.}
\label{transition-obsidian}
\end{figure}

The problem is that line 5 changes the type of the \code{money} field from \code{Owned} to \code{Unowned} by transferring ownership to \code{result}. Should this be a type error, since it is inconsistent with the declaration of \code{money}? If it is a type error, how is the programmer supposed to implement \code{swap}? One possibility is to add another state, as shown in Fig. \ref{transition-alternative}.

\begin{figure}[h]
\begin{lstlisting}[numbers=left, framexleftmargin=1em, xleftmargin=1em]
contract Wallet {
	state Empty;
	state Full {
		Money @ Owned money;
	}
	
	transaction swap (Wallet@Full this, Money @ Owned m) 
	                 returns Money @ Owned 
	{
		// Suppose the transition returns the contents of the old field.
		Money result = ->Empty; 
		->Full(money = m);
		return result;
	}
}
\end{lstlisting}
\caption{An alternative approach for handling transitions.}
\label{transition-alternative}
\end{figure}

Although this approach might seem like a reasonable consequence of the desire to keep field values consistent with their types, it imposes a significant burden. First, the programmer is required to introduce additional states, which leaks implementation details into the interface (unless we mitigate this problem by making the language more complex, e.g., with \code{private} states or via abstraction over states). Second, this requires that transitions return the newly out-of-scope fields, but it is not clear how: should the result be of record type? Should it be a tuple? What if the programmer neglects to do something with the result? Plaid \citep{sunshine2011first} addressed the problem by not including type names in fields, but that approach may hamper code understandability \citep{Coblenz2014:Considering}. 

In Obsidian, we permit fields to \textit{temporarily} reference objects that are not consistent with the fields' declarations, but we require that at the end of transactions (and constructors), the fields refer to appropriately-typed objects. This approach is consistent with the approach for local variables, with the additional postcondition of type consistency. Both local variables and fields of nonprimitive type, and transaction parameters must always refer to instances of appropriate contracts; the only discrepancy permitted is of mode. Obsidian forbids re-assigning formal parameters to refer to other objects to ensure soundness of this analysis.

Re-entrancy imposes a significant problem here: re-entrant calls from the middle of a transaction's body, where the fields may not be consistent with their types, can be dangerous, since the called transactions are supposed to be allowed to assume that the fields reference objects consistent with the fields' types. To address this, Obsidian distinguishes between \textit{public} and \textit{private} transactions. Obsidian forbids re-entrant calls to \textit{public} transactions at the object level of granularity (i.e., a transaction cannot invoke another transaction on an object for which there is already an invocation on the stack). The Obsidian runtime detects illegal re-entrant calls and aborts transactions that attempt them. However, to facilitate helper transactions, Obsidian also supports \code{private} transactions, which declare the expected types of the fields before and after the invocation. For example:
\begin{lstlisting}[numbers=none, framexleftmargin=0em, xleftmargin=0em]
contract AContract {
    state S1;
    state S2;
	
    AContract@S1 c;
    private (AContract@S2 >> S1 c) transaction t1() {#\ldots#}
}
\end{lstlisting}

Transaction \code{t1} may only be invoked by transactions of \code{AContract}, only on \code{this}, and only when \code{this.c} temporarily has type \code{AContract@S2}. When \code{t1} is invoked, the typechecker checks to make sure field \code{c} has type \code{C@S2}, and assumes that after \code{t1} returns, \code{c} will have type \code{AContract@S1}. Of course, the body of \code{t1} is checked assuming that \code{c} has type \code{C@S2} to make sure that afterward, \code{c} has type \code{C@S1}.

Avoiding unsafe re-entrancy has been shown to be important for real-world smart contract security, as millions of dollars were stolen  in the DAO hack via a re-entrant call exploit \cite{DAO-details}.

\begin{figure}[t]
\begin{lstlisting}[numbers=left, framexleftmargin=1em, xleftmargin=1em, basicstyle=\footnotesize\ttfamily]
main asset contract GiftCertificate {
    Date @ Unowned expirationDate;

    state Active {
        Money @ Owned balance;
    }

    state Expired;
    state Redeemed;

    GiftCertificate(Money @ Owned >> Unowned b, Date @ Unowned d) 
    {
        expirationDate = d;
        ->Active(balance = b);
    }

    transaction checkExpiration(GiftCertificate @ Active >> (Active | Expired) this) 
    {
        if (getCurrentDate().greaterThan(expirationDate)) {
            disown balance;
            ->Expired;
        }
    }
    transaction redeem(GiftCertificate @ Active >> (Expired | Redeemed) this)
                returns Money@Owned 
    {
        checkExpiration();

        if (this in Active) {
            Money result = balance;
            ->Redeemed;
            return result;
        }
        else {
            revert "Can't redeem expired certificate";
        }
    }
    transaction getCurrentDate(GiftCertificate @ Unowned this) 
                returns Date @ Unowned 
    {
        return new Date();
    }
}
\end{lstlisting}
\caption{A dynamic state check example.}
\label{GiftCertificate}
\end{figure}

\subsection{Dynamic State Checks}
\label{sub:dynamic-state-checks}
The Obsidian compiler enforces that transactions can only be invoked when it can prove statically that the objects are in appropriate states according to the signature of the transaction to be invoked. In some cases, however, it is impossible to determine this statically. For example, consider \code{redeem} in Fig. \ref{GiftCertificate}. After line 24, the contract may be in either state \code{Active} or state \code{Expired}. However, inside the dynamic state check block that starts on line 29, the compiler assumes that \code{this} is in state \code{Active}. The compiler generates a dynamic check of state according to the test. However, regarding the code in the block, there are two cases. If the dynamic state check is of an \Owned reference $x$, then it suffices for the type checker to check the block under the assumption that the reference is of type according to the dynamic state check. However, if the reference is \Shared, there is a problem: what if code in the block changes the state of the object referenced by $x$? This would violate the expectations of the code inside the block, which is checked as if it had ownership of $x$. We consider the cases, since the compiler always knows whether an expression is \Owned, \Unowned, or \Shared:
\begin{itemize}
\item If the expression to be tested is a variable with \Owned mode, the body of the if statement can be checked assuming that the variable initially references an object in the specified state, since that code will only execute if that is the case due to the dynamic check.
\item If the expression to be tested is a variable with \Unowned mode, there may be another owner (and the variable cannot be used to change the state of the referenced object anyway). In that case, typechecking of the body of the if proceeds as if there had been no state test, since it would be unsafe to assume that the reference is owned. However, this kind of test can be useful if the desired behavior does not statically require that the object is in the given state. For example, in a university accounting system, if a \code{Student} is in \code{Enrolled} state, then their account should be debited by the cost of tuition this semester. The \code{debit} operation does not directly depend on the student's state; the state check is a matter of policy regarding who gets charged tuition.
\item If the expression to be tested is a variable with \Shared mode, then the runtime maintains a state lock that pertains to other shared references. The body is checked initially assuming that the variable owns a reference to an object in the specified state. Then, the type checker verifies that the variable still holds ownership at the end and that the variable has not been re-assigned in the body. However, at runtime, if any \textit{other} \Shared reference is used to change the state of the referenced object (for example, via another alias used in a transaction that is invoked by the body of the dynamic state check block), then the  transaction is aborted (recall that the blockchain environment is sequential, so there is only one top-level transaction in progress at a time). This approach enables safe code to complete but ensures that the analysis of the type checker regarding the state of the referenced object remains sound. This approach also bears low runtime cost, since the cost of the check is borne only in transitions via \Shared references. An alternative design would require checks at invocations to make sure that the referenced object was indeed in the state the typechecker expected, but we expect our approach has significantly lower runtime cost. Furthermore, our approach results in errors occurring immediately on transition. The alternative approach would give errors only when the referenced object was used, which could be substantially after the infringing transition, which would require the programmer to figure out which transition caused the bug.
\item If the expression to be tested is not a variable, the body of the if statement is checked in the same static context as the if statement itself. It would be unsafe for the compiler to make any assumptions about the type of future executions of the expression, since the type may change. This case only occurs in Obsidian, not in the underlying Silica formalism, which is in A-normal form \citep{Sabry1992:Reasoning}.
\end{itemize}

The dynamic state check mechanism is related to the \textit{focusing} mechanism of \citet{Fahndrich2002:Adoption}. Unlike focusing, Obsidian's dynamic state checks detect unsafe uses of aliases precisely rather than conservatively, enabling many more safe programs to typecheck. Furthermore, Obsidian does not require the programmer to specify \textit{guards}, which in focusing enable the compiler to reason conservatively about which references may alias.

\subsection{Parametric Polymorphism}
Parametric polymorphism is particularly important for Obsidian in order to maintain safety of collections and avoid needless code duplication. Requiring users to cast objects retrieved from containers to appropriate type would defeat the point of the language, which is to provide strong static guarantees, since those casts would have to be checked dynamically. Furthermore, there would have to be separate containers for different modes, since a container's elements would need to be either \Unowned, \Shared, or \Owned. In Obsidian, a contract can have \textit{two} type parameters: one for a contract and one for a mode. For example, part of the polymorphic LinkedList implementation is as follows:
\begin{lstlisting}[numbers=left, framexleftmargin=1em, xleftmargin=1em, basicstyle=\footnotesize\ttfamily]
contract LinkedList[T@s] {
    state Empty;
    state HasNext {
        LinkedList[T@s]@Owned next;
        T@s value;
    }
    transaction append(LinkedList@Owned this, T@s >> Unowned obj) {
    	#\ldots#
	}
}
\end{lstlisting}

Line 1 shows that the contract type is parameterized by the contract variable \code{T}, and the mode is parameterized by the mode variable \code{s}. In line 4, the \code{next} field is an \Owned reference to an object of type \code{LinkedList[T@s]} -- that is, a node whose type parameters are the same as the containing contract's type parameters. An object of type \code{LinkedList[Money@Owned]} is a container that holds a list of \code{Money} references, each of which the container owns. Using a separate parameter for the mode allows parameterization over states, e.g. a \code{LinkedList[LightSwitch@On]} owns references to \code{LightSwitch} objects that are each in the \code{On} state. In line 7, appending an element to a LinkedList always takes any ownership that was given, and the parameter \code{obj} must conform to the type specified by the type parameter \code{T@s}.

\section{System design and implementation}
\label{sec:architecture}
Our current implementation of Obsidian supports Hyperledger Fabric \cite{Fabric}, a permissioned blockchain platform. In contrast to public platforms, such as Ethereum, Fabric permits organizations to decide who has access to the ledger, and which peers need to approve (\textit{endorse}) each transaction. This typically provides higher throughput and more convenient control over confidential data than public blockchains, allowing operators to trade off between distributed trust and high performance. Fabric supports smart contracts implemented in Java, so the Obsidian compiler translates Obsidian source code to Java for deployment on Fabric peer nodes. The Obsidian compiler prepares appropriately-structured directories with Java code and a build file. Fabric builds and executes the Java code inside purpose-build Docker containers that run on the peer nodes. The overall Obsidian compiler architecture is shown in Fig. \ref{architecture}.

\begin{figure}
\includegraphics[width=\textwidth]{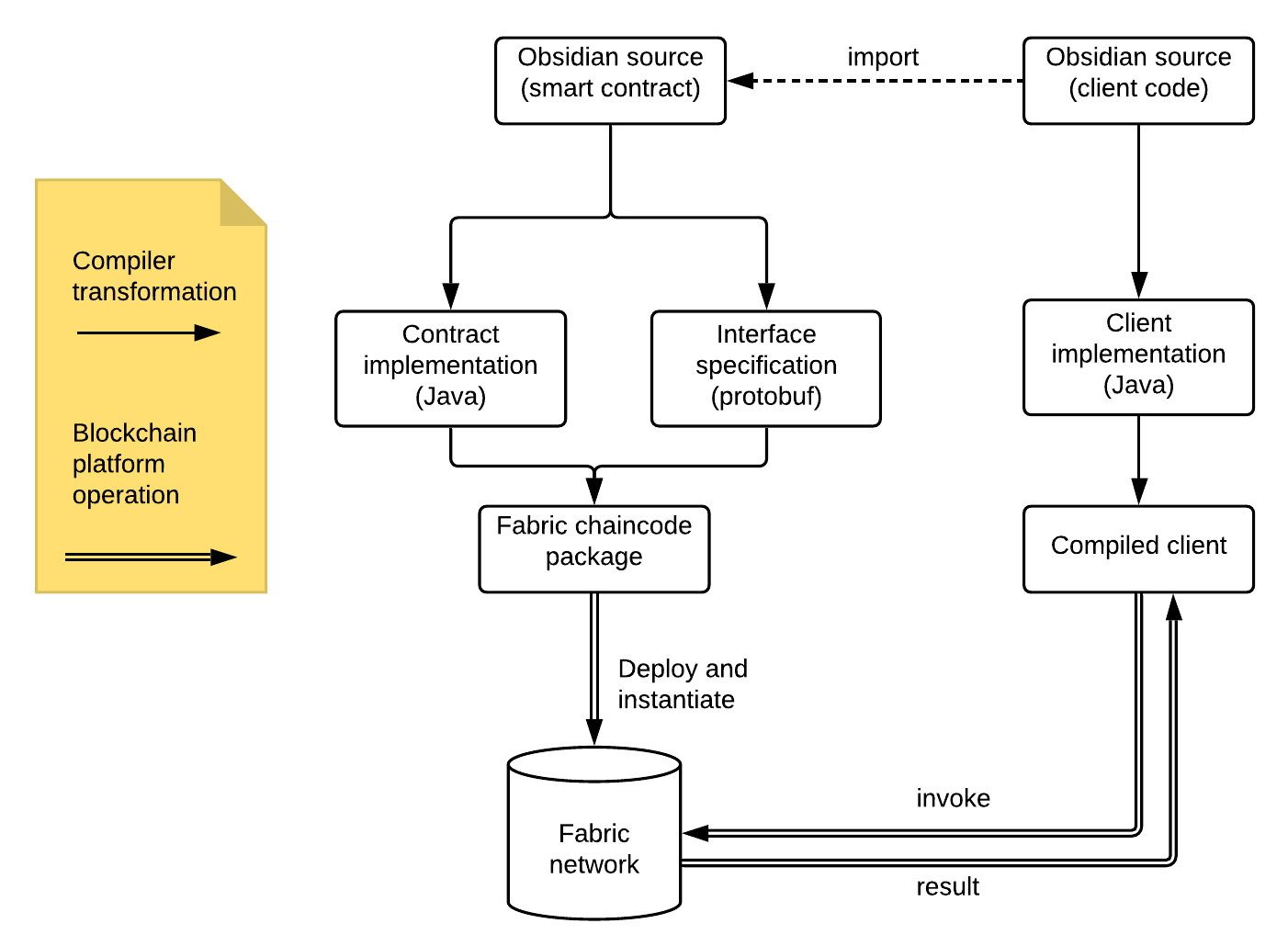}
\caption{Obsidian system architecture}
\label{architecture}
\end{figure}

\subsection{Storage in the ledger}
Fabric provides a key/value store for persisting the state of smart contracts in the ledger. As a result, Fabric requires that smart contracts serialize their state in terms of key/value pairs. In other smart contract languages, programmers are required to manually write code to serialize and deserialize their smart contract data. In contrast, Obsidian automatically generates serialization code, leveraging \textit{protocol buffers} \citep{protobuf} to map between message formats and sequences of bytes. When a transaction is executed, the appropriate objects are lazily loaded from the key/value store as required for the transaction's execution. Lazy loading is \textit{shallow}: the object's fields are loaded, but objects that fields reference are not loaded until \textit{their} fields are needed. After executing the transaction, Obsidian's runtime environment automatically serializes the modified objects and saves them in the ledger. This means that aborting a transaction and reverting any changes is very cheap, since this entails \textit{not} setting key/value pairs in the store, flushing the heap of objects that have been lazily loaded, and (shallowly) re-loading the root object from the ledger. This lazy approach decreases execution cost and frees the programmer from needing to manually load and unload key/value pairs from the ledger, as would normally be required on Fabric.

\subsection{Obsidian client programs}
\label{client-programs}
The convention for most blockchain systems is that smart contracts are written in one language, such as Solidity, and client programs are written in a different language, such as JavaScript. Unfortunately, in Solidity, transaction arguments and outputs must be primitives, not objects; arrays of bytes can be transferred, but the client and server must each implement corresponding serialization and deserialization code. The interface for a given contract is specified in an Application Binary Interface (ABI), documented in a schema written in JavaScript. If there are any incompatibilities between the semantics of the JavaScript serialization code and the semantics of the Solidity contract that interprets the serialized message, there can be bugs. 

Obsidian addresses this problem by allowing users to write client programs in Obsidian. Client programs can reference the same contract implementations that were instantiated on the server, obviating the need for two different implementations of data structures. Clients use the same automatically-generated serialization and deserialization code that the server does. As a result, Obsidian permits arbitrary objects (encoded via protocol buffers) to be passed as arguments and returned from transactions. Since the protocol buffer specifications are emitted by the Obsidian compiler, any client (even non-Obsidian clients) can use these specifications to correctly serialize and deserialize native Obsidian objects in order to invoke Obsidian transactions and interpret their results. 

The Obsidian client program has a \code{main} transaction, which takes a \code{remote} reference. The keyword \code{remote}, which modifies types of object references, indicates that the type refers to a remote object. The compiler implements remote references with stubs, via an RMI-like mechanism. When a non-\code{remote} reference is passed as an argument to a remote transaction, the referenced object is serialized and sent to the blockchain. Afterward, the reference becomes a \code{remote} reference, so that only one copy of the object exists (otherwise mutations to the referenced object on the client would not be reflected on the blockchain, resulting in potential bugs). This change in type is similar to how reference modes change during execution. Fig. \ref{client} shows a simple client program that uses the \code{TinyVendingMachine} above. The \code{main} transaction takes a \code{remote} reference to the smart contract instance.

Every Obsidian object has a unique ID, and references to objects can be transmitted between clients and the blockchain via object ID. There is some subtlety in the ID system in Obsidian: all blockchain transactions must be deterministic so that all peers generate the same IDs, so it is impossible to use traditional (e.g., timestamp-based or hardware-based) UUID generation. Instead, Obsidian bases IDs on transaction identifiers, which Fabric provides, and on an index kept in an ID factory. Since transaction IDs are unique, each transaction can have its own ID factory and still avoid collisions. The initial index is reset to zero at the beginning of each transaction so that no state pertaining to ID generation needs to be stored between transactions. Blockchains provide a sequential execution environment, so there is no need to address race conditions in ID generation. When clients instantiate contracts, they generate IDs with a traditional UUID algorithm, since clients operate off the blockchain.

Although serializing objects according to their Protobuf specifications is better than requiring programmers to manually write their own serialization code, if a client is written in a traditional language, the client does not obtain the safety benefits of the Obsidian type system. Obsidian addresses this problem by tracking which objects are owned by some client. Although the blockchain cannot know which client owns each object or prevent non-Obsidian clients from duplicating or losing assets that are owned by clients, the Obsidian blockchain runtime aborts transactions that attempt to pass ownership from client to blockchain of assets that no client owns.

Blockchains allow clients to interleave their transactions arbitrarily. This does not suffice to ensure safety in arbitrary Obsidian client programs because client programs need to rely on state information that they obtain dynamically. The current implementation of Obsidian assumes that Obsidian clients will not race with other clients. In the future, however, Obsidian will address this issue in a platform-appropriate manner. Once the programmer identifies a critical section, one approach is for the client to wrap the section in a lambda so that the server can execute it in one transaction. This approach might work well on Ethereum, where clients must pay for the costs of executing code on the blockchain. However, on Fabric, this approach is problematic because the security policy is such that clients should not force the blockchain to execute arbitrary code (for example, including non-terminating code). An approach that may be more effective is to use \textit{optimistic concurrency} \citep{Kung1981:On}, in which smart contracts on the blockchain defer commitment of changes from clients until the client's critical section is done; then, either the transaction is committed, or the changes are discarded because of intervening changes that occurred.


\begin{figure}[ht]
\begin{lstlisting}[numbers=left, framexleftmargin=1em, xleftmargin=2em, basicstyle=\footnotesize\ttfamily]
import "TinyVendingMachine.obs"

main contract TinyVendingMachineClient {
    transaction main(remote TinyVendingMachine@Shared machine) {
        restock(machine);

        if (machine in Full) {
            Coin c = new Coin();
            remote Candy candy = machine.buy(c);
            eat(candy);
        }
    }

    private transaction restock(remote TinyVendingMachine@Shared machine) {
        if (machine in Empty) {
            Candy candy = new Candy();
            machine.restock(candy);
        }
    }

    private transaction eat(remote Candy @ Owned >> Unowned c) {
        disown c;
    }
}
\end{lstlisting}
\caption{A simple client program, showing how clients reference a smart contract on the blockchain. Note that the blockchain-side smart contract has been modified (relative to Fig. \ref{tiny-vending-machine}) to have \Shared receivers, since top-level objects are never owned by clients.}
\label{client}
\end{figure}

\subsection{Ensuring safety with untrusted clients}
If a client program is written in a language other than Obsidian, it may not adhere to Obsidian's type system. For example, a client program may obtain an owned reference to an object and then attempt to transfer ownership of that object to multiple references on the blockchain. This is called the \textit{double-spend} problem on blockchains: a program may attempt to consume a resource more than once. To address this problem, the Obsidian runtime keeps a list of all objects for which ownership has been passed outside the blockchain. When a transaction is invoked on an argument that must be owned, the runtime aborts the transaction if that object is not owned outside the blockchain, and otherwise removes the object from the list. Likewise, when a transaction argument or result becomes owned by the client after the transaction (according to the transaction's signature), the runtime adds the object to the list. Of course, Obsidian has no way of ensuring safe manipulation of owned references in non-Obsidian clients, but this approach ensures that each time an owned reference leaves the blockchain, it only returns once, preventing double-spending attacks. Obsidian cannot ensure that non-Obsidian clients do not lose their owned references, so we hope that most client code that manipulates assets will be written in Obsidian.

%

\section{Silica}
\label{silica}
In this section, we describe Silica, the core calculus that forms a foundation for Obsidian. Silica is so named because silica comprises 70\% or more of obsidian glass \citep{Obsidian-chemical}. Silica is designed in the style of Featherweight Typestate \citep{Garcia:2014:FTP:2684821.2629609}, which is itself designed in the style of Featherweight Java \citep{Igarashi:2001:FJM:503502.503505}. Since Obsidian is a more traditional object-oriented, imperative language, the syntax, type checker, etc. implemented in the system differs slightly from the rules for Silica.

Silica uses A-normal form \citep{Sabry1992:Reasoning} as a simplification to avoid nested expressions in most cases. The production $s$ in the grammar stands for a reference to an object, which can then be used to build other expressions. In our initial explanation, $s$ ranges over variables, but in \S \ref{dynamic-semantics} we will introduce \textit{locations}, which facilitate our proof of correctness. 

Silica differs significantly from Featherweight Typestate (FT). Silica avoids class-level inheritance to simplify reasoning about programs. In order to formalize expression of (a) fields that are common to more than one state; and (b) a type system that is aware of all possible (nominal) states of a particular object, Silica defines a notion of \textit{state} in addition to a notion of \textit{contract}. FT only has a notion of \textit{class} and expects the programmer to simulate states by specifying multiple classes that interoperate. Unlike FT, Silica permits the typestate of a field to differ temporarily from its declaration, as long as consistency is restored and not visible outside the contract. This facilitates patterns of use we saw our participants use in user studies. 

Silica fuses the notions of typestate and permission into one type construct, unlike FT, which has separate notions of permission and state guarantee. With this approach, the syntax of Silica exactly expresses the set of possible reference types. Silica also distinguishes between asset contracts and non-asset contracts; owning references to asset contracts are treated \textit{linearly} rather than in an affine way. FT has no way of treating references linearly.

Silica supports parametric polymorphism, a key feature requested by our industrial stakeholders. Although this makes the language more complex, we think this complexity is outweighed by the benefit of the feature, enabling (for example) reusable containers.

As a result, although some aspects of the system are more complex than FT, Silica is more expressive in the above respects. Silica serves as a sound foundation for Obsidian but could be used or adapted for other typestate-oriented languages.

Fig. \ref{silica-syntax} shows the syntax of Silica.

\begin{figure}[htp]
\begin{align*}
C &\in \textsc{ContractNames} & m &\in \textsc{transactionNames} \\
I &\in \textsc{InterfaceNames} & S &\in \textsc{StateNames}\\
D &\in \textsc{ContractNames} \cup \textsc{InterfaceNames} & p &\in \textsc{PermissionVariables}\\
X &\in \textsc{DeclarationVariables} & f &\in \textsc{FieldNames} \\
x &\in \textsc{IdentifierNames} \\
\end{align*}

\begin{tabular}{l r l l}
T & \bnfdef & T\textsubscript{C}.T\textsubscript{ST} & (types of contract references)\\

T\textsubscript{C} & \bnfdef & \generics{D}{T} & (types of concrete contracts/interfaces)\\
&   \bnfalt & X & (declaration variables) \\

T\textsubscript{ST} & \bnfdef & \textoverline{S}  & (nonempty disjunction of states)\\
& 	\bnfalt & p & (permission/state variables)\\
&  	\bnfalt & P & \\

P &	\bnfdef & \Owned \bnfalt \Unowned \bnfalt \Shared & \\

T\textsubscript{G} & \bnfdef & \genericParamOpt{X}{p}{\generics{I}{T}.T\textsubscript{ST}} & (generic type parameter) \\

CON & \bnfdef & \contract \ \generics{C}{T_G}\ \implements\ \generics{I}{T} \{ \textoverline{ST} \textoverline{M} \} \\

IFACE & \bnfdef & \interface\ \generics{I}{T_G} \{ \textoverline{ST}  \textoverline{M\textsubscript{SIG}} \} & \\

ST & \bnfdef & [\textbf{asset}] S  \textoverline{F} \\

F & \bnfdef & T f \\

M\textsubscript{SIG} & \bnfdef & T \generics{m}{T_G} (\textoverline{T \trans T\textsubscript{ST} x}) T\textsubscript{ST} \trans \ T\textsubscript{ST} & (arguments cannot change class)\\
&	\bnfalt  & { \textoverline{T\textsubscript{ST} \trans T\textsubscript{ST} f}} T \generics{m}{T_G}(\textoverline{T \trans T\textsubscript{ST} x}) T\textsubscript{ST} \trans \ T\textsubscript{ST} & (fields have pre- and post-specifications) \\

M & \bnfdef & M\textsubscript{SIG} e \\

e & \bnfdef & s \\
&		\bnfalt & s.f & (field access) \\
&		\bnfalt & s.\generics{m}{T}(\textoverline{x}) \\
&		\bnfalt & \letExpr{x}{T}{e}{e} \\
&		\bnfalt & \new{} \generics{C}{T}.S(\textoverline{s}) &(contract fields; state fields)\\
&		\bnfalt & s $\nearrow_{\Owned \bnfalt \Shared}$ S(\textoverline{s}) & (State transition) \\
&		\bnfalt & s.f := s & (field update, with 1-based indexing)\\
&		\bnfalt & \assertExpr{s}{T\textsubscript{ST}} & {(static assert)} \\
&		\bnfalt & \ifExpr{s}{P}{T\textsubscript{ST}}{e}{e} & (dynamic state test, owned or shared s) \\
&		\bnfalt & \disown s & (drop ownership of an owned reference) \\
&		\bnfalt & \pack \\

s & \bnfdef 	& x & (simple expressions)\\

\end{tabular}
\caption{Syntax of Silica}
\label{silica-syntax}
\end{figure}

\subsection{Silica Static Semantics}

\framebox{$\ty{\typeBounds}{\Delta}{s}{e}{T}{\Delta'}$} \textbf{Well-typed expressions}

Unlike some traditional typing judgments, in addition to an \textit{input} typing context $\Delta$, Silica's typing judgment includes an \textit{output} typing context $\Delta'$. This is because an expression can change the mode of object references. For example, using a variable that references an object may consume ownership of the object.

Note that $\overline{e} : \overline{T}$ is defined to mean a sequence $\overline{e : T}$. Expressions are typechecked in the context of an indirect reference $l$ or variable $x$, which represents $this$. Initial programs are written using $this$, but evaluation of invocations will substitute locations for instances of $this$. The subscript on the turnstile tracks the value of $this$ in the current invocation.

T-lookup relies on the split judgment (\splitType{T_1}{T_2}{T_3}), which describes how a permission in $T_1$ can be split between $T_2$ and $T_3$.
\begin{mathpar}
\inferrule*[right=T-lookup]{\splitType{T_1}{T_2}{T_3}}{\ty{\typeBounds}{\Delta, s': T_1}{s}{s'}{T_2}{\Delta, s': T_3}}
\end{mathpar}

In \keyword{let}, the bound variable can be an owning reference to an asset, but if so, $e_2$ must consume the ownership (as indicated by \textit{disposable}).
\begin{mathpar}
\inferrule*[right=T-let]
{
    \ty{\typeBounds}{\Delta}{s}{e_1}{T_1}{\Delta'} \and
    \ty{\typeBounds}{\Delta', x: T_1}{s}{e_2}{T_2}{\Delta'', x : T_1'} \and
    \disposable{\typeBounds}{T_1'}
}
{
    \ty{\typeBounds}{\Delta}{s}{\letExpr{x}{T_1}{e_1}{e_2}}{T_2}{\Delta''}
}
\end{mathpar}

The subsOk judgment, which is used in T-new, ensures that the given type parameters are suitable according to the declaration of $C$.
\begin{mathpar}
\inferrule*[right=T-new]{\ty{\typeBounds}{\Delta}{s}
{
	\overline{s'}}{\overline{T_{s'}}}{\Delta'} \and
    \overline{\subtype{\typeBounds}{T_{s'}}{stateFields(\generics{C}{T}, S)}} \and
    \overline{\subsOk{\typeBounds}{T}{T_G}}
    \\\\
    def(C) = \contract \ \generics{C}{T_G} \ \implements\ \generics{I}{T_I}\ \{ \ldots \} }
{
	\ty{\typeBounds}{\Delta}{s}
        {\new \ \generics{C}{T}.S(\overline{s'})}{\generics{C}{T}.S}{\Delta'}
}
\end{mathpar}

When accessing a field of \keyword{this} (note that the $s$ in the expression is identical to the $s$ subscript in the judgement), there are two cases. In the first case, the type of the field is consistent with the declared type of the field, in which case we make sure that the field is in scope in all possible current states of the referenced object (via \textit{intersectFields}). In the second case, the field type has been updated due to an assignment, so the field type comes from an override in the context. In both cases, any ownership that was present is consumed from the field.
\begin{mathpar}
\inferrule*[right=T-this-field-def]{
	s.f \notin Dom(\Delta) \and
	T_1 \ f \in intersectFields(T) \and
	\splitType{T_1}{T_2}{T_3}
}
{
    \ty{\typeBounds}{\Delta, s: T}{s}{s.f}{T_2}{\Delta, s: T, s.f: T_3}
}

\inferrule*[right=T-this-field-ctxt]
{
	\splitType{T_1}{T_2}{T_3}
}
{
    \ty{\typeBounds}{\Delta, s: T, s.f: T_1}{s}{s.f}{T_2}{\Delta, s: T, s.f: T_3}
}
\end{mathpar}

A field can be overwritten only if the current reference is disposable, since otherwise assignment might overwrite owning references to assets.
\begin{mathpar}
\inferrule*[right=T-fieldUpdate]
{
    \ty{\typeBounds}{\Delta}{s}{s.f}{T_C.T_{ST}}{\Delta'} \and
    \ty{\typeBounds}{\Delta'}{s}{s_f}{T_C.T_{ST}'}{\Delta''} \and
	 \\\\
    \disposable{\typeBounds}{T_C.T_{ST}}
}
{
    \ty{\typeBounds}{\Delta}{s}{s.f := s_f}{\text{\Unit}}{\Delta'', s.f: T_C.T_{ST}'}
}
\end{mathpar}

In invocations (of both public and private transactions), if the type of an argument differs from the declared type of the formal parameter, the final type of the argument may differ from the declared final type of the parameter. The function \textit{funcArg} defines the resulting final types of the arguments. The subtyping antecedents ensure that the arguments are suitable for the declared types of the formal parameters. Invocations of public transactions can only occur when field types are consistent with their declarations.
\begin{mathpar}
\inferrule*[right=T-inv]
{
    \transaction{\typeBounds}{\generics{m}{T_M}}{\generics{D}{T}} = T \ \generics{m}{T_M'}(\overline{T_{C_x}.T_x \trans T_{xST} \ x}) \ T_{this} \trans \ T_{this}' e
    \\\\
        \bound{\typeBounds}{T_C.T_{STs1}'} = \generics{D}{T}.T_{STs1} \and
    \subperm{\typeBounds}{T_{STs1}}{T_{this}}
    \\\\
    \overline{\subtype{\typeBounds}{T_{s2}}{T_{C_x}.T_x}} \and \forall f, s.f \notin \Delta
		\\\\
	T_{s1}' = \funcArg{T_C.T_{STs1}}{T_C.T_{this}}{T_C.T_{this}'} \and
    \overline{T_{s2}' = \funcArg{T_{s2}}{T_x}{T_{C_x}.T_{xST}}}
}
{
    \ty{\typeBounds}{\Delta, s_1 : T_C.T_{STs1}', \overline{s_2: T_{s2}}}{s}{s_1.\generics{m}{T_M}(\overline{s_2})}{T}{\Delta, s_1: T_{s1}', \overline{s_2: T_{s2}'}}
}
\end{mathpar}

Private invocations differ from public invocations because the current types of the fields must be checked against the transaction's preconditions and the field types must be updated after invocation.
\begin{mathpar}
\inferrule*[right=T-privInv]
{
    \transaction{\typeBounds}{\generics{m}{T_M}}{\generics{D}{T}} = \overline{T_{C_f}.T_{fdecl} \trans T_{fST} \ x} \ T \ m(\overline{T_{C_x}.T_x \trans T_{xST} \ x}) \  T_{this} \trans \ T_{this}' \ e
	\\\\
		\bound{\typeBounds}{T_C.T_{STs1}'} = \generics{D}{T}.T_{STs1} \and
    \subperm{\typeBounds}{T_{STs1}}{T_{this}}
    \\\\
    \overline{\subtype{\typeBounds}{T_{s2}}{T_{C_x}.T_x}} \and
    \overline{\subtype{\typeBounds}{T_f}{T_{C_f}.T_{fdecl}}}
	\\\\
	T_{s1}' = \funcArg{C.T_{STs1}}{C.T_{this}}{C.T_{this}'} \and
  \overline{T_{s2}' = \funcArg{T_{s2}}{T_x}{T_{C_x}.T_{xST}}}
	\\\\
	\overline{T_f' = \funcArg{T_{f}}{T_{C_f}.T_{fdecl}}{T_{C_f}.T_{fST}}}
}
{
	\ty{\typeBounds}{\Delta, s_1 : T_C.T_{STs1}, \overline{s_2: T_{s2}}, \overline{s.f: T_f}}{s_1}{s_1.\generics{m}{T_M}(\overline{s_2})}{T}{\Delta, s_1: T_{s1}', \overline{s_2: T_{s2}'}, \overline{s.f: T_{f}'}}
}
\end{mathpar}

$T-\nearrow_p$ allows changing the nominal state of \keyword{this}. Unlike transitions in FT, $T-\nearrow_p$ does not permit arbitrary changes of class; it restricts the change to states within the object's current contract. It requires giving away ownership of all possible fields of \keyword{this} first. Since the current state may not be uniquely known statically, \textit{unionFields} is used to identify all possible current fields.

Obsidian permits assignment to fields in target states \textit{before} the transition has occurred. This is not directly supported in Silica, but can be represented indirectly.

\begin{mathpar}
\inferrule*[right=T-$\nearrow_p$]{
    \subperm{\typeBounds}{T_{ST}}{p} \and p \in \{\Shared, \Owned\} \\\\ 
    \ty{\typeBounds}{\Delta}{s}{\overline{x}}{\overline{T}}{\Delta'} \and
    \overline{\subtype{\typeBounds}{T}{type(stateFields(\generics{C}{T_A}, S'))}}
		\\\\
    unionFields(\generics{C}{T_A}, T_{ST}) = \overline{T_{fs} \ f_s}
    \and fieldTypes_s(\Delta; \overline{T_{fs} \ f_s}) = \overline{T_{fs}'}
	\and
    \overline{\disposable{\typeBounds}{T_{fs}'}}
	}
    {\ty{\typeBounds}{\Delta, s: \generics{C}{T_A}.T_{ST}}{s}{s \nearrow_{p} S'(\overline{x})}{\Unit}{\Delta', s : \generics{C}{T_A}.S'}}
\end{mathpar}


\begin{mathpar}
\inferrule*[right=T-assertStates]
    {\overline{S} \subseteq \overline{S'}}
    {\ty{\typeBounds}{\Delta, x : T_C.\overline{S}}{s}{\assertExpr{x}{\overline{S'}}}{\Unit}{\Delta, x : T_C.\overline{S}}}

\inferrule*[right=T-assertPermission]
{
    T_{ST} \in \{ \Owned, \Unowned, \Shared \}
}
{
    \ty{\typeBounds}{\Delta, x : T_C.T_{ST}}{s}{\assertExpr{x}{T_{ST}}}{\Unit}{\Delta, x : T_C.T_{ST}}
}
\end{mathpar}

When asserting that a variable is in a state corresponding to a type variable, $\text{bound}_{*}$ is used to compute the most specific concrete mode for the variable.
\begin{mathpar}
\inferrule*[right=T-assertInVar]
    {\ensuremath{\nonVar{T_{ST}}} \and  \boundPerm{\typeBounds}{p} = T_{ST}
     \\\\
     \ty{\typeBounds}{\Delta, x : T_C.T_{ST}}{s}{\assertExpr{x}{T_{ST}}}{\Unit}{\Delta, x : T_C.T_{ST}}}
    {\ty{\typeBounds}{\Delta, x : T_C.T_{ST}}{s}{\assertExpr{x}{p}}{\Unit}{\Delta, x : T_C.T_{ST}}}

\inferrule*[right=T-assertInVarAlready]
    { }
    {\ty{\typeBounds}{\Delta, x : T_C.p}{s}{\assertExpr{x}{p}}{\Unit}{\Delta, x : T_C.p}}
\end{mathpar}

Dynamic state tests are typechecked according to the ownership of the variable to be checked. T-isInStaticOwnership can be used when a variable is an owning reference but does not provide a particular state specification that the programmer wants. In contrast, IsIn-Dynamic applies when there is no ownership.

\begin{mathpar}
\inferrule*[right=T-IsInStaticOwnership]
{
    \ty{\typeBounds}{\Delta, x: T_C.\overline{S}}{s}{e_1}{T_1}{\Delta'} \and
    \overline{S} \subseteq states(T_C)
    \\\\
    \subperm{\typeBounds}{T_{ST}}{Owned} \and
    \overline{S_x} = \possibleStates{\typeBounds}{T_C.T_{ST}}
	\\\\
    \ty{\typeBounds}{\Delta, x: T_C.(\overline{S_x} \setminus \overline{S})}{s}{e_2}{T_1}{\Delta''} \and
    \Delta_f = merge(\Delta', \Delta'')
}
{
    \ty{\typeBounds}{\Delta, x: T_C.T_{ST}}{s}{\ifExpr{x}{owned}{\overline{S}}{e_1}{e_2}}{T_1}{\Delta_f}
}
\end{mathpar}

In \ifExpr{x}{shared}{\overline{S}}{e_1}{e_2}, $e_1$ is permitted to change the state of the object referenced by $x$, but it is not permitted to allow another reference to obtain permanent ownership of the object. While $e_1$ is evaluating, all state changes to the object referenced by $x$ that occur via Shared aliases will cause program termination, so it is up to the programmer to ensure that this is impossible.

\begin{mathpar}
\inferrule*[right=T-IsIn-Dynamic]
{   
    \overline{S} \subseteq states(T_C) \and
    \ty{\typeBounds}{\Delta, x: T_C.\overline{S}}{s}{e_1}{T_1}{\Delta', x: T_C.T_{ST}'} 
    \\\\
        \boundPerm{\typeBounds}{T_{ST}'} \neq Unowned
    \\\\
        \ty{\typeBounds}{\Delta, x: T_C.Shared}{s}{e_2}{T_1}{\Delta'', x: T_C.Shared}
	 \\\\
	    \Delta_f = merge(\Delta', \Delta''), x: T_C.Shared
}
{
    \ty{\typeBounds}{\Delta, x: T_C.Shared}{s}{\ifExpr{x}{shared}{\overline{S}}{e_1}{e_2}}{T_1}{\Delta_f}
}

\end{mathpar}

If the test is against a permission variable, we check $e_1$ in a context that gives $x$ the permission variable's permission, which will result in relying on the bound on $p$ in $\Gamma$. 
\begin{mathpar}
\inferrule*[right=T-IsIn-PermVar]
{
    \ty{\typeBounds}{\Delta, x: T_C.p}{s}{e_1}{T_1}{\Delta'} \and
    \ty{\typeBounds}{\Delta, x: T_C.T_{ST}}{s}{e_2}{T_1}{\Delta''}
    \\\\
    \Delta_f = merge(\Delta', \Delta'') \and
    \text{Perm} = \ToPermission{T_{ST}}
}
{
    \ty{\typeBounds}{\Delta, x: T_C.T_{ST}}{s}{\ifExpr{x}{\text{Perm}}{p}{e_1}{e_2}}{T_1}{\Delta_f}
}
\end{mathpar}

In T-IsIn-Perm-Then and T-IsIn-Perm-Else, the compiler knows which branch will be taken: either $T_{ST}$ satisfies the given condition or it does not. If $T_{ST}$ is a variable, then we treat it as if it were owned (via ToPermission).
\begin{mathpar}
\inferrule*[right=T-IsIn-Perm-Then]
{
    \text{Perm} \in \{ \Owned, \Unowned, \Shared \} \and
    P = \ToPermission{T_{ST}}
    \\\\
    \subperm{\typeBounds}{P}{\text{Perm}} \and
    \ty{\typeBounds}{\Delta, x: T_C.T_{ST}}{s}{e_1}{T_1}{\Delta'}
}
{
    \ty{\typeBounds}{\Delta, x: T_C.T_{ST}}{s}{\ifExpr{x}{P}{\text{Perm}}{e_1}{e_2}}{T_1}{\Delta'}
}

\inferrule*[right=T-IsIn-Perm-Else]
{
    \text{Perm} \in \{ \Owned, \Unowned, \Shared \} \and
    P = \ToPermission{T_{ST}}
    \\\\
    \notsubperm{\typeBounds}{P}{\text{Perm}} \and
    \ty{\typeBounds}{\Delta, x: T_C.T_{ST}}{s}{e_2}{T_1}{\Delta'}
}
{
    \ty{\typeBounds}{\Delta, x: T_C.T_{ST}}{s}{\ifExpr{x}{P}{\text{Perm}}{e_1}{e_2}}{T_1}{\Delta'}
}
\end{mathpar}

The case where we test to see if an unowned reference is in a particular state is included because it can arise via substitution.
\begin{mathpar}
\inferrule*[right=T-IsIn-Unowned]
{
    \ty{\typeBounds}{\Delta, x: T_C.\Unowned}{s}{e_2}{T_1}{\Delta'}
}
{
    \ty{\typeBounds}{\Delta, x: T_C.\Unowned}{s}{\ifExpr{x}{\Unowned}{\overline{S}}{e_1}{e_2}}{T_1}{\Delta'}
}
\end{mathpar}

\vspace{1em}

Disown discards ownership of its parameter. Existing ownership is split; in $\splitType{T_C.T_{ST}}{T}{T'}$, $T$ retains ownership and $T'$ lacks it, so the output context uses $T'$ as the new type of $s'$. Note that the split is not a function; one can see by inspection of the definition of split that $T'$ is not owned, but may be either shared or unowned.
\begin{mathpar}
\inferrule*[right=T-disown]
    {\splitType{T_C.T_{ST}}{T}{T'} \and
     \subperm{\typeBounds}{T_{ST}}{\Owned} }
    {\ty{\typeBounds}{\Delta, s': T_C.T_{ST}}{s}{\disown \ s'}{\Unit}{\Delta, s': T'}}
\end{mathpar}

\code{Pack} updates $\Delta$, removing all type overrides of fields of \code{this}. It is only appropriate, of course, when the existing overrides are consistent with the field declarations. There is no corresponding \textit{unpack}; instead, field assignment and field reading can cause a future need to invoke \code{pack}. Note that \code{pack} exists only in the formal model and is not needed in user programs because the compiler can insert them where required (at the ends of transactions and before invocations of public transactions).

\begin{mathpar}
\inferrule*[right=T-pack]
{
	s.f \notin dom(\Delta) \and
    contractFields(T) = \overline{T_{decl} \ f} \and
    \overline{\subtype{\typeBounds}{T_f}{T_{decl}}} \and
    \typeBounds \vdash \overline{\sameOwnership{T_f}{T_{decl}}}
}
{
    \ty{\typeBounds}{\Delta, s: T, \overline{s.f: T_f}}{s}{\pack}{\Unit}{\Delta, s: T}
}
\end{mathpar}

\begin{absolutelynopagebreak}
\framebox{$\okIn{M}{C}$} \textbf{Well-typed transaction}

\begin{mathpar}
\inferrule*[right=PublicTransactionOK]{
	params(C) = \overline{T_G} \and
    \overline{\text{Var}(T_G) = T} \and
    \typeBounds = \overline{T_G}, \overline{T_M}
    \\\\
	\ty
        {\typeBounds}
        {this: \generics{C}{T}.T_{this},
			{\
				\overline{x: C_x.T_x}}}
		{this}
		{e}
		{T}
		{this: C.T_{this}', \overline{x: C_x.T_{x}'}}
	}
    {\okIn{T \ \generics{m}{T_M}(\overline{{C_x}.T_x \trans T_{x}' \ x}) \ T_{this} \trans \ T_{this}' \ e
}{C} }
\end{mathpar}

Note that all fields of \this{} must end the transaction with types consistent with their declarations; otherwise, there would be occurrences of s.f in $\Delta'$.

\begin{mathpar}
\inferrule*[right=PrivateTransactionOK]
    {params(C) = \overline{T_G} \and
    \overline{\text{Var}(T_G) = T} \and
     contractFields(\generics{C}{T}) = \overline{T_f \ f}
     \\\\
     \Delta = s: \generics{C}{T}.T_{ST},
        {\
            \overline{s.f: contract(T_f).S_{f1}},
            \overline{x: C_x.T_x}} 
     \\\\
     \Delta' = s: \generics{C}{T}.T_{ST}', \overline{s.f: contract(T_f).S_{f2}},\overline{x: C_x.T_{x}'} 
          \\\\
	\ty
        {\typeBounds}{\Delta}
		{s}
		{e}
		{T}
		{\Delta'} \and
     \typeBounds = \overline{T_G}, \overline{T_M}
	}
    {\okIn{\overline{S_{f1}>>S_{f2} f}\ T \ \generics{m}{T_M}(C_x.T_x \trans T_{x}' x) T_{ST} \trans \ T_{ST}' \ e
}{C} }
\end{mathpar}
\end{absolutelynopagebreak}

The difference between public and private transactions is that private transactions may begin and end with fields inconsistent with their declarations. In both cases, inside $e$, it is possible to set fields of \this so that they do not match their declared types. However, while this is the case, additional public transactions cannot be invoked, ensuring that only private transactions are exposed to the inconsistent state.

There may be aliases to \this{}. However, if the fields of \this{} are inconsistent with their types, no public transactions can be invoked, so the inconsistency cannot be visible outside this transaction and any private transactions that it invokes. Furthermore, the state of \this{} can only be changed if the permission on \this{} allows that operation (see \textsc{This-state-transition}).

In Obsidian, field pre- and post- types are optional; when they are omitted, specifications match the field type declarations.

\framebox{ST \textbf{ok}} \textbf{Well-formed State}

All fields must have distinct names, and if any field is an asset, then the state must be labeled $\asset$.
\begin{mathpar}
\inferrule*
{
	\forall {i, j} \ \  i \neq j \Rightarrow f_i \neq f_j \and
    \overline{\nonAsset{\typeBounds}{T}}
}
{
	\typeBounds \vdash S \ \overline{T \ f} \ \textbf{ok}
}

\inferrule*
{
	\forall {i, j} \ \  i \neq j \Rightarrow f_i \neq f_j
}
{
	\typeBounds \vdash \asset \ S \ \overline{T \ f} \ \textbf{ok}
}
\end{mathpar}

\framebox{CL \textbf{ok}} \textbf{Well-typed Contract}

\begin{mathpar}
\inferrule*
{
	\okIn{\overline{M}}{C}  \and
    \overline{T_G} \vdash \ \overline{ST \ \textbf{ok}} \and
	|\overline{ST}| > 0 \and
    \\\\
    \transactionNames{I} \subseteq \transactionNames{C} \and
    \stateNames{I} \subseteq \stateNames{C}
    \\\\
    \forall T \in \overline{T}, \isVar{T} \implies T \in \overline{Var(T_G)}
    \\\\
    \forall M \in \overline{M}, \transactionName{M} \in \transactionNames{I} \implies \implementOk{\overline{T_G}}{\generics{I}{T}}{M}
    \\\\
    \forall S \in \overline{ST}, \stateName{S} \in \stateNames{I} \implies \implementOk{\overline{T_G}}{\generics{I}{T}}{S}
    \\\\
    \overline{\genericsOk{\overline{T_G}}{T_G}} \and
    \overline{\subsOk{\overline{T_G}}{T}{\params{I}}}
}
{
	\contract \ \generics{C}{T_G} \ \implements\ \generics{I}{T} \ \{ \overline{ST} \ \overline{F} \ \overline{M} \} \ \textbf{ok}
}

\end{mathpar}

\framebox{IFACE \textbf{ok}} \textbf{Well-typed Interface}

\begin{mathpar}
\inferrule*
{
    \overline{\genericsOk{\overline{T_G}}{T_G}}
}
{
    \interface \ \generics{I}{T_G}  \{ \overline{ST} \ \overline{M_{SIG}} \} \ \textbf{ok}
}
\end{mathpar}

\framebox{PG \textbf{ok}} \textbf{Well-typed Program}

\begin{mathpar}
\inferrule*
    {\ok{\overline{CON}} \and
     \ok{\overline{IFACE}} \and
     \ty{\cdot}{\cdot}{s}{e}{T}{\cdot}
    }
    {\langle \overline{IFACE}, \overline{CON}, e \rangle \ \textbf{ok}}
\end{mathpar}

\
\subsection{Auxiliary Judgements}

\subsubsection{Program structure} \mbox{}

We assume that the contracts and interfaces defined in a program are ambiently available via the $def$ function, which retrieves the definition of a contract or interface (definition) by name.
Likewise, the definition of a state $S$ of contract or interface $D$ can be retrieved via $sdef(D, S)$, and the definition of a transaction can be retrieved via $tdef(D, m)$.
Note that for declaration variables $def(X)$ is the interface bound on $X$; similarly, $sdef(X, S)$ is the state in the bound on $X$.
That is $sdef(X,S) = sdef(def(X), S)$.

\fbox{$stateFields(D, S)$}

On individual states, $stateFields$ gives only the fields defined directly in those states:
\begin{mathpar}
\inferrule*{
	def(C) = \ \contract \ \generics{C}{T_G} \ \implements \ \generics{I}{T} \ \{ \overline{ST} \ \overline{M} \}
    \\\\
	S \ \overline{F} \in \overline{ST}
}
{stateFields(C, S) = \overline{F}}

\inferrule*{
}
{stateFields(I, S) = \cdot}
\end{mathpar}

\fbox{$unionFields(T)$}

The $unionFields$ function looks up the fields that are defined in ANY of the states in a set of states. Note that the syntax guarantees that any field has consistent types in all states in which it is defined. This is useful when it is known that one of two different types captures the state of an object, but it is not known which one.
\begin{mathpar}
\inferrule*
{
	F = \cup_{S \in \overline{S}} {stateFields(D, S)}
}
{
	unionFields(D.\overline{S}) = F
}

\inferrule*
{
	T_{ST} \in \{Shared, Owned, Unowned\} \\\\
	cdef(C) = \ \contract \ C  \{ \overline{[\asset] S \ F_S} \ \overline{M} \}
	\\\\
	F = \cup_{S \in \overline{F_S}} {stateFields(C, S)}
}
{
	unionFields(D.T_{ST}) = F
}\end{mathpar}

\fbox{$intersectFields(T)$}

The $intersectFields$ function looks up the fields that are defined in ALL of the states in a set of states. Note that the syntax guarantees that any field has consistent types in all states in which it is defined.
\begin{mathpar}
\inferrule*
{
	F = \cap_{S \in \overline{S}} {stateFields(D, S)}
}
{
	intersectFields(D.\overline{S}) = F
}

\inferrule*
{
	T_{ST} \in \{Shared, Owned, Unowned\}
		\\\\
	cdef(C) = \contract \ C  \{ \overline{[\asset] S \ F_S} \ \overline{M} \}
	\\\\
    F = \cap_{S \in \overline{F_S}} {stateFields(D, S)}
}
{
	intersectFields(C.T_{ST}) = F
}
\end{mathpar}

\framebox{\textbf{contract($T_C$)}}

The \textit{contract} function relates types with their contracts.
\begin{mathpar}
\inferrule*{ }{
	contract(T_C.T_{ST}) = T_C\\
}

\end{mathpar}

\fbox{$contractFields(C)$}

On contracts, contractFields gives the set of field declarations defined in all of a contract's states.
\begin{mathpar}
contractFields(C) \triangleq intersectFields(C.Unowned)
\end{mathpar}

\framebox{$fieldTypes_s(\Delta; \overline{T_{fs} \ f_s})$}

$fieldTypes$ gives the current types of the fields, given that some of them may be overridden in the current context.
\begin{mathpar}
\inferrule*{ }
	{fieldTypes_s(\cdot; \overline{T_{fs} \ f_s}) = \overline{T_{fs}}}

\inferrule*{ f \in \overline{f_s} \and fieldTypes_s(\Delta; \overline{T_{fs} \ f_s}) = \overline{T'}}
	{fieldTypes_s(\Delta, s.f: T; \overline{T_{fs} \ f_s}) = T, \overline{T'}}

\inferrule*{ }
	{fieldTypes_s(\Delta, b: T; \overline{T_{fs} \ f_s}) = fieldTypes_s(\Delta; \overline{T_{fs} \ f_s})}

\end{mathpar}

\subsubsection{Reasoning about types} \mbox{}

\framebox{$\ToPermission{T_{ST}}$}

ToPermission provides a conservative approximation of ownership to ensure that if ToPermission indicates non-ownership, the type is definitely disposable.
\begin{align*}
	\ToPermission{\overline{S}} &\triangleq \Owned & \ToPermission{\Unowned} &\triangleq \Unowned\\
	\ToPermission{p} &\triangleq \Owned & \ToPermission{\Shared} &\triangleq \Shared\\
	\ToPermission{\Owned} &\triangleq \Owned\\
\end{align*}

\framebox{$\possibleStates{\typeBounds}{T_C.T_{ST}} = T_{ST}$}
\begin{mathpar}
\inferrule*{ }
    {\possibleStates{\typeBounds}{T_C.\overline{S}} = \overline{S}}

\inferrule*{ P \in \{ \Owned, \Shared, \Unowned \} }
    {\possibleStates{\typeBounds}{T_C.P} = \stateNames{def(T_C)}}

\inferrule*{ \genericParamOpt{X}{p}{\generics{I}{T}.T_{ST}} \in \typeBounds }
    { \possibleStates{\typeBounds}{T_C.p} = \possibleStates{\typeBounds}{T_C.T_{ST}} }
\end{mathpar}

\framebox{$\mathbf{\isAsset{\typeBounds}{T}}$}
\begin{mathpar}
\inferrule*{
    \asset \ S \ \overline{F} \in \possibleStates{\typeBounds}{\generics{D}{T}.T_{ST}} }
    {\isAsset{\typeBounds}{\generics{D}{T}.T_{ST}}}

\inferrule*{
    \genericParam{\asset}{X}{p}{\generics{I}{T}.T_{ST_i}} \in \typeBounds }
    {\isAsset{\typeBounds}{X.T_{ST}}}
\end{mathpar}

\framebox{$\mathbf{\nonAssetState{\typeBounds}{ST}}$}
\begin{mathpar}
\inferrule*{ }
    {\nonAssetState{\typeBounds}{S \ \overline{F}} }
\end{mathpar}

\framebox{$\mathbf{\nonAsset{\typeBounds}{T}}$}
\begin{mathpar}
\inferrule*{
    \overline{\nonAssetState{\typeBounds}{\possibleStates{\typeBounds}{\generics{D}{T}.T_{ST}}}} }
    {\nonAsset{\typeBounds}{\generics{D}{T}.T_{ST}}}

\inferrule*{
    \genericParam{}{X}{p}{\generics{I}{T}.T_{ST_i}} \in \typeBounds }
    {\nonAsset{\typeBounds}{X.T_{ST}}}
\end{mathpar}

\framebox{$\mathbf{\disposable{\typeBounds}{T}}$}

The $disposable$ judgement describes reference types that are NOT owning references to assets. When applied to a set of states, all states must be disposable in order for the set to be disposable.

\begin{mathpar}
\inferrule*{\notOwned{T}}
    { \disposable{\typeBounds}{T_C.T_{ST}} }

\inferrule*{\maybeOwned{T_C.T_{ST}} \and
    \nonAsset{\typeBounds}{T_C.T_{ST}} }
    { \disposable{\typeBounds}{T_C.T_{ST}} }
\end{mathpar}

\framebox{$\mathbf{\notOwned{T}}$}
\begin{mathpar}
    \inferrule*{ }
    {\notOwned{T_C.Unowned}}

    \inferrule*{ }
    {\notOwned{T_C.Shared}}
    
    \inferrule*{ }
    {\notOwned{\Unit}}
    
\end{mathpar}

\framebox{$\mathbf{\maybeOwned{T}}$}
\begin{mathpar}
    \inferrule*{ T_{ST} <:_* \Owned }
    {\maybeOwned{T_C.T_{ST}}}

\inferrule*{ }
    {\maybeOwned{T_C.p}}
\end{mathpar}

Note that all permission variables could be owned, because we only have upper bounds on permissions.
Therefore, we must treat all permission variables as though they may be owned.

\framebox{$\bound{\typeBounds}{T}$}

\begin{mathpar}
\inferrule*{ }
    { \bound{\typeBounds}{\Unit} = \Unit }

\inferrule*{ \boundPerm{\typeBounds}{T_{ST}} = T_{ST}'}{
    \bound{\typeBounds}{\generics{D}{T}.T_{ST}} = \generics{D}{T}.T_{ST}'}

\inferrule*{ \genericParamOpt{X}{p}{\generics{I}{T}.T_{ST}'} \in \typeBounds \and
    \boundPerm{\typeBounds}{T_{ST}} = T_{ST}'}{
    \bound{\typeBounds}{X.T_{ST}} = \generics{I}{T}.T_{ST}'}
\end{mathpar}

\framebox{$\boundPerm{\typeBounds}{T_{ST}}$}

\begin{mathpar}
\inferrule*{ P \in \{ Owned, Shared, Unowned \} }{
    \boundPerm{\typeBounds}{P} = P
}

\inferrule*{ }{
    \boundPerm{\typeBounds}{\overline{S}} = \overline{S}
}

\inferrule*{ \genericParamOpt{X}{p}{\generics{I}{T}.T_{ST}} \in \typeBounds }{
    \boundPerm{\typeBounds}{p} = T_{ST}
}
\end{mathpar}

The bound of a type $T$ or permission or state $T_{ST}$ is the most specific concrete (i.e., non-variable) type (resp. permission or state) that is a supertype of $T$.
For example, if we know from a type parameter that the type variable $X$ must implement an interface $\generics{I}{T}$ and $p$ must be a subpermission of $\Owned$, then the bound of $X.p$ is $\generics{I}{T}.\Owned$.
However, a concrete type such as $\generics{C}{T}.S$ is already as specific as possible---therefore, its bound is itself.

\framebox{$\nonVar{T}$, $\nonVar{T_C}$, $\nonVar{T_{ST}}$}
\begin{mathpar}
\inferrule*{ }{
    \nonVar{\generics{D}{T}.T_{ST}}
}

\inferrule*{ }{
    \nonVar{\Unit}
}

\inferrule*{ }{
    \nonVar{\generics{D}{T}}
}

\inferrule*{ T_{ST} \in \{ Owned, Shared, Unowned \} }{
    \nonVar{T_{ST}}
}

\inferrule*{ }
    { \nonVar{\overline{S}} }
\end{mathpar}

\framebox{$\isVar{T}$, $\isVar{T_C}$, $\isVar{T_{ST}}$}
\begin{mathpar}
\inferrule*{ }{
    \isVar{X.T_{ST}}
}

\inferrule*{ }{
    \isVar{X}
}

\inferrule*{ }{
    \isVar{p}
}
\end{mathpar}

\framebox{$Var(T_G)$, $PermVar(T_G)$, $Perm(T)$}
\begin{align*}
	Var(\genericParamOpt{X}{p}{\generics{I}{T}.T_{ST}}) &\triangleq X\\
    PermVar(\genericParamOpt{X}{p}{\generics{I}{T}.T_{ST}}) &\triangleq p\\
\end{align*}
\begin{align*}
    Perm(T_C.T_{ST}) &\triangleq T_{ST}\\
    Perm(\Unit) &\triangleq \Unowned\\
\end{align*}

\framebox{$\transactionName{M}$, $\transactionName{M_{SIG}}$, $\transactionNames{\overline{M}}$}
\begin{align*}
    \transactionName{T \generics{m}{T_M}(\overline{T \trans T_{ST} x}) T_{ST} \trans \ T_{ST}} &\triangleq m\\
    \transactionName{T \generics{m}{T_M}(\overline{T \trans T_{ST} x}) T_{ST} \trans \ T_{ST} \ e} &\triangleq m\\
    \transactionName{\overline{T_{ST}>>T_{ST} f} T \generics{m}{T_M}(\overline{T \trans T_{ST} x}) T_{ST} \trans \ T_{ST}} &\triangleq m\\
    \transactionName{\overline{T_{ST}>>T_{ST} f} T \generics{m}{T_M}(\overline{T \trans T_{ST} x}) T_{ST} \trans \ T_{ST} \ e} &\triangleq m\\
    \end{align*}
\[
    \transactionNames{\overline{M}} \triangleq \overline{\transactionName{M}}
\]

\framebox{$\states{D}$}
\begin{mathpar}
\inferrule*{ \contract \ \generics{C}{T_G} \ \implements\ \generics{I}{T} \{ \overline{ST}\ \overline{M} \} }
    { \states{C} = \overline{ST} }

\inferrule*{ \interface \ \generics{I}{T_G} \{ \overline{ST}\ \overline{M_{SIG}} \} }
    { \states{I} = \overline{ST} }
\end{mathpar}

\framebox{$\stateNames{D}$, $\stateName{S}$}
\begin{mathpar}
\inferrule*{ }
    { \stateName{[\asset] \ S \ \overline{F}} = S }

\inferrule*{ \states{D} = \overline{S} }
    { \stateNames{D} = \overline{\stateName{S}} }
\end{mathpar}

\framebox{$\params{D}$, $\params{M}$}
\begin{mathpar}
\inferrule*{ def(C) = \contract \ \generics{C}{T_G} \ \implements\ \generics{I}{T} \{ \overline{ST}\ \overline{M} \} }
    { \params{C} = \overline{T_G} }

\inferrule*{ def(I) = \interface \ \generics{I}{T_G} \{ \overline{ST}\ \overline{M_{SIG}} \} }
    { \params{I} = \overline{T_G} }

\inferrule*{ }
    { \params{T \ \generics{m}{T_M}(\overline{T \trans T_{ST} \ x}) \ T_{ST_i} \trans \ T_{ST_f} \ e} = \overline{T_M} }

\inferrule*{ }
    { \params{\overline{T_{STs1}>>T_{STs2} f} \ T \ \generics{m}{T_M}(\overline{T \trans T_{ST} \ x}) \ T_{ST_i} \trans \ T_{ST_f} \ e} = \overline{T_M} }
\end{mathpar}

\framebox{$\implementOk{\typeBounds}{\generics{I}{T}}{M_{SIG}}$, $\implementOk{\typeBounds}{\generics{I}{T}}{ST}$ }
\begin{mathpar}
\inferrule*{
    \transaction{\typeBounds}{m}{\generics{I}{T}} = T_{ret}' \ \generics{m}{T_M'}(\overline{T' \trans T_{ST}' \ x}) \ T_{ST_i}' \trans \ T_{ST_f}'
    \\\\
    \overline{\subtype{\typeBounds}{T'}{T}} \and
    \overline{\subperm{\typeBounds}{T_{ST}}{T_{ST}'}} \and
    \subperm{\typeBounds}{T_{ST_i}'}{T_{ST_i}}
    \\\\
    \subperm{\typeBounds}{T_{ST_f}}{T_{ST_f}'} \and
    \subtype{\typeBounds}{T_{ret}}{T_{ret}'}
    }
    { \implementOk{\typeBounds}{\generics{I}{T}}{T_{ret} \ \generics{m}{T_M}(\overline{T \trans T_{ST} \ x}) \ T_{ST_i} \trans \ T_{ST_f}} }

\inferrule*{
    sdef(S, \generics{I}{T}) = \asset \ S
    }
    { \implementOk{\typeBounds}{\generics{I}{T}}{[\asset]\ S \ \overline{F} } }

\inferrule*{
    sdef(S, \generics{I}{T}) = S
    }
    { \implementOk{\typeBounds}{\generics{I}{T}}{S \ \overline{F} } }
\end{mathpar}

To check $\implementOk{\typeBounds}{\generics{I}{T}}{S}$, we only need to ensure that if our state is an asset, then the state we are implementing is also an asset.

\framebox{$\subsOk{\typeBounds}{T}{T_G}$}
\begin{mathpar}
\inferrule*{
    \subtype{\typeBounds}{\generics{D}{T_1}.T_{ST}}{\generics{I}{T_2}.T_{ST}'}}
    { \subsOk{\typeBounds}{\generics{D}{T_1}.T_{ST}}{\genericParam{\asset}{X}{p}{\generics{I}{T_2}.T_{ST}'}} }

\inferrule*{
    \subtype{\typeBounds}{\generics{D}{T_1}.T_{ST}}{\generics{I}{T_2}.T_{ST}'} \and
    \nonAsset{\typeBounds}{\generics{D}{T_1}.Owned} }
    { \subsOk{\typeBounds}{\generics{D}{T_1}.T_{ST}}{\genericParam{}{X}{p}{\generics{I}{T_2}.T_{ST}'}} }
\end{mathpar}

We can substitute a non-asset for an asset generic parameter, but not vice versa.
Note that, as we can use type variables without their corresponding permission variable (e.g., we can write $X.Owned$, not just $X.p$), we must check whether the generic parameter is an asset in \emph{any} state, not just its bound. Similarly, we must check if the type we pass is an asset in \emph{any} state, not just the one we pass.

\framebox{$\genericsOk{\typeBounds}{T_G}$}

$\genericsOk{\typeBounds}{T_G}$ expresses whether a use of a type parameter is suitable when the parameter must implement a particular interface.

\begin{mathpar}
\inferrule*{ \forall T \in \overline{T}, \isVar{T} \implies T \in Var(\typeBounds) \and
    \overline{\subsOk{\typeBounds}{T}{\params{I}}} \and
    \nonAsset{\typeBounds}{\generics{I}{T}.Owned}
    \\\\
    \forall T_G \in \typeBounds, \left( Var(T_G) = X \ \text{or} \ PermVar(T_G) = p \right) \implies T_G = \genericParam{}{X}{p}{\generics{I}{T}.T_{ST}}
    \\\\
    T_{ST} = \overline{S} \implies \forall S \in \overline{S}, S \in \stateNames{I} }
    { \genericsOk{\typeBounds}{\genericParam{}{X}{p}{\generics{I}{T}.T_{ST}}} }

\inferrule*{ \forall T \in \overline{T}, \isVar{T} \implies T \in Var(\typeBounds) \and
    \overline{\subsOk{\typeBounds}{T}{\params{I}}}
    \\\\
    \forall T_G \in \typeBounds, \left( Var(T_G) = X \ \text{or} \ PermVar(T_G) = p \right) \implies T_G = \genericParam{}{X}{p}{\generics{I}{T}.T_{ST}}
    \\\\
    T_{ST} = \overline{S} \implies \forall S \in \overline{S}, S \in \stateNames{I} }
    { \genericsOk{\typeBounds}{\genericParam{\asset}{X}{p}{\generics{I}{T}.T_{ST}}} }
\end{mathpar}

\framebox{$\substitute{T/T_G}{e}$}
\begin{mathpar}
\inferrule*{T_G = \genericParamOpt{X}{p}{\generics{I}{T_2}.T_{ST}'} }
    { \substitute{\generics{D}{T}.T_{ST}/T_G}{e} = [\generics{D}{T}/X][T_{ST} / p]e }

\inferrule*{ \overline{T} = T_1, T_2, \ldots, T_n \and
    \overline{T_G} = T_{G_1}, T_{G_2}, \ldots, T_{G_n} }
    { \substitute{\overline{T/T_G}}{e} = \left( \substitution{T_n/T_{G_n}} \circ \substitution{T_{n-1}/T_{G_{n-1}}} \circ \cdots \circ \substitution{T_1/T_{G_1}} \right) (e)}
\end{mathpar}

\framebox{$\transaction{\typeBounds}{\generics{m}{T_M}}{\generics{D}{T})}$}
\begin{mathpar}
\inferrule*{
    tdef(D, m) = M \and
    \overline{T_M} = \params{M} \and
    \overline{T_G} = \params{D}
    \\\\
    \overline{\subsOk{\typeBounds}{T}{T_G}} \and
    \overline{\subsOk{\typeBounds}{T_2}{T_M}} }
    { \transaction{\typeBounds}{\generics{m}{T_2}}{\generics{D}{T}} = \substitute{\overline{T_2/T_M}}{\substitute{\overline{T/T_G}}{M}} }
\end{mathpar}

\framebox{\textbf{merge($\Delta, \Delta') = \Delta''$}}
The \textit{merge} function computes a new context from contexts that resulted from branching. It ensures that ownership is consistent across both branches and takes the union of state sets for each variable.

For brevity, let $d \bnfdef x \bnfalt x.f$.

\begin{mathpar}
\inferrule*[right=Sym]{merge(\Delta; \Delta') = \Delta''} {merge(\Delta'; \Delta) = \Delta''}

\inferrule*[right=$\oplus$]{ merge(\Delta; \Delta') = \Delta'' }
	{merge(\Delta, d: T; \Delta', d: T') = \Delta'', d: (T \oplus T')}

\inferrule*[right=Dispose-disposable]{x \notin Dom(\Delta') \and merge(\Delta, \Delta') = \Delta'' \and \disposable{\typeBounds}{T}}
	{merge(\Delta, x: T; \Delta') = \Delta''}
	
\end{mathpar}

\begin{align*}
&T \oplus T \triangleq T\\
&T_C.Owned \oplus T_C.\overline{S} \triangleq T_C.Owned\\
&T_C.Shared \oplus T_C.Unowned \triangleq T_C.Unowned\\
&T_C.\overline{S} \oplus T_C.\overline{S'} \triangleq T_C.(S \cup S')\\
&\generics{C}{T}.T_{ST} \oplus \generics{I}{T}.T_{ST}' \triangleq \generics{I}{T}.(T_{ST} \oplus T_{ST}') \text{ if } def(C) = \contract \ \generics{C}{T_G} \ \implements\ \generics{I}{T} \{ \ldots \} \\
&\generics{D}{T}.T_{ST} \oplus \generics{D}{T}.T_{ST}' \triangleq \generics{D}{T}.(T_{ST} \oplus T_{ST}') \\
\end{align*}

%
%
%
%

\framebox{\textbf{$\funcArg{T_C.T_{STpassed}}{T_C.T_{STinput-decl}}{T_C.T_{SToutput-decl}}$}}

This function specifies the output permission for a function argument that started with a particular permission and was passed to a formal parameter with given initial and final permission specifications. The function is only defined for inputs that correspond with well-typed invocations.

\begin{mathpar}
\inferrule*[right=funcArg-owned-unowned]{
	\maybeOwned{T_C.T_{STpassed}}
}
{
	\funcArg{T_C.T_{STpassed}}{T_C.\Unowned}{T_C.T_{SToutput-decl}} = T_C.T_{STpassed}
}

\inferrule*[right=funcArg-shared-unowned]{
}
{
	\funcArg{T_C.Shared}{T_C.\Unowned}{T_C.T_{SToutput-decl}} = T_C.T_{Shared}
}

\inferrule*[right=funcArg-other]{
	 T_C.T_{STinput-decl} \neq \Unowned
}
{
	\funcArg{T_C.T_{STpassed}}{T_C.T_{STinput-decl}}{T_C.T_{SToutput-decl}} = T_C.T_{SToutput-decl}
}

\end{mathpar}

\framebox{\textbf{$\funcArgResidual{T_C.T_{STpassed}}{T_C.T_{STinput-decl}}{T_C.T_{SToutput-decl}}$}}

This function specifies the type of the reference that remains after an argument is passed to a function.
\begin{mathpar}
\inferrule*[right=FAR-OU]{
	\maybeOwned{T_C.T_{STpassed}}
}
{
	\funcArgResidual{T_C.T_{STpassed}}{T_C.\Unowned}{T_C.T_{SToutput-decl}} = T_C.T_{STpassed}
}

\inferrule*[right=FAR-SU]{
}
{
	\funcArgResidual{T_C.\Shared}{T_C.\Unowned}{T_C.T_{SToutput-decl}} = T_C.T_{\Shared}
}

\inferrule*[right=FAR-*]{
	 T_C.T_{STinput-decl} \neq \Unowned
}
{
	\funcArgResidual{T_C.T_{STpassed}}{T_C.T_{STinput-decl}}{T_C.T_{SToutput-decl}} = T_C.\Unowned
}

\end{mathpar}

\color{black}
\vspace{1 em}
\framebox{\textbf{Typing contexts $\Delta$} and \textbf{type bound contexts $\typeBounds$}}

The typing context includes both local variables and temporary field types. It is assumed that $\Delta$ and $\typeBounds$ are permuted as needed in order to apply the rules, but when a context is extended with a mapping, the new mapping replaces any previous mapping of the same variable.
$\typeBounds$ is simply a set of generic type variables T\textsubscript{G} as defined in the grammar in Figure~\ref{silica-syntax}.

\begin{tabular}{l r l l}
$\typeBounds$ 	& 	\bnfdef & 	$\cdot$\\
				&	\bnfalt & 	$\typeBounds$, T\textsubscript{G}\\
$\Delta$ 		& 	\bnfdef & 	$\cdot$\\
				&	\bnfalt &	$\Delta$,  x : T\\
				& 	\bnfalt &	$\Delta$,  s.f : T\\
\end{tabular}

\vspace{1em}
\framebox{$\mathbf{\subtype{\typeBounds}{T_1}{T_2}}$} \textbf{Subtyping}

\begin{mathpar}

\inferrule*[right=<:-Unit]{ }{\subtype{\typeBounds}{\Unit}{\Unit}}

\inferrule*[right=<:-Matching-defs]{ \typeBounds \vdash T_{ST} <:_* T_{ST}'}
    {\subtype{\typeBounds}{T_C.T_{ST}}{T_C.T_{ST}'}}

\inferrule*[right=<:-Matching-decls]{
    \typeBounds \vdash T_{ST} <:_* T_{ST}'}
    {\subtype{\typeBounds}{\generics{D}{T}.T_{ST}}{\generics{D}{T}.T_{ST}'}}

\inferrule*[right=<:-Implements-interface]{ \typeBounds \vdash T_{ST} <:_* T_{ST}'
    \\\\
    def(C) = \contract \ \generics{C}{T_G} \ \implements\ \generics{I}{T'} \{ \ldots \} }
    {\subtype{\typeBounds}{\generics{C}{T}.T_{ST}}{\substitute{\overline{T/T_G}}{\generics{I}{T'}.T_{ST}'}} }

\inferrule*[right=<:-Bound]{ \typeBounds \vdash T_{ST} <:_* T_{ST}' \and \bound{\typeBounds}{X.T_{ST}} = T_C.T_{ST}'}
    {\subtype{\typeBounds}{X.T_{ST}}{T_C.T_{ST}'}}
\end{mathpar}

\framebox{$\subperm{\typeBounds}{T_{ST1} }{T_{ST2}}$} \textbf{Subpermissions}

The subpermission judgment is ancillary to the subtyping judgment, and specifies when an expression with one mode can be used where one with the same contract but a potentially different mode is expected.
\begin{mathpar}

\inferrule*[right=$<:_*$-Refl]{ }{ \subperm{\typeBounds}{T_{ST}}{T_{ST}} }

\inferrule*[right=$<:_*$-Trans]{ \subperm{\typeBounds}{T_{ST_1}}{T_{ST_2}} \and \subperm{\typeBounds}{T_{ST_2}}{T_{ST_3}} }
    { \typeBounds \vdash T_{ST_1} <: T_{ST_3} }

\inferrule*[right=$<:_*$-Var]{ \boundPerm{\typeBounds}{p} = T_{ST}}{\subperm{\typeBounds}{p}{T_{ST}}}

\inferrule*[right=$<:_*$-S-S']{ \overline{S} \subseteq \overline{S'} }{\subperm{\typeBounds}{\overline{S} }{\overline{S'}}}

\inferrule*[right=$<:_*$-S-O]{ }{\overline{S} <:_* Owned}

\inferrule*[right=$<:_*$-O-*]{T_{ST2} \neq \overline{S} }{\subperm{\typeBounds}{Owned}{T_{ST2}}}

\inferrule*[right=$<:_*$-U-U]{ }{\subperm{\typeBounds}{T_{ST}}{Unowned} }

\end{mathpar}

\framebox{$\notsubperm{\typeBounds}{T_{ST}}{T_{ST}}$}
\begin{mathpar}
\inferrule*{ \subperm{\typeBounds}{T_{ST_2}}{T_{ST_1}} \and T_{ST_2} \neq T_{ST_1} }
    { \notsubperm{\typeBounds}{T_{ST_1}}{T_{ST_2}} }
\end{mathpar}

\framebox{$\sameOwnership{T_1}{T_2}$} \textbf{Ownership equality}
\begin{mathpar}
\inferrule*[right=$\approx$-Refl]{ }{ \sameOwnership{T_1}{T_1}}

\inferrule*[right=$\approx$-Sym]{\sameOwnership{T_1}{T_2}}{ \sameOwnership{T_2}{T_1}}

\inferrule*[right=$\approx$-Trans]{\sameOwnership{T_1}{T_2} \and \sameOwnership{T_2}{T_3}}{ \sameOwnership{T_1}{T_3}}

\inferrule*[right=$\approx$-O-O] { \maybeOwned{T_1} \and \maybeOwned{T_2} }
    { \sameOwnership{T_1}{T_2} }

\inferrule*[right=$\approx$-U-U] { \notOwned{T_1} \and \notOwned{T_2} }
    { \sameOwnership{T_1}{T_2} }

\end{mathpar}

\framebox{$\splitType{T_1}{T_2}{T_3}$} \textbf{Type splitting}

Type splitting specifies how ownership of objects can be shared among aliases. In \splitType{T_1}{T_2}{T_3}, there is initially one reference of type $T_1$; afterward, there are two references of type $T_2$ and $T_3$.

\begin{mathpar}
\
\inferrule*[right=Split-Unowned]{ T_C = contract(T)}{\splitType{T}{T}{T_C.Unowned}}

\inferrule*[right=Split-shared]{ }{\splitType{T_C.Shared}{T_C.Shared}{T_C.Shared}}

\inferrule*[right=Split-owned-shared]{ \nonAsset{\typeBounds}{T_C.T_{ST}} \and \maybeOwned{T_C.T_{ST}} }{\splitType{T_C.T_{ST}}{T_C.Shared}{T_C.Shared}}


\inferrule*[right=Split-unit]
{
}
{
	\splitType{\Unit}{\Unit}{\Unit}
}
\end{mathpar}

\framebox{\textoverline{S} \textbf{ok}} \textbf{Well-formed state sequence}\\
Well-formed states cannot have conflicts regarding ownership, and if any states are specified, then Owned would be redundant. There must be no duplicates in the list.

\begin{mathpar}
\inferrule*{ \overline{S} \ \textbf{statename-list} }{\overline{S} \  \ok}
\end{mathpar}

\framebox{\textoverline{S} \textbf{statename-list}} \textbf{Well-formed statename sequence}
\begin{mathpar}
\inferrule*{ }{
	S \ \textbf{statename-list}
}

\inferrule*{ \overline{S} \ \textbf{statename-list} \and S' \notin \overline{S}}{
	\overline{S}, S' \ \textbf{statename-list}
}
\end{mathpar}

\framebox{T \textbf{ok}} \textbf{Well-formed type}
\begin{mathpar}
\inferrule*{ \overline{S} \ \mathbf{ok}}{
	T_C.\overline{S} \ \mathbf{ok}
}

\inferrule*{ }{
    \ok{T_C.p}
}

\inferrule*{ }{
    \ok{\Unit}
}

\end{mathpar}

\subsection{Silica Dynamic Semantics}
\label{dynamic-semantics}
In order to express the dynamic semantics, we must first slightly extend the syntax. We introduce a notion of \textit{locations} $l$, which are used only in the formal semantics, not in the implementation, as a tool to prove soundness. Locations, introduced in FT \cite{Garcia:2014:FTP:2684821.2629609}, allow the formal model to track permissions of individual aliases to shared objects. Intuitively, locations typically correspond with local variables that reference objects; we record the permission each indirect reference holds in a context $\rho$.

\begin{align*}
o 				& \in \textsc{ObjectRefs} \\ 
l 			& \in \textsc{IndirectRefs}\\ 
\generics{C}{T}.S (\overline{o}) & \in \textsc{Objects}\\
\permVarMap 	& \in \textsc{PermissionVariables} \rightharpoonup \textsc{StateNames} \cup \{ \Owned, \Unowned, \Shared \}\\
\mu 			& \in \textsc{ObjectRefs} \rightharpoonup \textsc{Objects}\\ 
\rho 			& \in \textsc{IndirectRefs} \rightharpoonup \textsc{Values}\\ 
\end{align*}

\begin{tabular}{l r l l}
e 	& 	\bnfdef &	\ldots \bnfalt o\\
& 		\bnfalt & \phibox{e}{o} & (state-locking mutation detection container)\\
& 		\bnfalt & \psibox{e}{o} & (reentrancy detection container) \\

$s$ &	\bnfdef & 	\ldots \bnfalt $l$\\

$v$ &	\bnfdef &	\unit \bnfalt $o$ &   (values) \\

$\phi$ & \bnfdef & $\cdot$ \bnfalt $\phi$, o & (Objects that are state-locked)\\

$\psi$ & \bnfdef & $\cdot$ \bnfalt $\psi$, o & (Objects that have transactions that are on the stack)\\

$\mathbb{E}$ & \bnfdef & $\Box$ \\
			 & \bnfalt & let $x = \mathbb{E}$ in $e$\\
			 & \bnfalt & $\fbox{$\mathbb{E}$}_o$\\
			 & \bnfalt & $\fbox{$\mathbb{E}$}^o$\\
\end{tabular}

We extend the previous definition of static contexts so that programs can remain well-typed as they execute:

\begin{tabular}{l r l}
$b$ & $\in$ & $x$ \bnfalt $l$ \bnfalt $o$ \\
$\Delta$ & \bnfdef & $\overline{b : T}$ \\
\end{tabular}

We extend the previous \textit{T-lookup} rule to account for this extension:
\begin{mathpar}
\inferrule*[right=T-lookup]{\splitType{T_1}{T_2}{T_3}}{\ty{\typeBounds}{\Delta, b: T_1}{s}{b}{T_2}{\Delta, b: T_3}}
\end{mathpar}

The abstract machine maintains state $\langle \mu, \rho, \phi, \psi, \permVarMap \rangle$. For concision, we abbreviate that tuple as $\Sigma$ and refer to the components as $\Sigma_\mu$, etc. $\mu$ is used as an abbreviation for $\Sigma_\mu$ when there is only one $\Sigma$ in scope. The syntax \envUpdate{X}{\mu}{\Sigma} denotes $\langle X, \rho, \phi, \psi, \permVarMap \rangle$; $\mu$ denotes $\Sigma_\mu$ if it occurs in X.

The dynamic semantics are similar to the dynamic semantics of FT. However, in addition to heap $\mu$ and environment $\rho$, we keep a state-locking environment $\phi$, which is a set of references to objects that are state-locked. $\phi$ is modified for \code{is in} and checked as needed for safety, depending on the static types.

In the scope of an \code{if in} block, we must ensure that other aliases cannot be used to violate the state assumptions of the block. We only check for state modification, not for general field writes, since the typestate mechanism is restricted to nominal states rather than pertaining to all properties of objects.

Although Obsidian lacks the dynamic \code{assert} statement that can cause FT programs to get stuck, Obsidian's dynamic state locking can result in an expression getting stuck. While in the scope of a dynamic state lock, transitions to a different state through a reference that does not hold the ownership endowed by the dynamic check cause the semantics to get stuck.

Reentrancy is checked dynamically at object granularity. Object-level reentrancy aborts the current top-level transaction. However, as a special exception, private transactions are not protected from reentrancy (otherwise they would be useless). Reentrancy is checked via the $\psi$ context, which is a set of all objects that have transaction invocations on the stack.

\framebox{$\stepsTo{\Sigma, e}{\Sigma', e'}$}
\begin{mathpar}
\inferrule*[right=E-lookup]
{
}
{
	\stepsTo{\Sigma, l}{\Sigma, \rho(l)}
}

\inferrule*[right=E-let]
{
	l \notin dom(\rho)
}
{
	\stepsTo{\Sigma, \letExpr{x}{T}{v}{e}}{\envUpdate{\rho[l \mapsto v]}{\rho}{\Sigma}, [l/x]e}
}

\inferrule*[right=E-letCongr]
{
	\stepsTo{\Sigma, e_1}{\Sigma', e_1'}
}
{
	\stepsTo{\Sigma, \letExpr{x}{T}{e_1}{e_2}}{\Sigma', \letExpr{x}{T}{e_1'}{e_2}}
}

\inferrule*[right=E-new]
{
	o \notin dom(\mu) \and
    def(C) = \contract \ \generics{C}{T_G} \ \implements \ \generics{I}{T} \{ \ldots \}
}
{
    \stepsTo{\Sigma, \text{new } \generics{C}{T}.S(\overline{l})}{\envUpdate{\mu[o \mapsto \generics{C}{T}.S(\overline{\rho(l)})]}{\mu}{\Sigma}, o}
}

\inferrule*[right=E-field]
{
    \mu(\rho(l)) = \generics{C}{T}.S(\overline{s})
}
{
	\stepsTo{\Sigma, l.f_i}{\Sigma, s_i}
}

\inferrule*[right=E-fieldUpdate]
{
    \mu(\rho(l)) = \generics{C}{T}.S(\overline{l}) \and
	fields(C.S) = \overline{T \ f}
}
{
    \stepsTo{\Sigma, l.f_i := l'}{\envUpdate{\mu[\rho(l) \mapsto \generics{C}{T}.S(o_1, o_2, \ldots, o_{i-1}, \rho(l'), o_{i+1}, \ldots, o_{|l|})]}{\mu}{\Sigma}, \unit}
}
\end{mathpar}

The two invocation rules are complex. First, we look up the receiver in the heap to find its dynamic state. In invocations of public methods, we also must check that there is not already an invocation on the receiver in progress. Then, we make fresh indirect references $l_1'$ and $\overline{l_2'}$, which will be used to pass ownership to the transaction; residual ownership will remain in the original indirect references $l_1$ and $\overline{l_2}$. Then, since $e$ may use type parameters according to the declarations of $C$ and $tdef(C, m)$, we need to update $\permVarMap$ so that the variables are bound according to the invocation by resolving any type variables to concrete permissions or states (via \textit{lookup}). Then, we proceed by substitution in an environment that tracks an in-progress invocation on the object referenced by $l_1$, which is referenced by $\rho(l_1)$. This object reference must be removed from the context after evaluation, since the dynamic state tests in part function as sequence operators. To arrange this, the rule steps to an expression in a box. Afterward, evaluation will proceed inside the box until the contents of the box reaches a value, at which point the value is unboxed and the reference is removed from $\psi$.

We define $\lookup{\permVarMap}{T_{ST}}$ so that it looks up $T_{ST}$ in $\permVarMap$ if $T_{ST}$ is a variable, and otherwise, simply maps to $T_{ST}$.
This definition ensures that each permission variable maps to a concrete permission or state, rather than a permission variable, eliminating the need for recursive lookups.

\begin{mathpar}
\inferrule*[right=E-inv]
{
    \mu(\rho(l_1)) = \generics{C}{T}.S(\ldots) \and
    \rho(l_1) \notin \psi
	\\\\
    tdef(C, m) = T \ \generics{m}{T_M}(\overline{T_{C_x}.T_x \trans T_{xST} \ x}) T_{this} \ \trans \ T_{this}' \ e \and
	\\\\
	l_1' \nin dom(\rho) \and \overline{l_2' \nin dom(\rho)} \and
    \params{C} = \overline{T_D}
    \\\\
    \permVarMap' = \permVarMap, \overline{PermVar(T_D) \mapsto \lookup{\permVarMap}{Perm(T)}}, \overline{PermVar(T_M) \mapsto \lookup{\permVarMap}{Perm(M)}}
    \\\\
    \Sigma' = \envUpdate{l_1' \mapsto \rho(l_1)}{\overline{l_2' \mapsto \rho(l_2)}}\envUpdate{\permVarMap'}{\permVarMap}{\envUpdate{\psi, \rho(l_1)}{\psi}{\Sigma}}
}
{
    \stepsTo{\Sigma, l_1.\generics{m}{M}(\overline{l_2})}{\Sigma', \fbox{$\overline{[l_2'/x]}[l_1'/\this]e$}^{\rho(l_1)}}
}

\inferrule*[right=E-privInv]
{
    \mu(\rho(l_1)) = \generics{C}{T}.S(\ldots)
	\\\\
    tdef(C, m) = \overline{T_{C_f}.T_{fdecl} \trans T_{fST}} \ T \ \generics{m}{T_M}(\overline{T_{C_x}.T_{x} \trans T_{xST} \ x}) \  T_{this} \trans \ T_{this}' \ e
	\\\\
	l_1' \nin dom(\rho) \and \overline{l_2' \nin dom(\rho)} \and
    \params{C} = \overline{T_D}
    \\\\
    \permVarMap' = \permVarMap, \overline{PermVar(T_D) \mapsto \lookup{\permVarMap}{Perm(T)}}, \overline{PermVar(T_M) \mapsto \lookup{\permVarMap}{Perm(M)}}
    \\\\
    \Sigma' = \envUpdate{l_1' \mapsto \rho(l_1)}{\overline{l_2' \mapsto \rho(l_2)}}\envUpdate{\permVarMap'}{\permVarMap}{\envUpdate{\psi, \rho(l_1)}{\psi}{\Sigma}}
}
{
	\stepsTo{\Sigma, l_1.\generics{m}{M}(\overline{l_2})}{\Sigma', \overline{[l_2'/x]}[l_1'/\this]e}
}
\end{mathpar}

\begin{center}
{\fbox{\begin{minipage}{10cm}
\framebox{$\lookup{\permVarMap}{T_{ST}} = T_{ST}$}
\begin{mathpar}
    \inferrule*{ }
        { \lookup{\permVarMap}{p} = \permVarMap(p) }

    \inferrule*{ \nonVar{T_{ST}} }
        { \lookup{\permVarMap}{T_{ST}} = T_{ST} }
\end{mathpar}
\end{minipage}
}}
\end{center}

\begin{mathpar}
\inferrule*[right=E-$\nearrow_{owned}$]
{
    \mu(\rho(l)) = \generics{C}{T}.S'(\ldots)
}
{
    \stepsTo{\Sigma, l \nearrow_{owned} S(\overline{l'})}{\envUpdate{\mu[\rho(l) \mapsto \generics{C}{T}.S(\overline{\rho(l')})]}{\mu}{\Sigma}, \unit}
}
\end{mathpar}

In E-$\nearrow_{shared}$, a shared object can transition state if it is not statelocked or the transition does not actually change which state the object is in.
\begin{mathpar}
\inferrule*[right=E-$\nearrow_{shared}$]
{
    \mu(\rho(l)) = \generics{C}{T}.S'(\ldots) \and
	\rho(l) \notin \phi \lor S = S'
}
{
    \stepsTo{\Sigma, l \nearrow_{shared} S(\overline{l'})}{\envUpdate{\mu[\rho(l) \mapsto \generics{C}{T}.S(\overline{\rho(l')})]}{\mu}{\Sigma}, \unit}
}

\inferrule*[right=E-assert]
{
}
{
	\stepsTo{\Sigma, \assertExpr{s}{T_{ST}}}{\Sigma, \unit}
}

\inferrule*[right=E-IsIn-PermVar]
{
    \permVarMap(p) = T_{ST}
}
{
    \stepsTo{\Sigma, \text{if } l \text{ is in\textsubscript{P} } p \text{ then } e_1 \text{ else } e_2}{\Sigma, \text{if } l \text{ is in\textsubscript{P} } T_{ST} \text{ then } e_1 \text{ else } e_2}
}

\inferrule*[right=E-IsIn-Perm-Then]
{
    \text{Perm} \in \{ \Owned, \Unowned, \Shared \} \and
    \subperm{\cdot}{P}{\text{Perm}}
}
{
    \stepsTo{\Sigma, \text{if } l \text{ is in\textsubscript{P} } \text{Perm} \text{ then } e_1 \text{ else } e_2}{\Sigma, e_1 }
}

\inferrule*[right=E-IsIn-Perm-Else]
{
    \text{Perm} \in \{ \Owned, \Unowned, \Shared \} \and
    \notsubperm{\cdot}{\text{Perm}}{P}
}
{
    \stepsTo{\Sigma, \text{if } l \text{ is in\textsubscript{P} } \text{Perm} \text{ then } e_1 \text{ else } e_2}{\Sigma, e_2 }
}

\inferrule*[right=E-IsIn-Unowned]
{
}
{
    \stepsTo{\Sigma, \text{if } l \text{ is in\textsubscript{Unowned} } \overline{S} \text{ then } e_1 \text{ else } e_2}{\Sigma, e_2 }
}

\inferrule*[right=E-IsIn-Owned-Then]
{
	\mu(\rho(l)) = \generics{C}{T}.S'(\ldots) \and
    S' \in \overline{S}
}
{
    \stepsTo{\Sigma, \text{if } l \text{ is in\textsubscript{owned} } \overline{S} \text{ then } e_1 \text{ else } e_2}{\Sigma, e_1 }
}
\end{mathpar}

In E-IsIn-Shared-Then, we check $\rho(l) \notin \phi$ because the static semantics that correspond generate a temporary owning reference. If we didn't check, or we allowed nested checks, we would generate multiple distinct temporary owning references.

\begin{mathpar}
\inferrule*[right=E-IsIn-Shared-Then]
{
    \mu(\rho(l)) = \generics{C}{T}.S(\ldots) \and \rho(l) \notin \phi
}
{
	\stepsTo{\Sigma, \text{if } l \text{ is in\textsubscript{shared} } \overline{S} \text{ then } e_1 \text{ else } e_2}{\envUpdate{\phi, \rho(l)}{\phi}{\Sigma}, \fbox{$e_1$}_{\rho(l)} }
}

\inferrule*[right=E-IsIn-Else]
{
    \mu(\rho(l)) = \generics{C}{T}.S'(\ldots) \and S' \nin \overline{S}
}
{
	\stepsTo{\Sigma, \text{if } l \text{ is in}_p \  \overline{S} \text{ then } e_1 \text{ else } e_2}{ \Sigma, e_2 }
}

\inferrule*[right=E-disown]
{
}
{
	\stepsTo{\Sigma, \disown s}{\Sigma, \unit}
}

\inferrule*[right=E-pack]
{
}
{
	\stepsTo{\Sigma, \pack}{\Sigma, \unit}
}
\end{mathpar}

Silica maintains two contexts that FT does not. $\phi$ records which objects are currently state-locked because evaluation is currently inside the body of a dynamic state check. $\psi$ records which objects are receivers of invocations that are on the stack in order to detect reentrancy. In order to remove objects from these two contexts, we introduce two new constructs,  \phibox{e}{o} and \psibox{e}{o}. They each permit the boxed expression to first evaluate to a value, and then afterward remove the corresponding object reference from the appropriate context.

\begin{mathpar}
\inferrule*[right=T-state-mutation-detection]
{
    \ty{\typeBounds}{\Delta}{s}{e}{T}{\Delta'}
}
{
    \ty{\typeBounds}{\Delta}{s}{\phibox{e}{o}}{T}{\Delta'}
}
\inferrule*[right=T-reentrancy-detection]
{
    \ty{\typeBounds}{\Delta}{s}{e}{T}{\Delta'}
}
{
    \ty{\typeBounds}{\Delta}{s}{\psibox{e}{o}}{T}{\Delta'}
}

\inferrule*[right=E-box-$\phi$]
{
}
{
	\stepsTo{\Sigma, \phibox{$v$}{o}}{\envUpdate{(\phi \setminus o)}{\phi}{\Sigma}, v}
}

\inferrule*[right=E-box-$\psi$]
{
}
{
	\stepsTo{\Sigma, \psibox{$v$}{o}}{\envUpdate{(\psi \setminus o)}{\psi}{\Sigma}, v}
}

\inferrule*[right=E-box-$\phi$-congr]
{
	\stepsTo{\Sigma, e}{\Sigma', e'}
}
{
	\stepsTo{\Sigma, \phibox{$e$}{o}}{\Sigma', \phibox{$e'$}{o}}
}

\inferrule*[right=E-box-$\psi$-congr]
{
	\stepsTo{\Sigma, e}{\Sigma', e'}
}
{
	\stepsTo{\Sigma, \psibox{$e$}{o}}{\Sigma', \psibox{$e'$}{o}}
}

\end{mathpar}

\subsection{Silica Soundness and Asset Retention}
In this section, we outline the proof of type soundness. We also state the \textit{asset retention} theorem, which formally states the property that owned references to assets can only be dropped with the \disown operation. Full proofs can be found in the appendix.

As with FT, we extend type contexts to include object references $o$ and indirect references $l$.

\begin{align*}
b &\in x \; | \; l \; | \; o \\
\Delta & ::= \overline{b : T} \\
\end{align*}

Then we must use revised rules that take these references into account:

\begin{mathpar}
\inferrule*[right=T-Lookup]{\splitType{T_1}{T_2}{T_3}}{\ty{\typeBounds}{\Delta, b: T_1}{s}{b}{T_2}{\Delta, b: T_3}}
\end{mathpar}

We need typing for \unit:

\begin{mathpar}
\inferrule*[right=T-()]
{
}
{
	\ty{\typeBounds}{\Delta}{s}{\unit}{\Unit}{\Delta}
}

\end{mathpar}

We extend simple expressions to include indirect references:

\begin{tabular}{l r l l}
s & \bnfdef 	& x \\
&	\bnfalt 	& l \\
\end{tabular}

\textit{Global consistency} defines consistency among static and runtime environments. It requires that every indirect reference to an object in $\rho$ maps to a legitimate indirect reference in $\mu$ and that $\rho$ maps indirect references to appropriately-typed values. It also requires that every type in the static context correspond with an indirect reference in the indirect reference context. The permission variables must be available for lookup in $\permVarMap$ and map to concrete permissions or states. Finally, every object in the heap must have only compatible aliases, as expressed by \textit{reference consistency}.

\framebox{\ok{\typeBounds, \Sigma, \Delta}} Global Consistency
\begin{mathpar}
\inferrule*
{
	range(\rho) \subset dom(\mu) \cup \{ \unit \}  \\\\ 
	dom(\Delta) \subset dom(\rho) \cup dom(\mu) \\\\ 
	\{l \; | \; (l: \Unit) \in \Delta \} \subset \{ l \; | \; \rho(l) = \unit \} \\\\ 
	\{l \; | \; (l: boolean) \in \Delta \} \subset (\{ l \; | \; \rho(l) \in \{true, false \} \} \\\\ 
	\{l \; | \; (l: T) \in \Delta \} \subset \{ l \; | \; \rho(l) = o \} \\\\ 
    \PermVar{\typeBounds} \subset \{ p \; | \; \permVarMap(p) = T_{ST} \} \\\\ 
    \forall s : T_C.T_{ST} \in \Delta, \exists C, \overline{T} ~ \text{s.t.} ~ T_C = \generics{C}{T} \\\\ 
	\Sigma, \Delta \vdash \ok{dom(\mu)}
}
{
	\ok{\typeBounds, \Sigma, \Delta}
}
\end{mathpar}

\textit{Reference consistency} expresses the requirement that all aliases to a given object must be compatible with each other and consistent with the actual type of the object in the heap. It also requires that objects in the heap have the right number of fields. The fact that the fields must reference objects of appropriate type is implied by the requirement that all references must reference objects of types consistent with the reference types.

\framebox{$\Sigma, \Delta \vdash \ok{o}$} Reference Consistency
\begin{mathpar}
\inferrule*
{
    \mu(o) = \generics{C}{T}.S(\overline{o'}) \and |\overline{o'}| = |stateFields(C, S)| \\\\
	refTypes(\Sigma, \Delta, o) = \overline{D} \and
    \subtype{\cdot}{\generics{C}{T}}{\overline{D}} \\\\
    \forall T_1, T_2 \in \overline{D}, \compatibleTypes{T_1}{T_2} \text{ or } StateLockCompatible(T_1, T_2)
}
{
	\Sigma, \Delta \vdash \ok{o}
}
\\
\text{where}
\\
StateLockCompatible(T_1, T_2) \triangleq o \in \Sigma_\phi \land ((i \neq j) \implies owned(T_i) \land T_j = \generics{C}{T}.Shared)
\end{mathpar}

StateLockCompatible is defined in order to allow the original Shared alias (via which the state was checked) to co-exist with the state-specifying reference. This would not normally be permitted, but is safe while $o \in \Sigma_\phi$ because the shared alias cannot be used to mutate typestate while that is the case.

The relation \compatibleTypes{}{} defines compatibility between pairs of aliases:

\framebox{\compatibleTypes{T_1}{T_2}} Alias Compatibility
\begin{mathpar}

\inferrule*[right=SymCompat]{\compatibleTypes{T_2}{T_1}}
	{\compatibleTypes{T_1}{T_2}}

\inferrule*[right=SubtypeCompat]{\compatibleTypes{\generics{C}{T}.T_{ST}}{\generics{C}{T}.T_{ST}'}}
    {\compatibleTypes{\generics{C}{T}.T_{ST}}{\generics{I}{T}.T_{ST}}}

\inferrule*[right=ParamCompat]{\compatibleTypes{\generics{C}{T}.T_{ST}}{\generics{C}{T}.T_{ST}'}}
    {\compatibleTypes{\generics{C}{T}.T_{ST}}{\generics{C}{T'}.T_{ST}'}}

\inferrule*[right=UUCompat]{ }
    {\compatibleTypes{T_C.Unowned}{T_C.Unowned}}

\inferrule*[right=USCompat]{ }
    {\compatibleTypes{T_C.Unowned}{T_C.Shared}}

\inferrule*[right=UOCompat]{ }
    {\compatibleTypes{T_C.Unowned}{T_C.Owned}}

\inferrule*[right=UStatesCompat]{ }
    {\compatibleTypes{T_C.Unowned}{T_C.\overline{S}}}

\inferrule*[right=SCompat]{ }
    {\compatibleTypes{T_C.\Shared}{T_C.\Shared}}

\end{mathpar}

$refTypes$ computes the set of types of referencing aliases to a given object in a given static and dynamic context. References may be from fields of objects in the heap; from indirect references; and from variables in the static context. Fields of objects in the heap include both fields whose types are specified in their declarations and fields whose types are overridden temporarily in the static context $\Delta$.

\begin{align*}
refTypes(\Sigma, \Delta, o) &= refFieldTypes(\mu, o) \concat envTypes(\Sigma, o) \concat ctxTypes(\Delta, o) \\
refFieldTypes (\mu, o) &= \concat_{o' \in dom(\mu)} [T_i \; |\; \mu(o') = \generics{C}{T}.S(\overline{o}), ft(\Delta, \generics{C}{T}, S) = \overline{T f}\text{ and } o \in \overline{o}]\\
ft (\Delta, C, S) &= [T f \; | \; s.f: T \in \Delta] \cup (allFields(C, S) \setminus [T f \; | \; s.f \in dom(\Delta)])\\
envTypes (\Sigma, \Delta, o) &= \concat_{l \in dom(\rho)} [T \; |\; \Sigma_\rho(l) = o \text{ and } (l: T) \in \Delta]\\
ctxTypes (\Delta, o) &= [T \; | \; o : T \in \Delta] \\
\end{align*}

\begin{definition}[$<^l$]
    A context $\Delta$ is l-stronger than a context $\Delta'$ with respect to $\typeBounds$ and $\Sigma$ (denoted $\lStronger{\Delta}{\typeBounds, \Sigma}{\Delta'}$) if and only if for all $l' : T' \in \Delta'$,  there is some $T$ and $l$ such that $\subtype{\typeBounds}{T}{T'}$, $l : T \in \Delta$, \sameOwnership{T}{T'}, and $\Sigma'_\rho(l) = \Sigma'_\rho(l')$.
\end{definition}

Note that this differs from the definition of $<^l$ given in FT. Here, the indirect reference in the two contexts need not match. This weakening is necessary because permissions are split in invocations. After an invocation, the new expression typechecks in a context that may  retain some of the permissions from the original reference (whereas the remaining permissions were transferred to the invocation, i.e. retained in a \textit{different} indirect reference). This means that although the permissions are still at least as strong in the new context, the strongest permission may be held by a different indirect reference than in the original.

\begin{corollary}[$<^l$-reflexivity]
For all $\Delta, \typeBounds, \Sigma$, $\lStronger{\Delta}{\typeBounds, \Sigma'}{\Delta}$.
\end{corollary}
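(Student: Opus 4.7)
The plan is to prove this by direct unfolding of the definition of $<^l$: for every binding $l' : T' \in \Delta$ on the right, I must exhibit an $l$ and $T$ in the left-hand context satisfying (i) $l : T \in \Delta$, (ii) $\subtype{\typeBounds}{T}{T'}$, (iii) $\sameOwnership{T}{T'}$, and (iv) $\Sigma'_\rho(l) = \Sigma'_\rho(l')$. Since both contexts are the same $\Delta$, the natural witnesses are $l := l'$ and $T := T'$. Conditions (i) and (iv) are then immediate from the assumption and the fact that $\Sigma'_\rho(l') = \Sigma'_\rho(l')$, and condition (iii) follows directly from $\approx$-Refl.

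The only step requiring a small side argument is (ii), reflexivity of $<:$. I would establish a helper lemma $\subtype{\typeBounds}{T}{T}$ by cases on the syntactic form of $T$. For $T = \Unit$, rule <:-Unit applies directly. For $T = T_C.T_{ST}$ (whether $T_C = \generics{D}{T}$ or $T_C = X$), I would apply <:-Matching-defs (or <:-Matching-decls in the declared-type case), both of which reduce the obligation to $\typeBounds \vdash T_{ST} <:_* T_{ST}$; this in turn is precisely the rule $<:_*$-Refl. So reflexivity of subtyping is essentially a one-step derivation from the axioms already present in the subpermission and subtype judgments.

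Assembling the corollary is then immediate: fix an arbitrary $l' : T' \in \Delta$, instantiate the existential with $l'$ and $T'$, and discharge the four obligations using the facts above. There is no real obstacle here; the only thing to be careful about is that the definition of $<^l$ as restated in the paper (the ``weakened'' form that decouples the indirect reference on the left from the one on the right) still admits the trivial diagonal witness, which it does because $l = l'$ certainly satisfies $\Sigma'_\rho(l) = \Sigma'_\rho(l')$. Thus the corollary is essentially a sanity check that the weakened definition of $<^l$ has not lost reflexivity, and the proof is a one-paragraph diagonal argument relying only on reflexivity of $<:_*$ and of $\approx$.
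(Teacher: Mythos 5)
Your proposal is correct and takes essentially the same route as the paper, which simply notes the corollary is a trivial application of the definition because $<:$ is reflexive; your diagonal instantiation $l := l'$, $T := T'$ together with $<:_*$-Refl and $\approx$-Refl is exactly that argument spelled out. Your side lemma on reflexivity of $<:$ by cases on the form of $T$ matches the paper's own Subtyping reflexivity lemma in the appendix.
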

\begin{proof}
Trivial application of the definition because $<:$ is reflexive.
\end{proof}

\begin{lemma}[Canonical forms]
If \ty{\typeBounds}{\Delta}{s}{v}{T}{\Delta'}, then:
\begin{enumerate}
    \item If $T = \generics{C}{T'}.\overline{S}$, then $v = \generics{C}{T'}.S(\overline{s})$.
    \item If $T = \Unit$, then $v = \unit$.
\end{enumerate}
\end{lemma}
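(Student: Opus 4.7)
The plan is to proceed by a simple case analysis on the structure of the value $v$. The grammar gives $v \bnfdef \unit \mid o$, so there are exactly two shapes to consider, and in each the result drops out by inversion on the single typing judgment.

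First I would handle $v = \unit$. Inversion on $\ty{\typeBounds}{\Delta}{s}{\unit}{T}{\Delta'}$ shows that only \textsc{T-()} applies, forcing $T = \Unit$. This immediately discharges part~1 (its hypothesis $T = \generics{C}{T'}.\overline{S}$ cannot hold) and yields part~2 since $v = \unit$.

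Next I would handle $v = o$. Because the syntax extension sets $b \in x \mid l \mid o$, the only rule assigning a type to an object reference is the extended \textsc{T-Lookup}, so $o : T_1 \in \Delta$ with $\splitType{T_1}{T}{T_3}$ for some $T_3$. For part~2, inspecting the split rules shows that $T = \Unit$ can only arise via \textsc{Split-unit}, forcing $T_1 = \Unit$; but no rule ever assigns $\Unit$ to an object reference in a well-formed context (this is guaranteed by global consistency, which underlies any invocation of canonical forms inside the progress proof), so this subcase is impossible. For part~1, examining \textsc{Split-Unowned}, \textsc{Split-shared}, and \textsc{Split-owned-shared} in turn reveals that $T = \generics{C}{T'}.\overline{S}$ requires $T_1 = \generics{C}{T'}.T_{ST}$ for some $T_{ST}$ whose possible states include $\overline{S}$. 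I would then invoke reference consistency from $\ok{\typeBounds, \Sigma, \Delta}$: since $o \in dom(\mu)$, we have $\mu(o) = \generics{C''}{T''}.S(\overline{o'})$ with the contract, parameters, and state compatible with every alias type, in particular with $T_1$. Matching those fields gives the claimed canonical form $\generics{C}{T'}.S(\overline{s})$ with $S \in \overline{S}$.

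The main obstacle is the mild notational mismatch between the value syntax ($v = o$ is an object reference) and the conclusion's syntax ($v = \generics{C}{T'}.S(\overline{s})$, which is the shape of a heap object). The bridge is $\mu$: part~1 has to be read as characterizing $\mu(o)$, and lining that up with the static type at $o$ needs reference consistency so the contract name, the parameter vector, and the nominal state all agree. Once this bridge is in place the argument is mechanical — no induction is required, since values are non-recursive — and the remaining subtyping bookkeeping is routine.
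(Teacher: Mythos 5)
Your proposal is correct, and for part~2 and the inversion skeleton it is the same ``inspection of the typing rules'' that the paper's one-line proof gestures at. Where you genuinely diverge is in part~1: you observe that a value $v$ is syntactically either $\unit$ or an object reference $o$, never a term of the shape $\generics{C}{T'}.S(\overline{s})$ (which is an element of \textsc{Objects}, i.e.\ a heap entity), so the conclusion of part~1 can only be read as a statement about $\mu(o)$, and you import $\ok{\typeBounds, \Sigma, \Delta}$ and reference consistency to bridge the static type at $o$ to the dynamic state of $\mu(o)$. The paper instead keeps Canonical Forms purely syntactic and trivial, and houses exactly the semantic content you supply in the separate Memory Consistency lemma (``if $l : \generics{C}{T'}.S \in \Delta$ then $\mu(\rho(l)) = \generics{C}{T'}.S(\overline{s})$''), whose proof goes through global and reference consistency just as yours does. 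Both routes are sound; yours makes the lemma self-contained at the cost of a hypothesis ($\ok{\typeBounds, \Sigma, \Delta}$) that the stated lemma does not carry, while the paper's division of labor keeps the lemma statement clean but leaves part~1 meaningful only under the charitable reading you identify. One small correction to your split analysis: since \textsc{Split-Unowned} is the only split rule whose second output can be a state set, it forces $T_1 = \generics{C}{T'}.\overline{S}$ exactly, not merely a type whose possible states include $\overline{S}$; this only strengthens your argument.
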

\begin{proof}
	By inspection of the typing rules.
\end{proof}

\begin{lemma}[Memory consistency]
\label{memory-consistency}
If $\ok{\typeBounds, \Sigma, \Delta}$, then:
\begin{enumerate}
    \item If $l : \generics{C}{T'}.S \in \Delta$, then $\exists o. \rho(l) = o$ and $\mu(o) = \generics{C}{T'}.S(\overline{s})$.
	\item If $\ty{\typeBounds}{\Delta}{s}{e}{T}{\Delta'}$, and $l$ is a free variable of $e$, then $l \in dom(\rho)$.
\end{enumerate}
\end{lemma}

\begin{theorem}[Progress]
If e is a closed expression and $\ty{\typeBounds}{\Delta}{s}{e}{T}{\Delta'}$, then at least one of the following holds:
\begin{enumerate}
\item e \text{is a value}
\item For any environment $\Sigma$ such that $\ok{\typeBounds, \Sigma, \Delta}$, $\stepsTo{\Sigma, e}{\Sigma', e'}$ for some environment $\Sigma'$
\item $e$ is stuck at a bad state transition --- that is, $e = \mathbb{E}[l \nearrow_{Shared} S(\overline{s})]$ where $\mu(\rho(l)) = \generics{C}{T'}.S'(\ldots)$, $S \neq S'$, $\rho(l) \in \phi$, and $\ty{\typeBounds}{\Delta}{s}{l}{\generics{C}{T'}.Shared}{\Delta'}$.
\item $e$ is stuck at a reentrant invocation -- that is, $e = \mathbb{E}[l.m(\overline{s})]$ where $\mu(\rho(l)) = \generics{C}{T'}.S(\ldots)$, $\rho(l) \in \psi$.
\item $e$ is stuck in a nested dynamic state check -- that is, $e = \mathbb{E}[\ifExpr{s}{shared}{T\textsubscript{ST}}{e_1}{e_2}]$ where  $\mu(\rho(l)) = \generics{C}{T}.S(\ldots)$ and $\rho(l) \in \phi$.

\end{enumerate}
\end{theorem}
\begin{proof}
The proof, which proceeds by induction on the typing derivation, can be found in the appendix.
\end{proof}

\begin{theorem}[Preservation]
If e is a closed expression, $\ty{\typeBounds}{\Delta}{s}{e}{T}{\Delta''}$, \ok{\typeBounds, \Sigma, \Delta}, and \stepsTo{\Sigma, e}{\Sigma', e'} then for some $\Delta'$, $\ty{\typeBounds'}{\Delta'}{s}{e'}{T'}{\Delta'''}$, \ok{\typeBounds', \Sigma', \Delta'}, and  $\lStronger{\Delta'''}{\typeBounds, \Sigma'}{\Delta''}$.
\end{theorem}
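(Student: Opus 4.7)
The plan is to prove preservation by induction on the derivation of $\stepsTo{\Sigma, e}{\Sigma', e'}$, doing case analysis on the evaluation rule applied. For each rule, we invert the typing derivation for $e$ (after permuting $\Delta$ as needed), exhibit witnesses for $\Delta'$, $T'$, and $\Delta'''$, and then separately verify the three conclusions: well-typedness of $e'$, global consistency $\ok{\typeBounds', \Sigma', \Delta'}$, and $\lStronger{\Delta'''}{\typeBounds, \Sigma'}{\Delta''}$. In most non-congruence cases we will take $\typeBounds' = \typeBounds$ and $T' = T$; type bounds only grow when entering a method body (via lookup of $\permVarMap$), and even then $\typeBounds$ is extended conservatively so the previous judgments still derive.

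Before the case analysis we need several supporting lemmas. A substitution lemma is essential: if $\ty{\typeBounds}{\Delta, x : T_x}{s}{e}{T}{\Delta'}$ and $\ty{\typeBounds}{\Delta}{s}{l}{T_x}{\Delta''}$, then $\ty{\typeBounds}{\Delta''}{s}{[l/x]e}{T}{\Delta'''}$ for some $\Delta'''$ that is $l$-stronger than the appropriate residual of $\Delta'$. A parallel permission-substitution lemma handles the type-parameter substitution performed by invocation. We also need a heap-update lemma showing that if $\Sigma, \Delta \vdash \ok{o}$ and we overwrite $\mu(o)$ with an object whose fields have compatible types to those recorded in $\Delta$, then reference consistency is preserved; this is the engine that justifies E-fieldUpdate and both transition rules. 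Finally, we need straightforward weakening and strengthening lemmas for $\Delta$ and $\typeBounds$, plus the observation that $<^l$ is reflexive and transitive so that the congruence cases compose.

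For the non-congruence cases the routine steps run as follows. \textsc{E-lookup} follows by inspection of \textsc{T-lookup} together with \texttt{Split-Unowned} giving the residual type. \textsc{E-let} introduces a fresh $l$ and uses substitution; \textsc{E-new} allocates a fresh $o$ and packages the field types into $\Delta'$, with reference consistency holding because the only aliases to $o$ are $o$ itself and the newly created ones of compatible type. \textsc{E-field} is discharged by either \textsc{T-this-field-def} or \textsc{T-this-field-ctxt} combined with the fact that $\mu(\rho(l))$ has fields typed in $\Delta$. The dynamic state test cases and the $\phi$- and $\psi$-box cases simply rely on \textsc{T-state-mutation-detection} and \textsc{T-reentrancy-detection}, together with the observation that removing $o$ from $\phi$ or $\psi$ at the end only weakens the side conditions for reference consistency (in particular \emph{StateLockCompatible}). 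Congruence cases (\textsc{E-letCongr}, \textsc{E-box-$\phi$-congr}, \textsc{E-box-$\psi$-congr}) are immediate from the IH followed by reapplication of the outer typing rule, using transitivity of $<^l$ to compose the l-stronger conclusion.

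The main obstacles are the two invocation rules \textsc{E-inv} and \textsc{E-privInv}. Here we must simultaneously substitute fresh indirect references $l_1', \overline{l_2'}$ for \texttt{this} and the formals, and substitute type and permission parameters into the method body; the body was typechecked under a $\typeBounds$ that contained the formals' generic parameters, so we must rebuild its derivation under the concrete instantiation. The argument hinges on showing that the updated $\permVarMap'$ agrees with $\typeBounds' = \typeBounds, \overline{T_G}, \overline{T_M}$ used for substitution, that \textit{funcArg} and \textit{funcArgResidual} together describe exactly how the original context splits between the transferred and residual portions, and that the split preserves reference consistency of every receiver and argument object. For the private case we must additionally check that the temporary field overrides in $\Delta$ satisfy the method's field pre-specifications and that the post-specifications can be re-established as part of $\Delta'''$. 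The subtlety of splitting ownership between caller and callee (which is why $<^l$ weakens the FT version to allow the strongest reference to move to a different indirect reference) is precisely what makes these the hardest cases, and carrying it through correctly will consume most of the proof effort; all remaining cases are essentially routine once the substitution and heap-update lemmas are in hand.
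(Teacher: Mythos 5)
Your proposal is correct and follows essentially the same route as the paper: induction on the derivation of the step relation with case analysis on each evaluation rule, supported by the same lemma inventory (substitution, permission-variable/interface substitution, split compatibility, weakening/strengthening, $<^l$-reflexivity and the l-stronger substitution/consistency lemmas), and it correctly identifies the two invocation rules—with their interplay of \textit{funcArg}/\textit{funcArgResidual}, fresh indirect references, and type-parameter instantiation—as the cases bearing most of the weight. No gaps in the approach; what remains is only the mechanical elaboration of each case as the paper carries out.
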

\begin{proof}
The proof, which proceeds by induction on the dynamic semantics, can be found in the appendix.
\end{proof}

Informally, \textit{asset retention} is the property that if a well-typed expression $e$ takes a step in an appropriate dynamic context, then owning references to assets are only dropped if $e$ is a \disown operation.
\begin{theorem}[Asset retention]

Suppose:
\begin{enumerate}
	\item \ok{\typeBounds, \Sigma, \Delta}
	\item $o \in dom(\mu)$
	\item $refTypes(\Sigma, \Delta, o) = \overline{D}$
	\item \ty{\typeBounds}{\Delta}{s}{e}{T}{\Delta'}
	\item $e$ is closed
	\item \label{steps} \stepsTo{\Sigma, e}{\Sigma', e'}
	\item $refTypes(\Sigma', \Delta', o) = \overline{D'}$
    \item \label{nd} $\exists T' \in \overline{D}$ such that $\nonDisposable{\typeBounds}{T'}$
    \item \label{d} $\forall T' \in \overline{D'}: \disposable{\typeBounds}{T'}$
\end{enumerate}

Then in the context of a well-typed program, either $\nonDisposable{\typeBounds}{T}$ or $e = \mathbb{E}[\disown \ s]$, where $\rho(s) = o$.
\end{theorem}
\begin{proof}
The proof, which proceeds by induction on the typing derivation, can be found in the appendix.
\end{proof}

\section{Evaluation}
\label{sec:evaluation}
Beyond the formative user studies that helped us design the language, we wanted to ensure that Obsidian can be used to specify typical smart contracts in a concise and reasonable way. Therefore, we undertook two case studies to assess the extent to which Obsidian is suitable for implementing appropriate smart contracts.

Obsidian's type system has significant implications for the design and implementation of software relative to a traditional object-oriented language. We were interested in evaluating several research questions using the case studies:

\begin{enumerate}[label=(RQ\arabic*)]
\item Does the aliasing structure in real blockchain applications allow use of ownership (and therefore typestate)? If so, what are the implications on architecture? Or, alternatively, do so many objects need to be \Shared that the main benefit of typestate is that it helps ensure that programmers insert dynamic tests when required?
\item To what extent does the use of typestate reduce the need for explicit state checks and assertions, which would otherwise be necessary?
\item Can realistic systems be built with Obsidian?
\item To what extent do realistic systems have constructs that are naturally expressed as states and assets?
\end{enumerate}

To address the research questions above, we were interested in implementing a blockchain application in Obsidian. To obtain realistic results, we looked for a domain in which:
\begin{itemize}
\item Use of a blockchain platform for the application provided significant advantages over a traditional, centralized platform.
\item We could engage with a real client to ensure that the requirements were driven by real needs, not by convenience of the developer or by the appropriateness of language features.
\item The application seemed likely to be representative in structure of a large class of blockchain applications.
\end{itemize}

\subsection{Case study 1: Parametric Insurance}
\subsubsection{Motivation}

A summary of this case study based on an earlier version of the language was previously described by \citet{Koronkevich18:Obsidian}\footnote{Unpublished draft.}, but here we provide substantially more analysis and the implementation addresses more use cases, and the previous manuscript was not published in an archival form.

In \textit{parametric insurance}, a buyer purchases a claim, specifying a \textit{parameter} that governs when the policy will pay out. The parameter is chosen so that whether conditions satisfy the parameter can be determined objectively. For example, a farmer might buy drought insurance as parametric insurance, specifying that if the soil moisture index (a property derived from weather conditions) in a particular location drops below $m$ in a particular time window, the policy should pay out. The insurance is then priced according to the risk of the specified event. In contrast, traditional insurance would require that the farmer summon a claims adjuster, who could exercise subjective judgment regarding the extent of the crop damage. Parametric insurance is particularly compelling in places where the potential policyholders do not trust potential insurers, who may send dishonest or unfair adjusters. In that context, potential policyholders may also be concerned with the stability and trustworthiness of the insurer: what if the insurer pockets the insurance premium and goes bankrupt, or otherwise refuses to pay out legitimate claims? 

In order to build a trustworthy insurance market for farmers in parts of the world without trust between farmers and insurers, the World Bank became interested in deploying an insurance marketplace on a blockchain platform. We partnered with the World Bank to use this application as a case study for Obsidian. We used the case study both to \textit{evaluate} Obsidian as well as to \textit{improve} Obsidian, and we describe below results in both categories.

The case study was conducted primarily by an undergraduate who was not involved in the language design,  with assistance and later extensions by the language designers. The choice to have an undergraduate do the case study was motivated by the desire to learn about what aspects of the language were easy or difficult to master. It was also motivated by the desire to reduce bias; a language designer studying their own language might be less likely to observe interesting and important problems with the language. 

We met regularly with members of the World Bank team to ensure that our implementation would be consistent with their requirements. We began by eliciting requirements, structured according to their expectations of the workflow for participants. 

\subsubsection{Requirements}
The main users of the insurance system are \textbf{farmers}, \textbf{insurers}, and \textbf{banks}. Banks are necessary in order to mediate financial relationships among the parties. We assume that farmers have local accounts with their banks, and that the banks can transfer money to the insurers through the existing financial network. Basic assumptions of trust drove the design:
\begin{itemize}
\item Farmers trust their banks, with whom they already do business, but do not trust insurers, who may attempt to pocket their premiums and disappear without paying out policies.
\item Insurers do not trust farmers to accurately report on the weather; they require a trusted weather service to do that. They do trust the implementation of the smart contracts to pay out claims when appropriate and to otherwise refund payout funds to the insurers at policy expiration.
\item There exists a mutually trusted weather service, which can provide signed evidence of weather events.
\end{itemize}

\subsubsection{Design}
Because blockchains typically require all operations to be deterministic and all transactions to be invoked externally, we derived the following design:
\begin{itemize}
\item Farmers are responsible for requesting claims and providing acceptable proof of a relevant weather event in order to receive a payout.
\item Insurers are responsible for requesting refunds when policies expire.
\item A trusted, off-blockchain weather service is available that can, on request, provide signed weather data relevant to a particular query.
\end{itemize}
An alternative approach would involve the weather service handling weather subscriptions. The blockchain insurance service would emit events indicating that it subscribed to particular weather data, and the weather service would invoke appropriate blockchain transactions when relevant conditions occurred. However, this design is more complex and requires trusting the weather service to push requests in a timely manner. Our design is simpler but requires that policyholders invoke the claim transactions, passing appropriate signed weather records.


Our design of the application allows farmers to start the exchange by requesting bids from insurers. Then, to offer a bid, insurers are required to specify a premium and put the potential payout in escrow; this ensures that even if the insurer goes bankrupt later, the policy can pay out if appropriate. If the farmer chooses to purchase a policy, the farmer submits the appropriate payment.

Later, if a weather event occurs that would justify filing a claim, a farmer requests a signed weather report from the weather service. The farmer submits a claim transaction to the insurance service, which sends the virtual currency to the farmer. The farmer could then present the virtual currency to their real-world bank to enact a deposit.

\subsubsection{Results}

The implementation consists of 545 non-comment, non-whitespace lines of Obsidian code. For simplicity, the implementation is limited to one insurer, who can make one bid on a policy request. An overview of the invocations that are sent and results that are received in a typical successful bid and claim scenario is shown in Fig. \ref{case-study-diagram}. All of the objects reside in the blockchain except as noted. The full code for this case study is available online\footnote{https://github.com/mcoblenz/Obsidian/tree/master/resources/case\_studies/Insurance}.

\begin{figure}
\includegraphics[width=\textwidth]{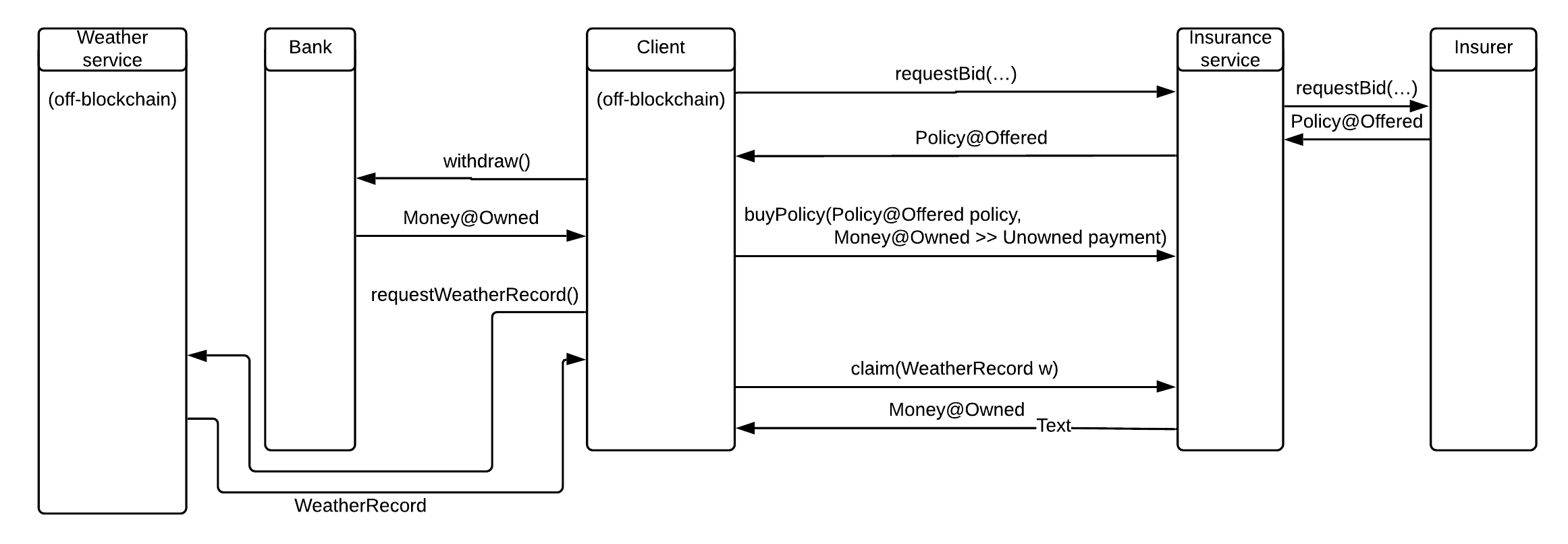}
\caption{Invocations sent and results returned in a typical successful bid/claim scenario.}
\label{case-study-diagram}
\end{figure}

We made several observations about Obsidian. In some cases, we were able to leverage our observations to improve the language. In others, we learned lessons about the implications of the type system on application design and architecture.

First, in the version of the language that existed when the case study started, Obsidian included an \textit{explicit ownership transfer operator} \code{<-}. In that version of the language, passing an owned reference as an argument would only transfer ownership to the callee if the argument was decorated with \code{<-}. For example, \code{deposit(<-m)} would transfer ownership of the reference \code{m} to the \code{deposit} transaction, but \code{deposit(m)} would be a type error because \code{deposit} requires an \Owned reference. While redundant with type information, we had included the \code{<-} operator because we thought it would reduce confusion, but we noticed while using the language (both in the case study and in smaller examples) that its presence was onerous. We removed it, which was a noticeable simplification.

Second, in that version of the language, \code{asset} was a property of contracts. We noticed in the insurance case study that it is more appropriate to think of \code{asset} as a property of states, since some states own assets and some do not. In the case study, an instance of the \code{PolicyRecord} contract holds the insurer's money (acting as an escrow) while a policy is active, but after the policy is expired or paid, the contract no longer holds money (and therefore no longer needs to itself be an asset). It is better to not mark extraneous objects as assets, since assets must be explicitly discarded, and only assets can own assets. Each of those requirements imposes a burden on the programmer. This burden can be helpful in detecting bugs, but should not be borne when not required. We changed the language so that \code{asset} applies to individual states rather than only entire contracts. 

Third, the type system in Obsidian has significant implications on architecture. In a traditional object-oriented language, it is feasible to have many aliases to an object, with informal conventions regarding relationships between the object and the referencing objects. UML also distinguishes between composition, which implies ownership, and aggregation, which does not, reinforcing the idea that ownership in the sense in which Obsidian uses it is common and useful in typical object-oriented designs. Because of the use of ownership in Obsidian, using typestate with a design that does not express ownership sometimes requires refining the design so that it does. In the case study, we found applying ownership useful in refining our design. For example, when an insurance policy is purchased, the insurance service must hold the payout virtual currency until either the policy expires or it is paid. While the insurance service holds the currency, it must associate the currency for a policy with the policy itself. Does the policy, therefore, own the \code{Money}? If so, what is the relationship between the client, who purchased the policy and has certain kinds of control over it, and the \code{Policy}, which cannot be held by the (untrusted) client? We resolved this question by adding a new object, the \code{PolicyRecord}. A \code{PolicyRecord}, which is itself \Owned by the insurance service, has an \Unowned reference to the \code{Policy} and an \Owned reference to a \code{Money} object. This means that \code{PolicyRecord} is an \code{asset} when it is active (because it owns \code{Money}, which is itself an \code{asset}) but \code{Policy} does not need to be an asset. We found that thinking about ownership according to the Obsidian type system helped us refine and clarify our design. Without ownership, we might have chosen a less carefully-considered design.

It is instructive to compare the Obsidian implementation to a partial Solidity implementation, which we wrote for comparison purposes. Figure \ref{solidity-comparison} shows an example of why parts of the Obsidian implementation are substantially shorter. Note how the Solidity implementation requires repeated run time tests to make sure each function only runs when the receiver is in the appropriate state. Obsidian code only invokes those transactions when the \code{Policy} object is in appropriate state; the runtime executes an equivalent dynamic check to ensure safety when the transactions are invoked from outside Obsidian code. Also, the Solidity implementation has \code{cost} and \code{expirationTime} fields in scope when inappropriate, so they need to be initialized repeatedly. In the Obsidian implementation, they are only set when the object is in the \code{Offered} state. Finally, the Solidity implementation must track the state manually via \code{currentState} and the \code{States} type, whereas this is done automatically in the Obsidian implementation. However, Solidity supports some features that are convenient and lead to more concise code: the Solidity compiler automatically generates getters for public fields, whereas Obsidian requires the user to write them manually, and built-in arrays can be convenient. However, the author of the Solidity implementation must be very careful to manage money manually; any money that is received by transactions must be accounted for, or the money will be stuck in the contract forever. Solidity also lacks a math library; completing the implementation would require us to provide our own square root function (which we use to compute distances).

We showed our implementation to our World Bank collaborators, and they agreed that it represents a promising design. There are  various aspects of the full system that are not part of the case study, such as properly verifying cryptographic signatures of weather data, communicating with a real weather service and a real bank, and supporting multiple banks and insurers. However, in only a cursory review, one of the World Bank economists noticed a bug in the Obsidian code: the code always approved a claim requests even if the weather did not justify a claim according to the policy's parameters. This brings to light two important observations. First, Obsidian, despite being a novel language, is readable enough to new users that they were able to understand the code. Second, type system-based approaches find particular classes of bugs, but other classes of bugs require either traditional approaches or formal verification to find.

\begin{figure}
\begin{subfigure}[b]{.49\textwidth}
\begin{lstlisting}[numbers=none, framexleftmargin=0em, xleftmargin=-2em, basicstyle=\tiny\ttfamily]
contract Policy {
  state Offered {
    int cost;
    int expirationTime;
  }

  state Active;
  state Expired;
  state Claimed;

  Policy@Offered(int c, int expiration) {
    ->Offered(cost = c, expirationTime = expiration);
  }

  transaction activate(Policy@Offered >> Active this) {
    ->Active;
  }
  
  

  transaction expire(Policy@Offered >> Expired this) {
    ->Expired;
  }
} 
#\vspace{4em}#  
\end{lstlisting}
\subcaption{Obsidian implementation of a Policy contract.}
\end{subfigure}
\begin{subfigure}[b]{.49\textwidth}
\begin{lstlisting}[numbers=none, framexleftmargin=0em, xleftmargin=0em, basicstyle=\tiny\ttfamily, language=Solidity]
contract Policy {
  enum States {Offered, Active, Expired}
  States public currentState;
  uint public cost;
  uint public expirationTime;
  
  constructor (uint _cost, uint _expirationTime) public {
    cost = _cost;
    expirationTime = _expirationTime;
    currentState = States.Offered;
  }
  
  function activate() public {
    require(currentState == States.Offered,
        "Can't activate Policy not in Offered state.");
    currentState = States.Active;
    cost = 0;
    expirationTime = 0;
  }
  
  function expire() public {
    require(currentState == States.Offered,
        "Can't expire Policy not in Offered state.");
    currentState = States.Expired;
    cost = 0;
    expirationTime = 0;
  }
}
\end{lstlisting}
\subcaption{Solidity implementation of a Policy contract.}
\end{subfigure}
\caption{Comparison between Obsidian and Solidity implementations of a Policy contract from the insurance case study.}
\label{solidity-comparison}
\end{figure}

\subsection{Case study 2: Shipping}
\subsubsection{Motivation}
Supply chain tracking is one of the commonly-proposed applications for blockchains \citep{SupplyChain}. As such, we were interested in what implications Obsidian's design would have on an application that tracks shipments as they move through a supply chain. We collaborated with partners at IBM Research to conduct a case study of a simple shipping application. Our collaborators wrote most of the code, with occasional Obsidian help from us. We updated the implementation to use the polymorphic LinkedList contract, which became available only after the original implementation was done.

\subsubsection{Results}
The final implementation\footnote{https://github.com/laredo/Shipping} consists of 141 non-comment, non-whitespace, non-printing lines of Obsidian code. We found it very encouraging that they were able to write the case study with relatively little input from us, especially considering that Obsidian is a research prototype with extremely limited documentation. Although this is smaller than the insurance case study, we noticed some interesting relationships between the Obsidian type system and object-oriented design.

Fig. \ref{shipping-1} summarizes an early design of the Shipping application\footnote{This version corresponds with git commit 8106e406e8ca005f8878dea5ac78e54b439fe509 in the Shipping repository.}, focusing on a particular ownership problem. The implementation does not compile; the compiler reports three problems. First, \code{LegList}'s \code{arrived} transaction attempts to invoke setArrival via a reference of type \code{Leg@Unowned}; this is disallowed because setArrival changes the state of its receiver, which is unsafe through an \Unowned reference. Second, \code{append} in \code{LegList} takes an \Unowned leg to append, but uses it to transition to the \code{HasNext} state, which requires an \Owned object. Third, \code{Transport}'s \code{depart} transaction attempts to append a new \code{Leg} to its \code{legList}. It does so by calling the \code{Leg} constructor, which takes a \Shared \code{Transport}. But calling this constructor passing an owned reference (\code{this}) causes the caller's reference to become \Shared, not Owned, which is inconsistent with the type of \code{depart}, which requires that \code{this} be owned (and specifically in state \code{InTransport}).

Fig. \ref{shipping-2} shows the final design of the application. This version passes the type checker. Note how a \code{LegList} contains only \code{Arrived} references to \code{Leg} objects. One \code{Leg} may be \code{InTransit}, but that is owned by the \code{Transport} when it is in an appropriate state (also \code{InTransit}). Each \code{Leg} has an \Unowned reference to its \code{Transport}, allowing the \code{TransportList} to own the \code{Transport}. A \code{TransportList} likewise only contains objects in \code{Unload} state; one \code{Transport} in \code{InTransport} state is referenced at the \code{Shipment} level. 

\begin{figure}
\includegraphics[width=\textwidth]{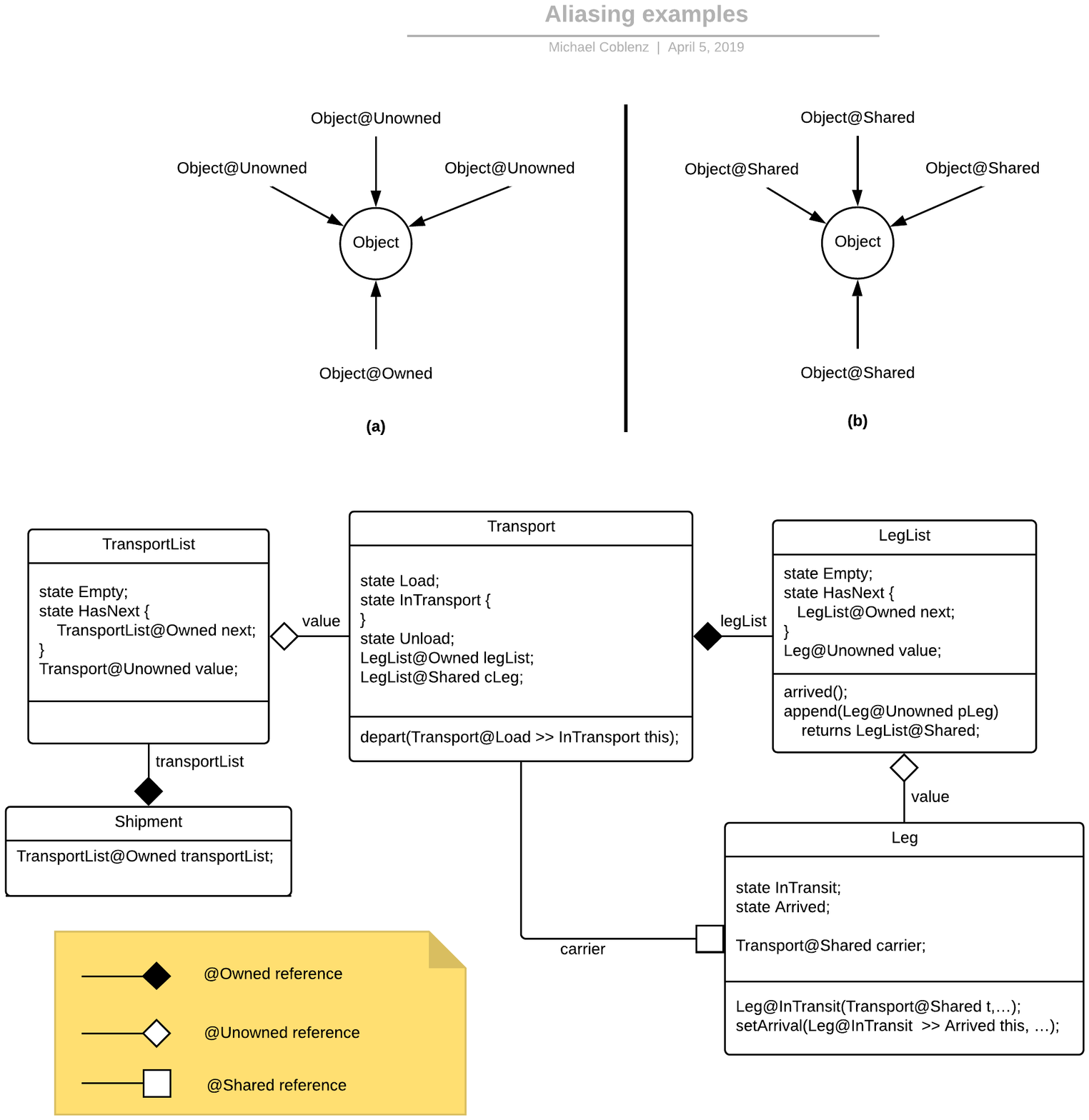}
\caption{Initial design of the Shipping application (which does not compile).}
\label{shipping-1}
\bigskip

\includegraphics[width=\textwidth]{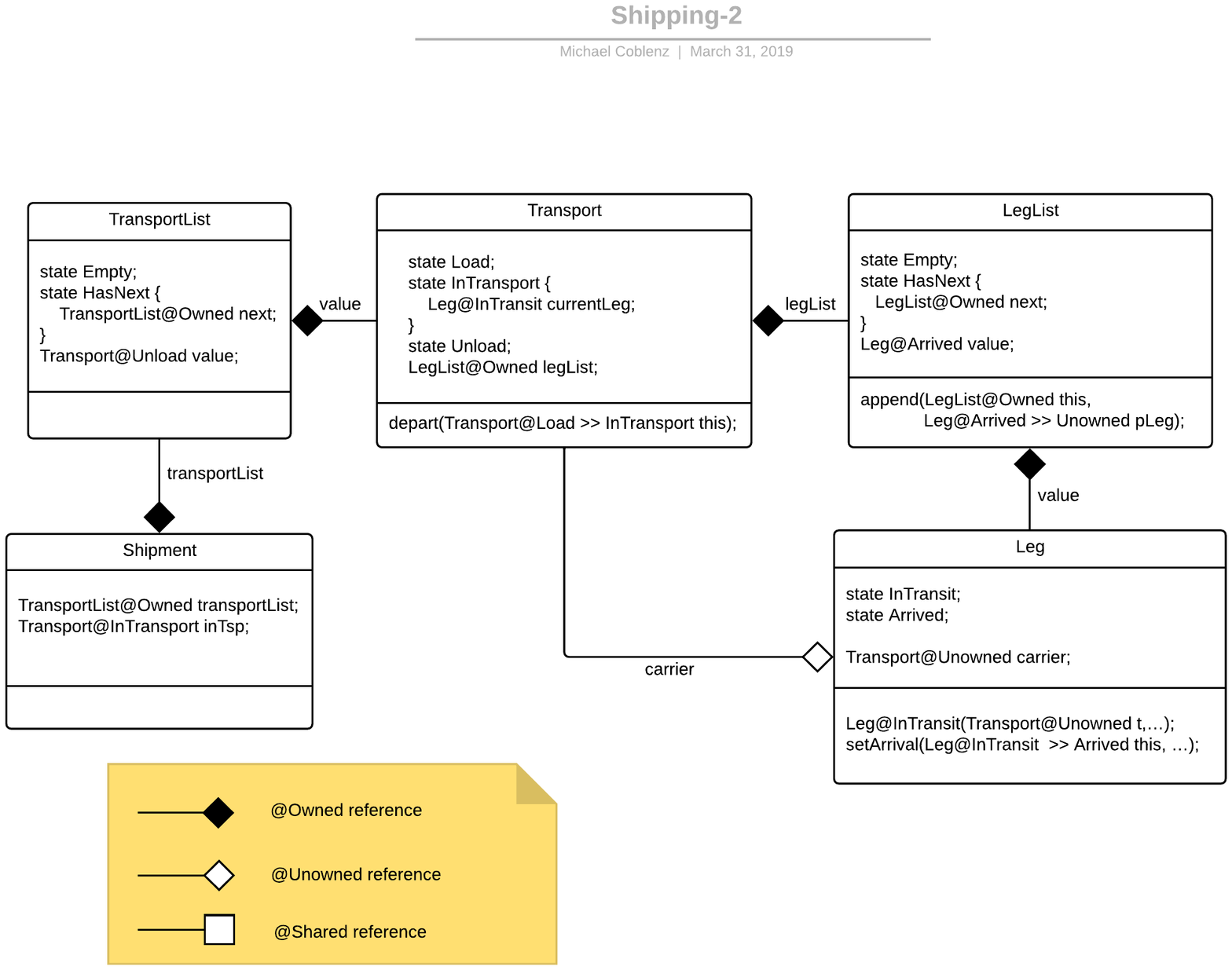}
\caption{Revised design of the Shipping application.}
\label{shipping-2}
\end{figure}

We argue that although the type checker forced the programmer to revise the design, the revised design is better. In the first design, collections (\code{TransportList} and \code{LegList}) contain objects of dissimilar types. In the revised design, these collections contain only objects in the same state. This change is analogous to the difference between dynamically-typed languages, such as LISP, in which collections may have objects of inconsistent type, and statically-typed languages, such as Java, in which the programmer reaps benefits by making collections contain objects of consistent type. The typical benefit is that when one retrieves an object from the collection, there is no need to case-analyze on the element's type, since all of the elements have the same type. This means that there can be no bugs that arise from neglecting to case-analyze, as can happen in the dynamically-typed approach. 

The revised version also reflects a better division of responsibilities among the components. For example, in the first version (Fig. \ref{shipping-1}), \code{LegList} is responsible for both maintaining the list of legs as well as recording when the first leg arrived. This violates the \textit{single responsibility principle} \citep{martin2003agile}. In the revised version, \code{LegList} only maintains a list of \code{Leg} objects; updating their states is implemented elsewhere.

One difficulty we noticed in this case study, however, is that sometimes there is a conceptual gap between the relatively low-level error messages given by the compiler and the high-level design changes needed in order to improve the design. For example, the first error message in the initial version of the application shown in Fig. \ref{shipping-1} is: \code{Cannot invoke setArrival on a receiver of type Leg@Owned; a receiver of type Leg@InTransit is required.} The programmer is required to figure out what changes need to be made; in this case, the \code{arrived} transaction should not be on \code{LegList}; instead, \code{LegList} should only include legs that are already in state \code{Arrived}. We hypothesize that more documentation and tooling may be helpful to encourage designers to choose designs that will be suitable for the Obsidian type system.

We also implemented a version of the Shipping application in Solidity, which required 197 non-comment, non-whitespace lines, so the Obsidian version took 72\% as many lines as the Solidity version. This makes the Obsidian version The translation was straightforward; we translated each state precondition to a runtime assertion, and we flattened fields in states to the top (contract) level. Although the types no longer express the structural constraints that exist in the Obsidian version, the general structure of the code and data structures was identical except that we implemented containers with native Solidity arrays rather than linked lists. As with the prior case study, the additional length required for Solidity was generally due to runtime checks of properties that were established statically in the Obsidian version.

\subsection{Case Study Summary}
We asked six research questions above. We return to them now and summarize what we found.
\begin{enumerate}
\item The aliasing structure in the blockchain applications we implemented \textit{does} allow use of ownership and typestate. However, it forces the programmer to carefully choose an ownership structure, rather than using ad hoc aliases. This can be both restrictive but also result in a simpler, cleaner design.
\item Implementing smart contracts in Solidity typically requires a couple of lines of assertions for every function in smart contracts that are designed to use states. This makes the Obsidian code more concise, although some features that Obsidian currently lacks (such as auto-generated getters) improve concision in Solidity.
\item We and our collaborators were able to successfully build nontrivial smart contracts in Obsidian, despite the fact that Obsidian is a research prototype without much documentation. 
\item The applications that we chose benefited from representation with assets and states, since they represented objects of value and the transactions that were possible at any given time depended on the state of the object. Of course, not every application of smart contracts has this structure.
\end{enumerate}

\section{Future Work}
\label{sec:future-work}
Obsidian is a promising smart contract language, but it should not exist in isolation. Authors of applications for blockchain systems (known as \textit{distributed applications}, or \textit{Dapps}) need to be able to integrate smart contracts with front-end applications, such as web applications. Typically, developers need to invoke smart contract transactions from JavaScript. We would like to build a mechanism for JavaScript applications to safely invoke transactions on Obsidian smart contracts. One possible approach is to embed Obsidian code in JavaScript to enable native interaction, coupled with a mapping between Obsidian objects and JSON.

Obsidian currently has limited IDE support; we plan to improve our existing extension for Visual Studio Code so that programmers can receive live feedback on errors while they edit Obsidian code.

In the current implementation, Obsidian clients invoke all remote transactions sequentially. This means that another remote user might run intervening transactions, violating assumptions of the client program. More discussion of approaches to address this can be found in \S \ref{client-programs}.

The type system-oriented approach in Obsidian is beneficial for many users, but it does not lead to verification of domain-specific program properties. In the future, it would be beneficial to augment Obsidian with a verification mechanism so that users can prove relevant properties of their programs formally.

We plan to conduct a summative user study in which we compare Obsidian to Solidity. We hope to show that programmers using Obsidian successfully complete relevant programming tasks without inserting bugs that the Solidity programmers accidentally insert.

Finally, Obsidian currently only supports Hyperledger Fabric. We would like to target Ethereum as well in order to demonstrate generality of the language as well as to enable more potential users to use the language.

\section{Conclusions}
\label{sec:conclusions}
With Obsidian we have shown how:
\begin{itemize}
\item Typestate can be combined with assets to provide relevant safety properties for smart contracts, including asset retention.
\item A unified approach for smart contracts and client programs can provide safety properties that cannot be provided using the approaches that are currently in use.
\item A core calculus can encode key features of Obsidian and form a sound foundation for the language.
\item Applications can be built successfully with typestate and assets, with useful implications on architecture and object-oriented design.
\end{itemize}

Obsidian represents a promising approach for smart contract programming, including sound foundations and an implementation that enables real programs to execute on a blockchain platform. By formalizing useful safety guarantees and providing them in a programming language that was designed with user input, we hope to significantly improve safety of smart contracts that real programmers write. By combining techniques from human-computer interaction, traditional principles of type system design, and evaluation via case studies, we can obtain a language that is much better than if we used only one of those techniques alone.

\begin{acks}                           
  We appreciate the help of Eliezer Kanal at the Software Engineering Institute, who helped start this project; Jim Laredo, Rick Hull, Petr Novotny, and Yunhui Zheng at IBM, who provided useful technical and real-world insight; and David Gould and Georgi Panterov at the World Bank, with whom we worked on the insurance case study.
  
  This material is based upon work supported by the
  \grantsponsor{GS100000001}{National Science
    Foundation}{http://dx.doi.org/10.13039/100000001} under Grants
  \grantnum{GS100000001}{CNS-1423054} and \grantnum{GS100000001}{CCF-1814826}, by the U.S. Department of Defense, and by \grantsponsor{Ripple}{Ripple}{https://www.ripple.com}. In addition, the first author is supported by an IBM PhD Fellowship. 
  Any opinions, findings, and
  conclusions or recommendations expressed in this material are those
  of the author and do not necessarily reflect the views of the
  National Science Foundation.  
\end{acks}

\bibliography{obsidian}

\appendix
\section{Appendix}
\subsection{Main Soundness Theorems}

\begin{theorem}[Progress]
If e is a closed expression and $\ty{\typeBounds}{\Delta}{s}{e}{T}{\Delta'}$, then at least one of the following holds:
\begin{enumerate}
\item e \text{is a value}
\item For any environment $\Sigma$ such that $\ok{\typeBounds, \Sigma, \Delta}$, $\stepsTo{\Sigma, e}{\Sigma', e'}$ for some environment $\Sigma'$
\item $e$ is stuck at a bad state transition --- that is, $e = \mathbb{E}[l \nearrow_{Shared} S(\overline{s})]$ where $\mu(\rho(l)) = \generics{C}{T'}.S'(\ldots)$, $S \neq S'$, $\rho(l) \in \phi$, and $\ty{\typeBounds}{\Delta}{s}{l}{\generics{C}{T'}.Shared}{\Delta'}$.
\item $e$ is stuck at a reentrant invocation -- that is, $e = \mathbb{E}[l.m(\overline{s})]$ where $\mu(\rho(l)) = \generics{C}{T'}.S(\ldots)$, $\rho(l) \in \psi$.
\item $e$ is stuck in a nested dynamic state check -- that is, $e = \mathbb{E}[\ifExpr{s}{shared}{T\textsubscript{ST}}{e_1}{e_2}]$ where  $\mu(\rho(l)) = \generics{C}{T}.S(\ldots)$ and $\rho(l) \in \phi$.

\end{enumerate}
\end{theorem}

\begin{proof} By induction on the derivation of \ty{\typeBounds}{\Delta}{s}{e}{T}{\Delta'}.
\begin{description}
\item[Case: T-lookup.]
	$e = b$. We case-analyze on $b$.
	\begin{description}
		\item [Subcase: $b = x$]. Then $b$ is not closed. Contradiction.
        \item [Subcase: $b = l$]. Suppose $\ok{\typeBounds, \Sigma, \Delta}$. By global consistency, $l \in dom(\Sigma_\rho)$. Then \stepsTo{b}{\Sigma_\rho(l)} by rule E-lookup.
		\item [Subcase: $b = o$]. Then $b$ is a value.
	\end{description}
\item[Case: T-let.]
	Because $e$ is closed, $e = \text{let } x: T = e_1 \text{ in } e_2$. Otherwise, since $e$ is closed, $e_1$ is closed, and the induction hypothesis applies to $e_1$. This leaves several cases:
		\begin{description}
			\item[Case: $e_1$ is a value $v$] The properties of the context permit creating a fresh indirect reference $l$ that is not in $\rho$. By E-let, $\stepsTo{\Sigma, \text{let } x: T = v \text{ in } e}{\envUpdate{\rho[l \mapsto v]}{\rho}{\Sigma}, [l/x]e}$.
			\item[Case: $\stepsTo{\Sigma, e_1}{\Sigma', e_1'}$.] Then E-letCongr applies, and $\stepsTo{\Sigma, e}{\Sigma', \text{let } x: T = e_1' \text{ in } e_2}$.
			\item[Case: $e_1$ is stuck] with $e_1 = \mathbb{E}[l \nearrow_{Shared} S(\overline{s})]$.
			 Then
			 \begin{align*}
				e &= \text{let } x: T = \mathbb{E}[l \nearrow_{Shared} S(\overline{s})] \text{ in } e_2\\
				e &= \mathbb{E'}[l \nearrow_{Shared} S(\overline{s})]
  			 \end{align*}
			 \item[Case: $e_1$ is stuck] with $e_1 = \mathbb{E}[\mathbb{E}[l.m(\overline{s})]$.
			 Then
			 \begin{align*}
				e &= \text{let } x: T = \mathbb{E}[\mathbb{E}[l.m(\overline{s})] \text{ in } e_2\\
				e &= \mathbb{E'}[\mathbb{E}[l.m(\overline{s})]
  			 \end{align*}
		\end{description}
		
\item[Case: New-state.]
    Because $e$ is closed, $e = \text{new } \generics{C}{T}.S(\overline{l})$ (any variables $x$ would be free, so all parameters must be locations). The properties of the context permit creating a fresh object reference $o$ that is not in $\mu$. $\overline{l}$ are a free locations of $e$, so by memory consistency (\ref{memory-consistency}), $\overline{l} \in dom(\rho)$, and $\overline{\rho(l)}$ is well-defined.
    By E-new, $\stepsTo{\Sigma, \text{new } \generics{C}{T}.S(\overline{l})}{\envUpdate{\mu[o \mapsto \generics{C}{T}.S(\overline{\rho(l)})]}{\mu}{\Sigma}, o}$.

\item[Case: T-this-field-def.]
	Because $e$ is closed, $e = l.f_i$. By assumption:
	\begin{enumerate}
		\item \ty{\typeBounds}{\Delta}{l}{l.f_i}{T}{\Delta'}.
		\item $\ok{\typeBounds, \Sigma, \Delta}$
	\end{enumerate}
    By memory consistency, $\rho(l) = o$ for some $o$ and $\mu(o) = \generics{C}{T'}.S(\overline{s'})$. Note that $1 \leq i \leq |\overline{s'}|$ by well-typedness of $s.f_i$ and global consistency. By rule E-field, $\stepsTo{\Sigma, s.f_i}{\Sigma, s'_i}$.

\item[Case: T-this-field-ctxt.]
	Identical to the \textit{This-field-def} case.

\item[Case: T-field-update.]
    Because $e$ is closed, $e = l.f_i := l'$.
    By memory consistency, $\mu(\rho(l)) = \generics{C}{T'}.S(\overline{l''})$.
    $fields(\generics{C}{T'}.S)$ is ambiently available.
    By E-fieldUpdate, \stepsTo{\Sigma, l.f_i := l'}{\envUpdate{\mu[\rho(l) \mapsto \generics{C}{T'}.S(l''_1, l''_2, \ldots, l''_{i-1}, l', l''_{i+1}, \ldots, l''_{|l''|})]}{\mu}{\Sigma}, \unit}.
    
\item[Case: T-inv.]
    Because $e$ is closed, $e = l_1.\generics{m}{T}(\overline{l_2})$.
    By memory consistency, $\mu(\rho(l)) = \generics{C}{T'}.S(\ldots)$.
    The transaction is ambiently available.
    We generate fresh $l_1'$ and $l_2'$ so that they are not in $dom(\rho)$.
    If $rho(l_1) \in \psi$, then e is stuck at a reentrant invocation.
    Otherwise, let
    \begin{enumerate}
        \item $\Sigma' = \Sigma[l_1' \mapsto \rho(l_1)][\overline{l_2' \mapsto \rho(l_2)}]$
        \item $\permVarMap' = \overline{PermVar(T_D) \mapsto Perm(T)}, \overline{PermVar(T_G) \mapsto Perm(T_M)}$
        \item $\Sigma'' = \envUpdate{\permVarMap'}{\permVarMap}{\envUpdate{\psi, \rho(l_1)}{\psi}{\Sigma'}}$
        \item $e' = tdef(C, m)$
    \end{enumerate}

    Then by E-Inv,
    	$\stepsTo{\Sigma, e}{\Sigma'', \fbox{$\overline{[l_1'/x]}[l_2'/\this]e'$}^{\rho(l_1)}}$

\item[Case: T-privInv.]
	Analogous to the \textit{Public-Invoke} case, using rule E-Inv-Private, except that the invocation is never stuck (E-Inv-Private does not check that $rho(l_1) \nin \psi$).

\item[Case: T-$\nearrow_p$.]
    Because $e$ is closed, $e = l \nearrow_{p} S(\overline{l'})$.
    By assumption, $l : \generics{C}{T'}.T_{ST}$.
    By memory consistency, $l \in dom(\rho)$ and $\mu(\rho(l)) = \generics{C}{T'}.S(\ldots)$.
    We case-analyze on $T_{ST}$.

		\begin{description}
            \item [Subcase: $T_{ST} = \overline{S}$ or $T_{ST} = Owned$.] \mbox{} \\
                By E-$\nearrow_{owned}$,
                	\stepsTo{\Sigma, l \nearrow_{owned} S(\overline{l'})}{\envUpdate{\mu[\rho(l) \mapsto \generics{C}{T'}.S(\overline{l'})]}{\mu}{\Sigma}, \unit}
            \item [Subcase: $T_{ST} = Shared$.] \mbox{} 
            	\begin{description}
                	\item[Case: $\rho(l) \nin \phi$.] Then by E-$\nearrow_{shared}$, \\
						\stepsTo{\Sigma, l \nearrow_{shared} S(\overline{l'})}{\envUpdate{\mu[\rho(l) \mapsto \generics{C}{T'}.S(\overline{l'})]}{\mu}{\Sigma}, \unit}.
                	\item[Case: $\mu(\rho(l)) = \generics{C}{T'}.S(\ldots)$.] Then by E-$\nearrow_{shared}$,\\
	                	\stepsTo{\Sigma, l \nearrow_{shared} S(\overline{l'})}{\envUpdate{\mu[\rho(l) \mapsto \generics{C}{T'}.S(\overline{l'})]}{\mu}{\Sigma}, \unit}.
                	\item[Case: $e$ is stuck at a bad state transition.]
                In that case, we have \\
                 $e = \mathbb{E}[l \nearrow_{\Shared} S(\overline{l'})]$ where $\mu(\rho(l)) = \generics{C^*}{T^*}.S'(\ldots)$, $S \neq S'$, $\rho(l) \in \phi$, and $\ty{\typeBounds}{\Delta}{s}{l}{\generics{C}{T'}.\Shared}{\Delta'}$. $C = C^*$ due to memory consistency.
                \end{description}
            \item [Subcase: $T_{ST} = \Unowned$.] This case is impossible because it contradicts the antecedent $T_{ST} \neq \Unowned$ of \textit{T-$\nearrow_p$}.
		\end{description}

\item[Case: T-assertStates.]
    Because $e$ is closed, $e = \assertExpr{l}{\overline{S}} $. By rule E-assert, \stepsTo{\Sigma, \text{assert } l \text{ in } \overline{S}}{\Sigma, \unit}.

\item[Case: T-assertPermission.]
    Because $e$ is closed, $e = \assertExpr{l}{T_{ST}}$. By rule E-assert, \stepsTo{\Sigma, \text{assert } l \text{ in } T_{ST}}{\Sigma, \unit}.

\item[Case: T-assertInVar.]
    Because $e$ is closed, $e = \assertExpr{l}{T_{ST}}$. By rule E-assert, \stepsTo{\Sigma, \text{assert } l \text{ in } T_{ST}}{\Sigma, \unit}.

\item[Case: T-assertInVarAlready.]
    Because $e$ is closed, $e = \assertExpr{l}{T_{ST}}$. By rule E-assert, \stepsTo{\Sigma, \text{assert } l \text{ in } T_{ST}}{\Sigma, \unit}.

\item[Case: T-isInStaticOwnership.]
    Because $e$ is closed, $e = \ifExpr{l}{owned}{S}{e_1}{e_2}$. By memory consistency, there exists S' such that $\mu(\rho(l)) = \generics{C}{T'}.S'(\ldots)$.
	\begin{description}
		\item [Subcase: $S' = S$.] Then by E-IsIn-Dynamic-Match-Owned, \stepsTo{\Sigma, e}{\Sigma, e_1 }.
		\item [Subcase: $S' \neq S$.] By IsIn-Dynamic-Else, \stepsTo{\Sigma, e}{\Sigma, e_2}.
	\end{description}

\item[Case: T-isInDynamic.]
    Because $e$ is closed, $e = \ifExpr{l}{shared}{S}{e_1}{e_2}$.
    By memory consistency, $l \in dom(\rho)$ and there exists S' such that $\mu(\rho(l)) = \generics{C}{T'}.S'(\ldots)$.

    By inversion, we have $l : \generics{C}{T'}.\Shared$.

	\begin{description}
		\item [Subcase: $S' = S$.] Then if $\rho(l) \in \phi$ then we are stuck in a nested dynamic state check. Otherwise, by E-IsIn-Dynamic-Match-Shared, \stepsTo{\Sigma, e}{{\envUpdate{\phi, \rho(l)}{\phi}{\Sigma}, \fbox{$e_1$}_{\rho(l)} }}.
		\item [Subcase: $S' \neq S$.] By IsIn-Dynamic-Else, \stepsTo{\Sigma, e}{\Sigma, e_2}.
	\end{description}

\item[Case: T-IsIn-PermVar.]
    Because $e$ is closed, $e = \ifExpr{l}{P}{p}{e_1}{e_2}$.
    By assumption $\ok{\typeBounds, \Sigma, \Delta}$, so $\permVarMap(p) = T_{ST}$ for some $T_{ST}$.
    Then $\stepsTo{\Sigma, \ifExpr{l}{P}{p}{e_1}{e_2}}{\Sigma, \ifExpr{l}{P}{T_{ST}}{e_1}{e_2}}$.

\item[Case: T-IsIn-Perm-Then.]
    Because $e$ is closed, $e = \ifExpr{l}{p}{\text{Perm}}{e_1}{e_2}$.
    By inversion, \subperm{\typeBounds}{P}{\text{Perm}}.
    As both $P$ and $\text{Perm}$ are permissions, not variables, we have $\subperm{\cdot}{P}{\text{Perm}}$, so by E-IsIn-Permission-Else \stepsTo{\Sigma, e}{\Sigma, e_1}.

\item[Case: T-IsIn-Perm-Else.]
    Because $e$ is closed, $e = \ifExpr{l}{p}{\text{Perm}}{e_1}{e_2}$.
    By inversion, \subperm{\typeBounds}{\text{Perm}}{P}, and $P \neq \text{Perm}$.
    As both $P$ and $\text{Perm}$ are permissions, not variables, we have $\subperm{\cdot}{\text{Perm}}{P}$, so by E-IsIn-Permission-Else \stepsTo{\Sigma, e}{\Sigma, e_2}.

\item[Case: T-IsIn-Unowned.]
    Because $e$ is closed, $e = \ifExpr{l}{p}{\text{Perm}}{e_1}{e_2}$.
    In this case, by E-IsIn-Unowned $\stepsTo{\Sigma, e}{e_2}$.

\item[Case: T-disown.]
	Because $e$ is closed, $e = \disown \ l$.
    By rule \textit{disown}, \stepsTo{\Sigma, \disown l}{\Sigma, \unit}.

\item[Case: T-pack.]
	By \textit{pack}, \stepsTo{\Sigma, \pack}{\Sigma, \unit}.

\item[Case: T-state-mutation-detection.]
	Because $e$ is closed, $e = \phibox{e'}{o}$, where $e'$ is also closed.
    If $e'$ is a value $v$, then by E-Box-$\phi$, \stepsTo{\Sigma, \phibox{$v$}{o}}{\envUpdate{(\phi \setminus o)}{\phi}{\Sigma}, v}.
    Otherwise, by the induction hypothesis, either \stepsTo{\Sigma, e'}{\Sigma', e''}, or $e'$ is stuck with an appropriate evaluation context. In the former case, by E-box-$\phi$-congr, \stepsTo{\Sigma, \fbox{$e'$}_o}{\Sigma', \fbox{$e''$}_o}. In the latter case, $e$ is stuck with an appropriate evaluation context.

\item[Case: T-reentrancy-detection.]
	Because $e$ is closed, $e = \psibox{e'}{o}$, where $e'$ is also closed.
    If $e'$ is a value $v$, then by E-Box-$\psi$, \stepsTo{\Sigma, \psibox{$v$}{o}}{\envUpdate{(\psi \setminus o)}{\psi}{\Sigma}, v}.
    Otherwise, by the induction hypothesis, either \stepsTo{\Sigma, e'}{\Sigma', e''}, or $e'$ is stuck with an appropriate evaluation context. In the former case, by E-box-$\psi$-congr, \stepsTo{\Sigma, \psibox{$e'$}{o}}{\Sigma', \psibox{$e''$}{o}}. In the latter case, $e$ is stuck with an appropriate evaluation context.

\end{description}
\end{proof}

\begin{theorem}[Preservation]
If e is a closed expression, $\ty{\typeBounds}{\Delta}{s}{e}{T}{\Delta''}$, \ok{\typeBounds, \Sigma, \Delta}, and \stepsTo{\Sigma, e}{\Sigma', e'} then for some $\Delta'$, $\ty{\typeBounds'}{\Delta'}{s}{e'}{T'}{\Delta'''}$, \ok{\typeBounds', \Sigma', \Delta'}, and $\lStronger{\Delta'''}{\typeBounds, \Sigma'}{\Delta''}$.
\end{theorem}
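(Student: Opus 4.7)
The plan is to proceed by induction on the derivation of $\stepsTo{\Sigma, e}{\Sigma', e'}$, in parallel with inversion on the typing derivation that established $\ty{\typeBounds}{\Delta}{s}{e}{T}{\Delta''}$. For each small-step rule I would: (i) invert the corresponding typing rule (or its congruence form) to extract sub-derivations for the subexpressions; (ii) exhibit a witness $\Delta'$ for $e'$ together with a typing derivation concluding $\ty{\typeBounds'}{\Delta'}{s}{e'}{T}{\Delta'''}$; (iii) verify that $\lStronger{\Delta'''}{\typeBounds, \Sigma'}{\Delta''}$ by checking that every binding in $\Delta''$ is covered by a binding in $\Delta'''$ of a subtype with equal ownership and a $\Sigma'_\rho$-equivalent indirect reference; and (iv) re-establish $\ok{\typeBounds', \Sigma', \Delta'}$, which reduces to showing that $\Sigma'$ extends $\Sigma$ in a way consistent with $\Delta'$ and that reference consistency $\Sigma', \Delta' \vdash \ok{o}$ holds for every $o \in \mathrm{dom}(\Sigma'_\mu)$.

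The easy cases are the ones that either produce a value with no heap effect (\textsc{E-lookup}, \textsc{E-assert}, \textsc{E-disown}, \textsc{E-pack}, the \textsc{E-IsIn-*-Then/Else} branches), the congruence rules (\textsc{E-letCongr}, \textsc{E-box-$\phi$-congr}, \textsc{E-box-$\psi$-congr}), and the unbox rules (\textsc{E-box-$\phi$}, \textsc{E-box-$\psi$}), where the IH applies directly after stripping the evaluation context. For \textsc{E-let} I would appeal to a standard substitution lemma relating $[l/x]$ in $e_2$ with the update that swaps the binding for $x$ by one for $l$, using fresh-$l$ to ensure the split judgment used in \textsc{T-lookup} still holds. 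For \textsc{E-new}, I would extend $\Delta'$ with $o : \generics{C}{T}.S$; reference consistency holds because the only aliases to the new $o$ are the freshly-created owning reference and the $\Unowned$ field-references produced by splitting the argument types. For \textsc{E-field}, the split type produced at the field site is precisely what we re-bind to the result expression $s_i'$; for \textsc{E-fieldUpdate} I must verify that swapping field $i$ of $\mu(\rho(l))$ preserves reference consistency for every alias of the old and new object, which follows from the \textit{disposable} precondition on the overwritten field together with the typing of the replacement reference.

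The hard cases are the two invocation rules \textsc{E-inv} and \textsc{E-privInv}, and the two transition rules. For the invocation rules, the main work is a substitution lemma that lets me replace $\this$ by $l_1'$ and each $x$ by $l_2'$ inside the body $e$, simultaneously substituting type arguments $\overline{T/T_G}$ and $\overline{T_M/T_{G_M}}$ and threading the updated permission-variable environment $\permVarMap'$ through every \textit{lookup} site. The subtlety is that \textit{funcArg} and \textit{funcArgResidual} together describe how the caller's input types split between the fresh callee references and the residual outer references; I would prove a small algebraic lemma stating that the residual plus the passed-in type recover (up to $\approx$) the original type, which is exactly what is needed to establish $<^l$ between the residual context at the call site and the original output context $\Delta''$. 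Reentrancy safety follows from the premise $\rho(l_1) \notin \psi$ and the fact that the fresh $\psibox{\cdot}{\rho(l_1)}$ wrapper will later restore $\psi$ via \textsc{E-box-$\psi$}. For \textsc{E-$\nearrow_{owned}$} and \textsc{E-$\nearrow_{shared}$}, the new object's state $S$ matches the target in \textsc{T-$\nearrow_p$}, and reference consistency is preserved because the only aliases allowed at a transition are either the owning reference being transitioned or (for the shared case) aliases protected by the state-lock in $\phi$, which is captured by the $\mathit{StateLockCompatible}$ clause of reference consistency.

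The single most error-prone obstacle is keeping \textbf{reference consistency} synchronized with the static context across the invocation rules: indirect references are cloned ($l_1', \overline{l_2'}$) while the originals keep residual types, and the permission-variable environment $\permVarMap$ is extended. I expect the bulk of the proof engineering to be a substitution/weakening lemma with a precise statement like ``if $\ok{\typeBounds, \Sigma, \Delta, l : T}$ and the split $\splitType{T}{T_a}{T_b}$ is applied along an invocation, then $\ok{\typeBounds \cup \overline{T_G}, \Sigma[l' \mapsto \rho(l)], \Delta, l : T_b, l' : T_a}$'' together with a lemma that permission-variable lookup commutes with substitution. Once those two lemmas are in hand, the individual cases reduce to routine bookkeeping, and $<^l$ follows directly from the split/residual algebra together with the reflexivity corollary already stated in the excerpt.
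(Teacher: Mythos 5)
Your proposal is correct and takes essentially the same route as the paper's proof, which likewise proceeds by induction on the dynamic semantics with inversion on the typing derivation, a substitution lemma for \keyword{let} and invocation bodies, a split-compatibility argument for heap consistency, and a case analysis of \emph{funcArg}/\emph{funcArgResidual} to get the $<^l$ and compatibility facts at call sites. The one small point you gloss over is that the dynamic-state-check branches are not quite ``no heap effect'' cases: E-IsIn-Shared-Then extends $\phi$ (so reference consistency goes through $StateLockCompatible$), and because the typing rule's output context is a $merge$ of the two branches, establishing $\lStronger{\Delta'''}{\typeBounds, \Sigma'}{\Delta''}$ there requires a merge-subtyping lemma --- but this falls squarely within your step (iii) and does not change the structure of the argument.
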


\begin{proof}
Proof proceeds by induction on the dynamic semantics.

\begin{description}

\item[Case: E-lookup.] $e = l$.
	We case-analyze on $T$.
	\begin{description}
		\item[Subcase: $T = \Unit$]
		$ $ \\
		By assumption, \ok{\typeBounds, \Sigma, \Delta}, and \ty{\typeBounds}{\Delta}{s}{l}{T}{\Delta''}. By assumption and E-Lookup, \stepsTo{\Sigma, l}{\Sigma, \Sigma_\rho(l)}. The fact that $\Sigma_\rho(l) = \unit$ follows directly from global consistency. Then by T-(), \ty{\typeBounds}{\Delta}{s}{\unit}{\Unit}{\Delta}. Global consistency is immediate because the contexts are unchanged, and $\lStronger{\Delta'''}{\typeBounds, \Sigma'}{\Delta''}$ by $<^l$-reflexivity.

        \item[Subcase: $T = \generics{C}{T'}.T_{ST}$]
		$ $ \\
            By inversion, $\Delta = \Delta_0, l: T_0$, $\splitType{T_0}{T}{T_2}$, and $\Delta'' = \Delta_0, l: T_2$. By rule E-lookup, $e' = \Sigma_\rho(l)$. The fact that $\Sigma_\rho(l) = o$ for some $o$ follows directly from global consistency. $l \neq o$ by construction and if $o$ occurs in $\Delta_0$, then we apply the strengthening lemma to generate a new proof of \ty{\typeBounds}{\Delta}{s}{e}{T}{\Delta'} in which $o$ does not occur.
            Thus, $\Delta' = \Delta'', o: T_2$ is a valid typing context. Then by Var, $\ty{\typeBounds}{\Delta'', o: T\textsubscript{2}}{s}{o}{T\textsubscript{2}}{\Delta'', o: T'}$ for some $T'$.
		Now, $\Delta'$ is the same as $\Delta$ except that some instances of $T_0$ have been replaced with $T_2$. The required consistency is obtained from the Split Compatibility lemma. We have $\lStronger{\Delta'''}{\typeBounds, \Sigma'}{\Delta''}$ because the two contexts differ only on $o$, which is not relevant to the $<^l$ relation.

        \item[Subcase: $T = \generics{I}{T'}.T_{ST}$ or $T = X.T_{ST}$]
        $ $ \\
            By memory consistency, this case is impossible.
	\end{description}

\item[Case: E-new.] $e = \new \ \generics{C}{T'}.S(\overline{l})$ because $e$ is closed (any variables would be free, so they must not exist).
    By assumption, \ty{\typeBounds}{\Delta}{s}{\new \ \generics{C}{T'}.S(\overline{l})}{\generics{C}{T'}.S}{\Delta''}; also, $e' = o$, and $o \nin dom(\mu)$. Let $\Delta' = \Delta'', o : \generics{C}{T'}.S$.
        By \textit{Var}, \ty{\typeBounds}{\Delta'', o: \generics{C}{T'}.S}{s}{o}{\generics{C}{T'}.S}{\Delta'', o: \generics{C}{T'}.Unowned}.
        Since $o$ is fresh and $\ok{\typeBounds, \Sigma, \Delta}$, there are no references to $o$ in the previous contexts, so all of the aliases are trivially consistent.
        We also have $\overline{\subtype{\typeBounds}{T}{stateFields(\generics{C}{T'}, S)}}$, where $\overline{l : T} \in \Delta$, which implies the required field property for reference consistency.
        By the split compatibility lemma, we have \ok{\typeBounds, \Sigma, \Delta'}. We have $\lStronger{\Delta'''}{\typeBounds, \Sigma'}{\Delta''}$ because the two contexts differ only on $o$, which is not relevant to the $<^l$ relation.

\item[Case: E-let.] $e$ = let $x: T_1$ = $v$ in $e_2$
	    By assumption:
	    	\begin{enumerate}
			 	\item \stepsTo{\Sigma, \text{let } x: T_1 = v \text{ in } e_2}{\envUpdate{\rho[l \mapsto v]}{\rho}{\Sigma}, [l/x]e}
				\item \ok{\typeBounds, \Sigma, \Delta}
				\item \ty{\typeBounds}{\Delta}{s}{\text{let } x: T_1 = v \text{ in } e_2}{T}{\Delta''}
			\end{enumerate}
	    \begin{description}
	    	\item[Subcase: v = o.]\mbox{}
				\begin{enumerate}
				\item By inversion:
        	    	\begin{enumerate}
        				\item \label {let-o-ty} \ty{\typeBounds}{\Delta}{s}{o}{T_1}{\Delta^*}
        				\item \label{let-e2-ty} \ty{\typeBounds}{\Delta^*, x: T_1}{s}{e_2}{T}{\Delta^{**}, x: T_1'}
                        \item $\disposable{\typeBounds}{T_1'}$
        				\item $l \notin dom(\rho)$
        			\end{enumerate}

        	    \item Let $\Delta' = \Delta^*, l : T_1$.
                By the substitution lemma (\ref{substitution}) applied to \ref{let-e2-ty},\\
                	 \ty{\typeBounds}{\Delta'}{s}{[l/x]e_2}{T}{\Delta^{**}, l: T_1'}.
	 
                \item By global consistency and \ref{let-o-ty}, $T_1$ is consistent with all other references in $refTypes(\Sigma, \Delta, o)$.

                Now, note that by global consistency, all references were previously compatible with $T_1$. $\Sigma'$ now includes a reference to the same object with indirect reference l, which corresponds with $l : T_1 \in \Delta'$. The only rule that could have been used in \ref{let-o-ty} is T-lookup, which split \splitType{T_1}{T_1'}{T_3} and replaced $o: T_1 \in \Delta$ with $o: T_3 \in \Delta'$. By the split compatibility lemma (\ref{split-compatibility}), $T_3$ is compatible with all other aliases to $o$, and in particular with $T_1'$.
                \item $\lStronger{\Delta^{**}, l: T_1'}{\typeBounds, \Sigma'}{\Delta^{**}, x: T_1'}$ because $l \nin dom(\Delta^{**}, x: T_1')$.
                \end{enumerate}
			\item[Subcase: v = \unit.]
				By inversion:
        	    \begin{enumerate}
	    			\item \ty{\typeBounds}{\Delta}{s}{\unit}{\Unit}{\Delta}
					\item \ty{\typeBounds}{\Delta, x : \Unit}{s}{e_2}{T}{\Delta^*, x : T_1'}
                    \item $\disposable{\typeBounds}{\Unit}$
					\item $l \notin dom(\rho)$
				\end{enumerate}

				Let $\Delta' = \Delta^*, l : \Unit$. By the substitution lemma (\ref{substitution}) \ty{\typeBounds}{\Delta^*, l : \Unit}{s}{[l/x]e_2}{T}{\Delta^{**}, l: T_1'}. Then the extensions to the contexts do not affect permissions, so they must be compatible, and \ok{\typeBounds, \Sigma', \Delta'}. $\lStronger{\Delta^{**}, l: T_1'}{\typeBounds, \Sigma'}{\Delta^{**}, x: T_1'}$ because $l \nin dom(\Delta^{**}, x: T_1')$.
		\end{description}
\item[Case: E-letCongr.] $e$ = let $x: T_1$ = $e_1$ in $e_2$. 
			\begin{enumerate}
    			\item By assumption:
    	    	\begin{enumerate}
    				\item \ok{\typeBounds, \Sigma, \Delta}
    				\item \ty{\typeBounds}{\Delta}{s}{\text{let } x: T_1 = e_1 \text{ in } e_2}{T}{\Delta''}
    			\end{enumerate}
			\item By inversion:
    			\begin{enumerate}
     				\item \stepsTo{\Sigma, e_1}{\Sigma^*, e_1'}.
    				\item \ty{\typeBounds}{\Delta}{s}{e_1}{T_1}{\Delta^*}
                    \item \label{secongr-body-check} \ty{\typeBounds}{\Delta^*, x: T_1}{s}{e_2}{T_2}{ \Delta^{**}, x: T_1'}
                    \item \label{congr-disposable} $\disposable{\typeBounds}{T_1'}$
    			\end{enumerate}

            \item By the induction hypothesis:
                \begin{enumerate}
                    \item \label{congr-e1'-ty} \ty{\typeBounds'^*}{\Delta'^{*}}{s}{e_1'}{T_1}{\Delta''^{*}} for some $\typeBounds'^*, \Delta'^*$, and $\Delta''^*$
                    \item \ok{\typeBounds'^*, \Sigma^*, \Delta'^*}
                    \item \label{secongr-new-context-stronger} $\lStronger{\Delta''^{*}}{\typeBounds, \Sigma'}{\Delta^*}$
                \end{enumerate}
            \item \label{congr-new-e2-ty} By \ref{l-stronger-substitution} with \ref{secongr-new-context-stronger} and \ref{secongr-body-check},  we have \ty{\typeBounds}{\Delta''^{*}, x: T_1}{s}{e_2}{T_2}{\Delta^{***}, x: T_1'}, with \lStronger{\Delta^{***}}{\Sigma^*}{\Delta^{**}}.
                
            \item Let $\Delta' = \Delta'^{*}$ and let $\typeBounds^{**} = \typeBounds, \typeBounds'^*$. 
                
                Then, by rule Let with \ref{congr-e1'-ty}, \ref{congr-new-e2-ty}, and \ref{congr-disposable}, \ty{\typeBounds{**}}{\Delta'}{s}{\text{let } x: T_1 = e_1' \text{ in } e_2}{T}{\Delta^{***}}, where \lStronger{\Delta^{***}}{\Sigma'}{\Delta''}.
                \item By \ref{l-stronger-consistency}, \ok{\Gamma^{**}, \Sigma', \Delta'}.
			\end{enumerate}

\item[Case: E-Inv.] $e = l_1.\generics{m}{M}(\overline{l_2})$ because $e$ is closed.
			\begin{enumerate}
    			\item By assumption, and because $e$ is closed:
    	    	\begin{enumerate}
                    \item \label{invoke-step} \stepsTo{\Sigma, l_1.\generics{m}{M}(\overline{l_2})}{\envUpdate{\psi, \rho(l_1)}{\psi}{\Sigma''}, \fbox{$\overline{[l_2'/x]}[l_1'/\this]e$}^{\rho(l_1)}}
    				\item \label{invoke-orig-ok} \ok{\typeBounds, \Sigma, \Delta}
                    \item \label{invoke-orig-ty} \ty{\typeBounds}{\Delta_0, l_1 : \generics{C}{T}.T_{STl1}, \overline{l_2: T_{l2}}}{s}{l_1.\generics{m}{M}(\overline{l_2})}{T}{\Delta_0, l_1: T_{l1}', \overline{l_2: T_{l2}'}}
    			\end{enumerate}
    			\item By inversion:
    			\begin{enumerate}
					\item $l_1' \nin dom(\rho)$
					\item $\overline{l_2' \nin dom(\rho)}$
                    \item $\params{C} = \overline{T_D}$
					\item $\Sigma'' = \Sigma[l_1' \mapsto \rho(l_1)][\overline{l_2' \mapsto \rho(l_2)}]$
                    \item $\permVarMap' = \permVarMap, \overline{PermVar(T_D) \mapsto Perm(T)}, \overline{PermVar(T_M) \mapsto Perm(M)}$
                    \item $\Sigma''' = \envUpdate{\permVarMap'}{\permVarMap}{\envUpdate{\psi, \rho(l_1)}{\psi}{\Sigma'}}$
                    \item $\mu(\rho(l_1)) = \generics{C}{T}.S(\ldots)$
    				\item $\rho(l_1) \notin \psi$
                    \item $tdef(C,m) = \generics{m}{T_M}(\overline{T_x \trans T_{xST} \ x}) \ T_{this} \trans \ T_{this}' \ e'$
                    \item \label{subt-l1} $\subtype{\typeBounds}{\generics{C}{T}.T_{STl1}}{\generics{C}{T}.T_{this}}$
                    \item \label{subt-l2} $\overline{\subtype{\typeBounds}{T_{l2}}{C_x.T_x}}$
                    \item $T_{l1}' = \funcArg{\generics{C}{T}.T_{STl1}}{\generics{C}{T}.T_{this}}{\generics{C}{T}.T_{this}'}$
					\item $\overline{T_{l2}' = \funcArg{T_{l2}}{T_x}{C_x.T_{xST}}}$
    			\end{enumerate}

    			\item We assume that the transaction is well-typed in its contract: \\
			 {\okIn{T \ \generics{m}{M}(\overline{C_x.T_x \trans T_{xST} \ x}) T_{this} \trans \ T_{this}' \ e}{C}}. As a result, we additionally have (by inversion):
    			\begin{enumerate}
                    \item \label{eTypeInvoke} \ty{\overline{T_D}, \overline{T_G}}{this: \generics{C}{T}.T_{this}, {\ \overline{x: C_x.T_x}}}{s_1}{e}{T}{this: \generics{C}{T}.T_{this}', \overline{x: C_x.T_{xST}}}
    			\end{enumerate}

                    Then by the substitution lemma for interfaces (\ref{lem:subcon-subs}), we also have
                \begin{enumerate}
                    \item \ty{\overline{T_D}, \overline{T_G}}{this: \generics{C}{T}.T_{this}, {\ \overline{x: \generics{C'}{T'}.T_x}}}{s_1}{e}{T}{this: \generics{C}{T}.T_{this}', \overline{x: \generics{C'}{T'}.T_{xST}}}
                \end{enumerate}

                where $\overline{l_2 : \generics{C'}{T'}.T_{ST}'}$, by global consistency.

			\item \label{apply-subst-invoke}
                Let $\typeBounds' = \typeBounds, \overline{T_D}, \overline{T_M}$.
                By the substitution lemma (\ref{substitution}) on \ref{eTypeInvoke}, we have: \\
                    $\ty{\typeBounds'}{l_1': \generics{C}{T}.T_{this}, {\ \overline{l_2': \generics{C'}{T'}.T_{x}}}}{s_1}{[l_2'/x][l_1'/\this]e}{T}{l_1': \generics{C}{T}.T_{this}', \overline{l_2': \generics{C'}{T'}.T_{xST}}}$

			\item Let:
			\begin{align*}
				T_{l1R} &= \funcArgResidual{\generics{C}{T}.T_{STl1}}{\generics{C}{T}.T_{this}}{\generics{C}{T}.T_{this}'}\\
				\overline{T_{l2R}} &= \overline{\funcArgResidual{T_{l2}}{T_x}{C_x.T_{xST}}}\\
                \Delta' &= \Delta, l_1: T_{l1R}, {\ \overline{l_2: T_{l2R}}}, l_1': \generics{C}{T}.T_{this}, \overline{l_2' : \generics{C'}{T'}.T_{l2}'}
			\end{align*}
			Note that $l_1$ and $l_2$ do not occur free in $[\overline{l_2'/x}][l_1'/\this]e$ because otherwise (\ref{eTypeInvoke}) would not have been the case. Then we have (by weakening \ref{apply-subst-invoke}):
                    $ \ty{\typeBounds'}{\Delta'}{s}{[\overline{l_2'/x}][l_1'/\this]e}{T}{\Delta, l_1: T_{l1R}, \overline{l_2: T_{l2R}}, l_1': \generics{C}{T}.T_{this}', \overline{l_2': \generics{C'}{T'}.T_{xST}}}$

			\item By rule Reentrancy-detection: \\
                $\ty{\typeBounds'}{\Delta'}{s}{\psibox{[\overline{l_2'/x}][l_1'/\this]e}{\rho(l)}}{T}{\Delta, l_1: T_{l1R}, \overline{l_2: T_{l2R}}, l_1': \generics{C}{T}.T_{this}', \overline{l_2': \generics{C'}{T'}.T_{xST}}}$
			  which corresponds to the evaluation step in \ref{invoke-step}.
                This also gives us that every indirect reference has a contract type, as required by global consistency.

          \item Consider:
          	\begin{align*}
			 	T_{l1R} &= \funcArgResidual{\generics{C}{T}.T_{STl1}}{\generics{C}{T}.T_{this}}{\generics{C}{T}.T_{this}'}\\
                T_{l1}' &= \funcArg{\generics{C}{T}.T_{STl1}}{\generics{C}{T}.T_{this}}{\generics{C}{T}.T_{this}'}
  			\end{align*}
				If $T_{l1}' \neq \generics{C}{T}.T_{this}'$, there are two possibilities, both with $\generics{C}{T}.T_{this} = \Unowned$. If $\generics{C}{T}.T_{STl1} = T_C.\Shared$, then $T_{l1R} = T_C.\Shared$; otherwise, $\maybeOwned{T_{l1R}}$. In both cases, \sameOwnership{T_{l1R}}{T'_{l1}} and \subtype{\typeBounds}{T_{l1R}}{T'_{l1}}. The same argument holds for $l_2$ and its type. Therefore:\\          
          
          \lStronger{\Delta, l_1: T_{l1R}, \overline{l_2: T_{l2R}}, l_1': \generics{C}{T}.T_{this}', \overline{l_2': \generics{C'}{T'}.T_{xST}}}{\typeBounds, \Sigma'}
				 {\Delta_0, l_1: T_{l1}', \overline{l_2: T_{l2}'}}

			\item 	
			
                By assumption of $\ok{\typeBounds, \Sigma, \Delta}$, $\permVarMap$ contains mappings for each $p \in \PermVar{\typeBounds}$.
                Note that $\permVarMap'$ additionally contains mappings for each $\overline{T_G}$ and $\overline{T_D}$, so $\PermVar{\typeBounds'} \subset \{ p \; | \; \permVarMap(p) = T_{ST} \}$, as required by global consistency.
                Finally, to show \ok{\typeBounds', \Sigma', \Delta'}, we need to show that the new types for $l_1$ and $\overline{l_2}$ are compatible with the aliases in $\Delta'$. 
                
                First consider $T_{l1R}$ and $\generics{C}{T}.T_{this}$, which alias the object originally referenced with type $\generics{C}{T}.T_{STl1}$. By assumption (\ref{invoke-orig-ty} and \ref{invoke-orig-ok}),  $\generics{C}{T}.T_{STl1}$ is compatible with all existing aliases in $\Sigma$.  Note that $T_{l1R} = \funcArgResidual{\generics{C}{T}.T_{STl1}}{\generics{C}{T}.T_{this}}{\generics{C}{T}.T_{this}'}$.
                
                Consider the cases for $T_{l1R}$:
			\begin{description}
                \item[Case: FuncArg-owned-unowned.] Previously, $l_1 : \generics{C}{T}.T_{STl1}$ was in $\Delta$, and \ok{\typeBounds', \Sigma, \Delta}. Now, $\Delta'$ includes both $\generics{C}{T}.T_{this}$ and $\generics{C}{T}.T_{STl1}$. But $T_{this} = \Unowned$, which is compatible with all other references.
                \item[Case: FuncArg-shared-unowned.] Previously, $l_1 : \generics{C}{T}.Shared$ was in $\Delta$, and \ok{\typeBounds', \Sigma, \Delta}. Now, $\Delta'$ includes both $\generics{C}{T}.T_{this}$ and $\generics{C}{T}.\Shared$. But $T_{this} = \Unowned$, which is compatible with \Shared.
                \item[Case: FuncArg-other.] Previously, $l_1: \generics{C}{T}.T_{STl1}$ was in $\Delta$, and \ok{\typeBounds', \Sigma, \Delta}. Now, $\Delta'$ includes both $\generics{C}{T}.T_{this}$ and $\generics{C}{T}.\Unowned$. But $\Unowned$ is compatible with all other references.
			\end{description}

			The corresponding argument applies to $l_2'$.
		\end{enumerate}

    \item[Case: E-Inv-Private.] $e = l_1.\generics{m}{M}(\overline{l_2})$ because $e$ is closed.

This case is similar to the E-Inv case, except that the fields are treated in a manner analogous to arguments: the field states are part of the initial context; they are transformed via $funcArg$; and the resulting types are in the output context.

\item[Case: E-IsIn-Dynamic-Match-Owned.] $e = \ifExpr{x}{owned}{T_{ST}}{e_1}{e_2}$ because $e$ is closed.
	 \begin{enumerate}
		 \item By assumption, and because $e$ is closed:
    	    \begin{enumerate}
		    	\item \ok{\typeBounds, \Sigma, \Delta}
                \item \label{isin-ty} \ty{\typeBounds}{\Delta_0, l: \generics{C}{T}.T_{ST}}{s}{\text{if } l \text{ is in\textsubscript{owned} } S \text{ then } e_1 \text{ else } e_2}{T_1}{\Delta''}
				\item \stepsTo{\Sigma, \text{if } l \text{ is in\textsubscript{owned} } S \text{ then } e_1 \text{ else } e_2}{\Sigma, e_1 }
			\end{enumerate}

        \item By inversion:
            \begin{enumerate}
                \item \label{l-type-dynamic-match} $\mu(\rho(l)) = \generics{C}{T}.S(\ldots)$
                \item \label{isin-inv-ty1} \ty{\typeBounds}{\Delta_0, l: \generics{C}{T}.S}{s}{e_1}{T_1}{\Delta^*}
                \item $S \in states(\generics{C}{T})$
                \item $\overline{S_x} = \possibleStates{\typeBounds}{\generics{C}{T}.T_{ST}}$
                \item $\subperm{\typeBounds}{T_{ST}}{Owned}$
                \item \label{isin-inv-ty2} \ty{\typeBounds}{\Delta_0, x: \generics{C}{T}.(\overline{S_x} \setminus S)}{s}{e_2}{T_1}{\Delta^{**}}
                \item $\Delta'' = merge(\Delta^*, \Delta^{**})$
            \end{enumerate}

        \item Let $\Delta' = \Delta_0, l: \generics{C}{T}.S$.
            By \ref{isin-inv-ty1}, \ty{\typeBounds}{\Delta'}{s}{e_1}{T_1}{\Delta^*}.
        \item The difference between $\Delta$ and $\Delta'$ is that in $\Delta'$, the type of $l$ is $\generics{C}{T}.S$. To show that \ok{\typeBounds, \Sigma', \Delta'}, we need to show that $\mu(\rho(l)) = \generics{C}{T}.S(\ldots)$. But this is given by (\ref{l-type-dynamic-match}).
        \item
        	By the merge subtyping lemma \ref{merge-subtyping}, if $l: T \in merge(\Delta^*, \Delta^{**})$, then $l: T' \in \Delta^*$ with \subtype{\typeBounds}{T'}{T} and \sameOwnership{T'}{T}. Thus, \lStronger{\Delta^*}{\typeBounds; \Sigma}{\Delta''}.
	\end{enumerate}

\item[Case: E-IsIn-Dynamic-Match-Shared.] $e = \text{if } l \text{ is in\textsubscript{shared} } \overline{S} \text{ then } e_1 \text{ else } e_2$
 \begin{enumerate}
		 \item By assumption, and because $e$ is closed:
    	    \begin{enumerate}
                \item \ok{\typeBounds, \Sigma, \Delta_0, l : \generics{C}{T}.Shared}
                \item \ty{\typeBounds}{\Delta_0, l: \generics{C}{T}.Shared}{s}{\text{if } l \text{ is in\textsubscript{shared} } S \text{ then } e_1 \text{ else } e_2}{T_1}{\Delta''}
				\item \stepsTo{\Sigma, \text{if } l \text{ is in\textsubscript{shared} } S \text{ then } e_1 \text{ else } e_2}{\envUpdate{\phi, \rho(l)}{\phi}{\Sigma}, \fbox{$e_1$}_{\rho(l)} }
			\end{enumerate}
		\item By inversion:
		\begin{enumerate}
            \item \label{dyn-match-shared-ty} \ty{\typeBounds}{\Delta_0, l: \generics{C}{T}.S}{s}{e_1}{T_1}{\Delta^*, l: \generics{C}{T}.T_{ST}}
            \item $\bound{\typeBounds}{T_{ST}} \neq \Unowned$
			\item $S \in \stateNames{C}$
            \item \ty{\typeBounds}{\Delta_0, l: \generics{C}{T}.Shared}{s}{e_2}{T_1}{\Delta^{**}, l: \generics{C}{T}.Shared}
            \item $\Delta'' = merge(\Delta^*, \Delta^{**}), l: \generics{C}{T}.Shared$
            \item \label{l-type-dynamic-match-shared} $\mu(\rho(l)) = \generics{C}{T}.S(\ldots)$
            \item $\rho(l) \notin \phi$
		\end{enumerate}
    \item Let $\Delta' = \Delta_0, l: \generics{C}{T}.S$.
        By State-mutation-detection and \ref{dyn-match-shared-ty}, \ty{\typeBounds}{\Delta'}{s}{\fbox{$e_1$}_{\rho(l)}}{T_1}{\Delta'''}.
     \item The difference between $\Delta$ and $\Delta'$ is that in $\Delta'$, the type of $l$ is $\generics{C}{T}.S$. By (\ref{l-type-dynamic-match-shared}), we know that $\mu(\rho(l)) = \generics{C}{T}.S(\ldots)$. However, there may be other aliases to $\rho(l)$ that have $\Shared$ permission. Since $\rho(l)$ is in the $\phi$ context of $\Sigma'$, any other references to $\rho(l)$ must be compatible with $\generics{C}{T}.Shared$, so we have \ok{\typeBounds, \Sigma', \Delta'} via $StateLockCompatible$.
     \item By the merge subtyping lemma \ref{merge-subtyping}, if $l: T \in merge(\Delta^*, \Delta^{**})$, then $l: T' \in \Delta^*$ with \subtype{\typeBounds}{T'}{T}. Thus, \lStronger{\Delta^*}{\typeBounds; \Sigma}{\Delta''}.

\end{enumerate}

\item[Case: E--IsIn-Dynamic-Else.] $e = \text{if } l \text{ is in}_p \  \overline{S} \text{ then } e_1 \text{ else } e_2$
\begin{enumerate}
	\item By assumption, and because $e$ is closed:
	\begin{enumerate}
		\item \ok{\typeBounds, \Sigma, \Delta}
		\item \stepsTo{\Sigma, \text{if } l \text{ is in}_p \  S \text{ then } e_1 \text{ else } e_2}{ \Sigma, e_2 }
		\item \ty{\typeBounds}{\Delta}{s}{\text{if } l \text{ is in\textsubscript{p} } S \text{ then } e_1 \text{ else } e_2}{T_1}{\Delta''}
	\end{enumerate}
	\item By inversion:
		\begin{enumerate}
			\item \label{dyn-else-in-s'} $\mu(\rho(l)) = \generics{C}{T}.S'(\ldots)$
			\item $S' \nin \overline{S}$
		\end{enumerate}
	\item By inversion, either:
            	\begin{enumerate}
            		\item \label{isin-else-ty-shared} \ty{\typeBounds}{\Delta_0, l: \generics{C}{T}.Shared}{s}{e_2}{T_1}{\Delta^*}; or:
					\item \label{isin-else-ty-owned} \ty{\typeBounds}{\Delta_0, l: \generics{C}{T}.\overline{S_x} \setminus \overline{S}}{s}{e_2}{T_1}{\Delta^{**}}
				\end{enumerate}
    \item If we are in case (\ref{isin-else-ty-shared}), let $\Delta' = \Delta$. Then by \ref{isin-else-ty-shared}, \ty{\typeBounds}{\Delta'}{s}{e_2}{T_1}{\Delta^*}. By assumption,  \ok{\typeBounds, \Sigma, \Delta'}.
    By the merge subtyping lemma \ref{merge-subtyping}, if $l: T \in merge(\Delta^*, \Delta^{**})$, then $l: T' \in \Delta^*$ with \subtype{\typeBounds}{T'}{T}. Thus, \lStronger{\Delta^*}{\typeBounds; \Sigma}{\Delta''}.
     \item Otherwise, let $\Delta' = \Delta_0, l: \generics{C}{T}.\overline{S_x} \setminus \overline{S}$. Then by \ref{isin-else-ty-owned}, \ty{\typeBounds}{\Delta'}{s}{e_2}{T_1}{\Delta^{**}}. By inversion, we had \ty{\typeBounds}{\Delta_0, l: \generics{C}{T}.T_{ST}}{s}{e_2}{T_1}{\Delta^{**}}. As a result, there are no other owning references to the object referenced by $l$, and the referenced object is in state $S'$ by (\ref{dyn-else-in-s'}). Since $S' \nin \overline{S}$, $\generics{C}{T}.\overline{S_x} \setminus \overline{S}$ is a consistent type for the reference, and \ok{\typeBounds, \Sigma, \Delta'}. By the merge subtyping lemma \ref{merge-subtyping}, if $l: T \in merge(\Delta^*, \Delta^{**})$, then $l: T' \in \Delta^{**}$ with \subtype{\typeBounds}{T'}{T}. Thus, \lStronger{\Delta^{**}}{\typeBounds; \Sigma}{\Delta''}.

\end{enumerate}

\item[Case: E-IsIn-PermVar] \mbox{}
	\begin{enumerate}
	\item By assumption, and because $e$ is closed:
    \begin{enumerate}
        \item \ok{\typeBounds, \Sigma, \Delta}
        \item \ty{\typeBounds}{\Delta}{s}{\text{if } l \text{ is in\textsubscript{Perm} } p \text{ then } e_1 \text{ else } e_2}{T_1}{\Delta''}
        \item $\stepsTo{\Sigma, \text{if } l \text{ is in\textsubscript{Perm} } p \text{ then } e_1 \text{ else } e_2}{\Sigma, \text{if } l \text{ is in\textsubscript{Perm} } T_{ST} \text{ then } e_1 \text{ else } e_2}$
    \end{enumerate}

    \item By inversion:
    \begin{enumerate}
        \item $\permVarMap(p) = T_{ST}$
        \item \label{isin-permvar-inv-then} $\ty{\typeBounds}{\Delta, l: T_C.p}{s}{e_1}{T_1}{\Delta'}$
        \item \label{isin-permvar-inv-else} $\ty{\typeBounds}{\Delta, l: T_C.T_{ST}'}{s}{e_2}{T_1}{\Delta''}$
        \item $\Delta_f = merge(\Delta', \Delta'')$
        \item $\text{Perm} = \ToPermission{T_{ST}'}$
    \end{enumerate}

    \item In order to perform substitution for type parameters, we must have proved $\subsOk{\typeBounds}{T}{T_G}$, so we must have $\subperm{\typeBounds}{T_{ST}}{p}$.
    Then by~\ref{isin-permvar-inv-then} and the permission variable substitution lemma~\ref{lem:permvar-subs}, we have $\ty{\typeBounds}{\Delta, l : T_C.T_{ST}}{s}{e_1}{T_1}{\Delta'}$.
    \item We proceed by case analysis on $T_{ST}$.

    \begin{description}
        \item[Subcase: $T_{ST} = \overline{S}$] ~

            If $P = \Unowned$, then $T_{ST}' = \Unowned$, and by~\ref{isin-permvar-inv-else} we can apply T-IsIn-Unowned to show \ty{\typeBounds}{\Delta, l: T_C.\Unowned}{s}{\text{if } l \text{ is in\textsubscript{Unowned} } \overline{S} \text{ then } e_1 \text{ else } e_2}{T_1}{\Delta_f'}.

            If $P = \Shared$, then $T_{ST}' = \Shared$, and by~\ref{isin-permvar-inv-else} we can apply T-IsIn-Dynamic to show \ty{\typeBounds}{\Delta, l: T_C.Shared}{s}{\text{if } l \text{ is in\textsubscript{shared} } \overline{S} \text{ then } e_1 \text{ else } e_2}{T_1}{\Delta_f'}.

            If $P = \Owned$, then $\subtype{\typeBounds}{T_{ST}'}{\Owned}$, so $\maybeOwned{T_C.T_{ST}'}$, and $\subperm{\typeBounds}{\overline{S_x}}{T_{ST}'}$, where $\overline{S_x} = \possibleStates{\typeBounds}{T_C.T_{ST}}$.
            Then by the subtype substitution lemma lemma~\ref{subtype-substitution} and by~\ref{isin-permvar-inv-else} we have $\ty{\typeBounds}{\Delta, l: T_C.\left( \overline{S_x} \setminus \overline{S} \right)}{s}{e_2}{T_1}{\Delta''}$.
            Now we can apply T-IsIn-StaticOwnership to get $\ty{\typeBounds}{\Delta, l: T_C.T_{ST}'}{s}{\text{if } l \text{ is in\textsubscript{owned} } \overline{S} \text{ then } e_1 \text{ else } e_2}{T_1}{\Delta_f'}$.

        \item[Subcase: $T_{ST} = P$] ~
        
            If $\subperm{\typeBounds}{\text{Perm}}{P}$, then by IsIn-Permission-Then, \\
            \ty{\typeBounds}{\Delta, l: T_C.T_{ST}}{s}{\text{if } l \text{ is in\textsubscript{\text{Perm}} } P \text{ then } e_1 \text{ else } e_2}{T_1}{\Delta_f'}.

            Otherwise, $\notsubperm{\typeBounds}{\text{Perm}}{P}$, so by~\ref{isin-permvar-inv-else} and IsIn-Permission-Else,
            \ty{\typeBounds}{\Delta, l: T_C.T_{ST}}{s}{\text{if } l \text{ is in\textsubscript{P} } \text{Perm} \text{ then } e_1 \text{ else } e_2}{T_1}{\Delta_f'}.

        \item[Subcase: $T_{ST} = q$] This case is impossible, because $\permVarMap$ only maps to nonvariable states and permissions.
    \end{description}

    \item In all cases, global consistency is maintained because the environment does not change, \lStronger{\Delta_f'}{\typeBounds, \Sigma'}{\Delta'} by reflexivity.
	\end{enumerate}
\item[Case: E-IsIn-Permission-Then]
    By assumption, and because $e$ is closed:
    \begin{enumerate}
        \item \ok{\typeBounds, \Sigma, \Delta}
        \item \ty{\typeBounds}{\Delta}{s}{\text{if } l \text{ is in\textsubscript{P} } \text{Perm} \text{ then } e_1 \text{ else } e_2}{T_1}{\Delta''}
        \item \stepsTo{\Sigma, \text{if } l \text{ is in\textsubscript{P} } \text{Perm} \text{ then } e_1 \text{ else } e_2}{\Sigma, e_1 }
    \end{enumerate}

    By inversion:
    \begin{enumerate}
        \item $\text{Perm} \in \{ \Owned, \Unowned, \Shared \}$
        \item \subperm{\cdot}{P}{\text{Perm}}
    \end{enumerate}

    To prove that $e$ is well-typed, we must have used either IsIn-Permission-Then or IsIn-Permission-Else.
    However, we know that $\subperm{\cdot}{P}{\text{Perm}}$, so we must have used IsIn-Permission-Else.
    Then by inversion of IsIn-Permission-Then, we have $\ty{\typeBounds}{\Delta_0, x: T_C.T_{ST}}{s}{e_1}{T_1}{\Delta'''}$.

    Let $\Delta' = \Delta_0, x: T_C.T_{ST}$.
    Global consistency is maintained because the environment has not changed, and $\lStronger{\Delta'''}{\typeBounds, \Sigma'}{\Delta''}$ by $<^l$-reflexivity.

\item[Case: E-IsIn-Permission-Else]
    By assumption, and because $e$ is closed:
    \begin{enumerate}
        \item \ok{\typeBounds, \Sigma, \Delta}
        \item \ty{\typeBounds}{\Delta}{s}{\text{if } l \text{ is in\textsubscript{P} } \text{Perm} \text{ then } e_1 \text{ else } e_2}{T_1}{\Delta''}
        \item \stepsTo{\Sigma, \text{if } l \text{ is in\textsubscript{P} } \text{Perm} \text{ then } e_1 \text{ else } e_2}{\Sigma, e_2 }
    \end{enumerate}

    By inversion:
    \begin{enumerate}
        \item $\text{Perm} \in \{ \Owned, \Unowned, \Shared \}$
        \item \subperm{\cdot}{\text{Perm}}{P}
        \item $P \neq \text{Perm}$
    \end{enumerate}

    To prove that $e$ is well-typed, we must have used either IsIn-Permission-Then or IsIn-Permission-Else.
    However, we know that $\subperm{\cdot}{\text{Perm}}{P}$ and $P \neq \text{Perm}$, so we must have used IsIn-Permission-Else.
    Then by inversion of IsIn-Permission-Else, we have $\ty{\typeBounds}{\Delta, x: T_C.T_{ST}}{s}{e_2}{T_1}{\Delta'}$.
    Global consistency is maintained because the environment has not changed, and $\lStronger{\Delta'''}{\typeBounds, \Sigma'}{\Delta''}$ by $<^l$-reflexivity.

\item[Case: E-IsIn-Unowned]
    By assumption, and because $e$ is closed:
    \begin{enumerate}
        \item \ok{\typeBounds, \Sigma, \Delta}
        \item \ty{\typeBounds}{\Delta}{s}{\text{if } l \text{ is in\textsubscript{Unowned} } \overline{S} \text{ then } e_1 \text{ else } e_2}{T_1}{\Delta''}
        \item \stepsTo{\Sigma, \text{if } l \text{ is in\textsubscript{Unowned} } \overline{S} \text{ then } e_1 \text{ else } e_2}{\Sigma, e_2 }
    \end{enumerate}

    By inversion:
    \begin{enumerate}
        \item \label{isin-unowned-inv-else} \ty{\typeBounds}{\Delta, x: T_C.T_{ST}}{s}{e_2}{T_1}{\Delta''}
    \end{enumerate}

    $e_2$ is well typed by~\ref{isin-unowned-inv-else}.
    Global consistency is maintained because the environment has not changed, and $\lStronger{\Delta'''}{\typeBounds, \Sigma'}{\Delta''}$ by $<^l$-reflexivity.

\item[Case: E-Box-$\phi$.] $e = \phibox{v}{o}$.
	\begin{enumerate}
    	\item By assumption, and because $e$ is closed:
    	\begin{enumerate}
    		\item \ok{\typeBounds, \Sigma, \Delta}
    		\item \ty{\typeBounds}{\Delta}{s}{\phibox{v}{o}}{T}{\Delta''}
    		\item \stepsTo{\Sigma, \phibox{$v$}{o}}{\envUpdate{(\phi \setminus o)}{\phi}{\Sigma}, v}
    	\end{enumerate}
    	\item By inversion:
    	\begin{enumerate}
    		\item \label{box-ty} \ty{\typeBounds}{\Delta}{s}{v}{T}{\Delta''}
    	\end{enumerate}
		\item Note that \phibox{e}{o} can only arise in the context of a shared-mode dynamic state test. Therefore, $\Delta$ must be of the form $\Delta_0, l: \generics{C}{T}.Shared$ and $\Delta''$ must be of the form $\Delta_0'', l: \generics{C}{T}.Shared$.
    	\item Since $v$ is a value, either $v = \unit$ or there exists $o'$ such that $v = o'$. If $v = \unit$, then let $\Delta' = \cdot$. By T-(), \ty{\typeBounds}{\cdot}{\unit}{\Unit}{\cdot}. Otherwise, $v = o'$ and by Var, there exists $o' : T_1 \in \Delta$ with $\splitType{T_1}{T_2}{T_3}$. In that case, let $\Delta' = \Delta, o' : T_1$. The proof proceeds as in the E-lookup rule: by Var, there exists $\Delta''' = o': T_3$  such \ty{\typeBounds}{\Delta'}{s}{o'}{T}{\Delta'''}.
		\item $\Delta'''$ differs from $\Delta''$ only on bindings for $o'$, which is not relevant to the $<^l$ relation, so $\lStronger{\Delta'''}{\typeBounds, \Sigma'}{\Delta''}$ by $<^l$-reflexivity. 
		\item \ok{\typeBounds, \Sigma, \Delta'} by the split compatibility lemma.
	\end{enumerate}

\item[Case: E-Box-$\phi$-congr.] $e = \phibox{e}{o}$.
	\begin{enumerate}
		\item By assumption, and because $e$ is closed:
    	\begin{enumerate}
			 \item \label{phi-ok} \ok{\typeBounds, \Sigma, \Delta}
			 \item \ty{\typeBounds}{\Delta}{s}{\phibox{e}{o}}{T}{\Delta''}
			 \item \stepsTo{\Sigma, \phibox{$e$}{o}}{\Sigma', \phibox{$e'$}{o}}
		\end{enumerate}
		\item By inversion:
		\begin{enumerate}
			\item \stepsTo{\Sigma, e}{\Sigma', e'}
			\item \ty{\typeBounds}{\Delta}{s}{e}{T}{\Delta''}
		\end{enumerate}
		\item Let $\Delta' = \Delta$. By \ref{phi-ok}, \ok{\typeBounds, \Sigma, \Delta'}. Note that $\Delta''' = \Delta''$. $\lStronger{\Delta'''}{\typeBounds, \Sigma'}{\Delta''}$ by $<^l$-reflexivity. By State-mutation-detection, \ty{\typeBounds}{\Delta'}{s}{\phibox{e'}{o}}{T}{\Delta''}.
	\end{enumerate}

\item[Case: E-Box-$\psi$.] $e = \psibox{v}{o}$.
	\begin{enumerate}
		\item By assumption, and because $e$ is closed:
    	\begin{enumerate}
			\item \label{psi-ok} \ok{\typeBounds, \Sigma, \Delta}
			\item \ty{\typeBounds}{\Delta}{s}{\psibox{v}{o}}{T}{\Delta''}
			\item \stepsTo{\Sigma, \psibox{$v$}{o}}{\envUpdate{(\psi \setminus o)}{\psi}{\Sigma}, v}
		\end{enumerate}
		\item By inversion:
		\begin{enumerate}
			\item \label{psi-ty} \ty{\typeBounds}{\Delta}{s}{v}{T}{\Delta''}
		\end{enumerate}
		\item Let $\Delta' = \Delta$. By \ref{psi-ty}, \ty{\typeBounds}{\Delta'}{s}{v}{T}{\Delta''}. $\Sigma' = \envUpdate{(\psi \setminus o)}{\psi}{\Sigma}$. Note that the definition of consistency does not depend on $\Sigma_\psi$. With \ref{psi-ok}, we conclude that \ok{\typeBounds, \Sigma', \Delta'}. Note that $\Delta''' = \Delta''$. $\lStronger{\Delta'''}{\typeBounds, \Sigma'}{\Delta''}$ by $<^l$-reflexivity.
	\end{enumerate}

\item[Case: E-Box-$\psi$-congr.] $e = \psibox{e}{o}$.
	\begin{enumerate}
		\item By assumption, and because $e$ is closed:
    	\begin{enumerate}
			\item \label{psi-ok2} \ok{\typeBounds, \Sigma, \Delta}
			\item \ty{\typeBounds}{\Delta}{s}{\psibox{e}{o}}{T}{\Delta''}
			\item \stepsTo{\Sigma, \psibox{$e$}{o}}{\Sigma', \psibox{$e'$}{o}}
		\end{enumerate}
		\item By inversion:
		\begin{enumerate}
			\item \label{psi-ty} \ty{\typeBounds}{\Delta}{s}{e}{T}{\Delta''}
			\item \stepsTo{\Sigma, e}{\Sigma', e'}
		\end{enumerate}
		\item Let $\Delta' = \Delta$. By \ref{psi-ok2}, \ok{\typeBounds, \Sigma, \Delta'}. Note that $\Delta''' = \Delta''$. $\lStronger{\Delta'''}{\typeBounds, \Sigma'}{\Delta''}$ by $<^l$-reflexivity. By Reentrancy-detection, \ty{\typeBounds}{\Delta'}{s}{\psibox{e'}{o}}{T}{\Delta''}.
	\end{enumerate}

\item[Case: E-State-Transition-Static-Ownership.] $e = l \nearrow_{owned} S(\overline{l'})$
	\begin{enumerate}
		\item By assumption, and because $e$ is closed:
    	\begin{enumerate}
			\item \ok{\typeBounds, \Sigma, \Delta}
			\item \stepsTo{\Sigma, l \nearrow_{owned} S(\overline{l'})}{\envUpdate{\mu[\rho(l) \mapsto \generics{C}{T}.S(\overline{\rho(l')})]}{\mu}{\Sigma}, \unit}
            \item \label{transition-static-ty} \ty{\typeBounds}{\Delta_0, l: \generics{C}{T}.T_{ST}}{l}{l \nearrow_{owned} S(\overline{x})}{\Unit}{\Delta^*, l : \generics{C}{T}.S}
		\end{enumerate}
		\item By inversion:
		\begin{enumerate}
			\item $\subperm{\typeBounds}{T_{ST}}{\Owned}$
    		\item \ty{\typeBounds}{\Delta_0}{l}{\overline{x}}{\overline{T}}{\Delta^{*}}
    		\item $\overline{\subtype{\typeBounds}{T}{type(stateFields(\generics{C}{T}, S'))}}$
			\item $unionFields(\generics{C}{T}, T_{ST}) = \overline{T_{fl} \ f_l}$
			\item $fieldTypes_l(\Delta^*; \overline{T_{fl} \ f_l}) = \overline{T_{fl}'}$
			\item $\overline{\disposable{\typeBounds}{T_{fl}'}}$
    	\end{enumerate}
    \item Let $\Delta' = \Delta, l : \generics{C}{T}.S$. By T-(), \ty{\typeBounds}{\Delta}{l}{\unit}{\Unit}{\Delta}. To show that \ok{\typeBounds, \Sigma', \Delta'}, it suffices to show that any $T \in refTypes(\Sigma', \Delta', \rho(l))$ that specifies state specifies type $\generics{C}{T}.S'$. But note that by \ref{transition-static-ty}, $l$ is in the original typing context with an owning type. Since \ok{\typeBounds, \Sigma, \Delta}, and $\generics{C}{T}.T_{ST} \in refTypes(\Sigma, \Delta, \rho(l))$, the only owning alias to the object referenced by $l$ is $l$ itself. Replacing $l : \generics{C}{T}.T_{ST}$ in $\Delta$ with $l: \generics{C}{T}.S$ replaces the type of the only owning alias with $\generics{C}{T}.S$, which is consistent with $\mu(\rho(l)) = \generics{C}{T}.S(\overline{l})$. $\lStronger{\Delta'''}{\typeBounds, \Sigma'}{\Delta''}$ by $<^l$-reflexivity.
	\end{enumerate}

\item[Case: E-State-Transition-Shared.] $e = l \nearrow_{shared} S(\overline{l'})$
\begin{enumerate}
\item By assumption, and because $e$ is closed:
    	\begin{enumerate}
			\item \ok{\typeBounds, \Sigma, \Delta}
            \item \stepsTo{\Sigma, l \nearrow_{shared} S(\overline{l'})}{\envUpdate{\mu[\rho(l) \mapsto \generics{C}{T}.S(\overline{\rho(l')})]}{\mu}{\Sigma}, \unit}
        \end{enumerate}
        \item \label{shared-typing-not-owned} Now, assume typing rule $\nearrow_{shared}$ applied, since if $\nearrow_{owned}$ applied, then the argument for case E-State-Transition-Static-Ownership (above) applies. Then:
        \begin{enumerate}
            \item \label{transition-static-ty-shared} \ty{\typeBounds}{\Delta_0, l: \generics{C}{T}.T_{ST}}{l}{l \nearrow_{shared} S(\overline{x})}{\Unit}{\Delta^*, l : \generics{C}{T}.S}
		\end{enumerate}
		\item By inversion:
		\begin{enumerate}
			\item $\subperm{\typeBounds}{T_{ST}}{\Shared}$. By \ref{shared-typing-not-owned}, we assume therefore $T_{ST} = \Shared$.
    		\item \ty{\typeBounds}{\Delta_0}{l}{\overline{x}}{\overline{T}}{\Delta^{*}}
    		\item $\overline{\subtype{\typeBounds}{T}{type(stateFields(\generics{C}{T}, S'))}}$
			\item $unionFields(\generics{C}{T}, T_{ST}) = \overline{T_{fl} \ f_l}$
			\item $fieldTypes_l(\Delta^*; \overline{T_{fl} \ f_l}) = \overline{T_{fl}'}$
			\item $\overline{\disposable{\typeBounds}{T_{fl}'}}$
            \item $	\rho(l) \notin \phi \lor \mu(\rho(l)) = \generics{C}{T}.S(\ldots)$
		\end{enumerate}
		\item There are two subcases.
			\begin{description}
                \item[Subcase: $\rho(l) \nin \phi$.] Let $\Delta' = \Delta$. By T-(), \ty{\typeBounds}{\Delta}{l}{\unit}{\Unit}{\Delta}. Now, all existing aliases to the object referenced by $\rho(l)$ were compatible with the previous reference, which was of type $\generics{C}{T}.Shared$. As a result, none of those references restricted the state of the object, and the new state (in $\Sigma'$) is consistent with $\Delta$.
                \item[Subcase: $\mu(\rho(l)) = \generics{C}{T}.S(\ldots)$.] Let $\Delta' = \Delta$. By T-(), \ty{\typeBounds}{\Delta}{l}{\unit}{\Unit}{\Delta}.  All references to the object referenced by $\rho(l)$ have the same type in $\Sigma'$ as they did in $\Sigma$ because neither the contract nor the state of the object have changed, and we have \ok{\typeBounds, \Sigma', \Delta'}.
			\end{description}
		\item In both cases, $\lStronger{\Delta'''}{\typeBounds, \Sigma'}{\Delta''}$ by $<^l$-reflexivity.
\end{enumerate}

\item[Case: E-Field.] $e = l.f_i$.
	\begin{enumerate}
		\item By assumption, and because $e$ is closed:
    	\begin{enumerate}
			\item \ok{\typeBounds, \Sigma, \Delta}
			\item \stepsTo{\Sigma, l.f_i}{\Sigma, o_i}
		\end{enumerate}
		\item By inversion:
		\begin{enumerate}
            \item $\mu(\rho(l)) = \generics{C}{T}.S(\overline{s})$
		\end{enumerate}
		\item Now, there are two subcases because there are two possible type judgments for $e$.
		\begin{description}
			\item [Subcase: this-field-def] \mbox{} 
				\begin{enumerate}
					\item By assumption: \ty{\typeBounds}{\Delta_0, l: T}{l}{l.f}{T_2}{\Delta_0, l: T, l.f: T_3}
					\item By inversion:
						\begin{enumerate}
							\item $l.f \notin Dom(\Delta)$
							\item $T_1 \ f \in intersectFields(T)$
							\item \splitType{T_1}{T_2}{T_3}
						\end{enumerate}
					\item Let $\Delta' = \Delta_0, l: T, l.f: T_3, o_i : T_2$. Then by Var, \ty{\typeBounds}{\Delta'}{s}{o_i}{T_2}{\Delta'''} for some $\Delta'''$. \ok{\typeBounds, \Sigma, \Delta'} because $T_2$ is a consistent permission for $o_i$ per the split compatibility lemma (as in the E-lookup case). $\Delta'''$ agrees with $\Delta''$ on all $l$, so $\lStronger{\Delta'''}{\typeBounds, \Sigma'}{\Delta''}$ by $<^l$-reflexivity.
				\end{enumerate}
			\item [Subcase: this-field-ctxt] \mbox{} 
				\begin{enumerate}
					\item By assumption: \ty{\typeBounds}{\Delta_0, l: T, l.f: T_1}{l}{l.f}{T_2}{\Delta_0, l: T, l.f: T_3}
					\item By inversion: \splitType{T_1}{T_2}{T_3}
				\end{enumerate}
					\item Let $\Delta' = \Delta_0, l: T, l.f: T_3, o_i : T_2$. Then by Var, \ty{\typeBounds}{\Delta'}{s}{o_i}{T_3}{\Delta'''} for some $\Delta'''$. \ok{\typeBounds, \Sigma, \Delta'} because $T_2$ is a consistent permission for $o_i$ per the split compatibility lemma. $\Delta'''$ agrees with $\Delta''$ on all $l$, so $\lStronger{\Delta'''}{\typeBounds, \Sigma'}{\Delta''}$ by $<^l$-reflexivity.
		\end{description}
	\end{enumerate}

\item[Case: E-FieldUpdate.] $e = l.f_i := l'$.
	\begin{enumerate}
		\item By assumption, and because $e$ is closed:
    	\begin{enumerate}
			\item \ok{\typeBounds, \Sigma, \Delta}
			\item \stepsTo{\Sigma, l.f_i := l'}{\envUpdate{\mu[\rho(l) \mapsto \generics{C}{T}.S(o_1, o_2, \ldots, o_{i-1}, \rho(l'), o_{i+1}, \ldots, o_{|\overline{l.f}|})]}{\mu}{\Sigma}, \text{\unit}}
			\item \ty{\typeBounds}{\Delta}{l}{l.f_i := l'}{\text{\Unit}}{\Delta^{**}, l.f_i: T_C.T_{ST}}
		\end{enumerate}
		\item By inversion:
		\begin{enumerate}
			\item $\mu(\rho(l)) = \generics{C}{T}.S(\overline{o})$
            \item $fields(\generics{C}{T}.S) = \overline{T \ f}$
   			\item \ty{\typeBounds}{\Delta}{l}{l.f_i}{T_C.T_{ST}}{\Delta^*}
		    \item \ty{\typeBounds}{\Delta^*}{l}{l.f_i}{T_C.T_{ST}'}{\Delta^{**}}
            \item \disposable{\typeBounds}{T_C.T_{ST}}
		\end{enumerate}
		\item Let $\Delta' = \Delta^*, l.f_i : T_C.T_{ST}$. By T-(), \ty{\typeBounds}{\Delta'}{l}{\unit}{\Unit}{\Delta'}. 
		\item Note that $\Sigma' = \envUpdate{\mu[\rho(l) \mapsto \generics{C}{T}.S(o_1, o_2, \ldots, o_{i-1}, \rho(l'), o_{i+1}, \ldots, o_{|l|})]}{\mu}{\Sigma}$. 
		By the same argument used in the proof of preservation for the \textit{E-lookup} case, \ok{\typeBounds, \Sigma, \Delta^*} and likewise \ok{\typeBounds, \Sigma, \Delta^{**}}. To show \ok{\typeBounds, \Sigma', \Delta'}, we note that the only change relative to $\Sigma$ and $\Delta^{**}$ is regarding the type of $l.f_i$. $\rho(l)$ has the same number of fields in $\Sigma'$ as in $\Sigma$. Although $\rho(l)$ may now have an additional reference to $\rho(l')$ that did not exist before, this reference is compatible with all of the other references in $refTypes(\Sigma', \Delta^*, \rho(l'))$ because if the new reference is owned, this is only because $T_C.T_{ST}$ was owned, which was previously accounted for in $refTypes(\Sigma', \Delta^*, \rho(l'))$, and that ownership has been removed in $\Delta^{**}$.
		\item $\Delta'''$ agrees with $\Delta''$ on all $l$, so $\lStronger{\Delta'''}{\typeBounds, \Sigma'}{\Delta''}$ by $<^l$-reflexivity.
	\end{enumerate}

\item[Case: E-Assert.] $e = \assertExpr{x}{T_{ST}}$.
		\begin{enumerate}
		\item By assumption, and because $e$ is closed:
    	\begin{enumerate}
			\item \ok{\typeBounds, \Sigma, \Delta}
			\item \stepsTo{\Sigma, \text{assert } l \text{ in } T_{ST}}{\Sigma, \unit}
		\end{enumerate}
		\item There are two subcases:
		\begin{description}
            \item[Subcase: $T_{ST} = \overline{S}$.] By assumption, \ty{\typeBounds}{\Delta_0, l : \generics{C}{T}.\overline{S}}{s}{\assertExpr{l}{\overline{S'}}}{\Unit}{\Delta_0, l : \generics{C}{T}.\overline{S}}.
				Let $\Delta' = \Delta$. By T-(), \ty{\typeBounds}{\Delta'}{l}{\unit}{\Unit}{\Delta'}. Since $\Sigma' = \Sigma$, $\Delta' = \Delta$, and \ok{\typeBounds, \Sigma, \Delta}, we have \ok{\typeBounds, \Sigma', \Delta'}.
            \item[Subcase: $T_{ST} \neq \overline{S}$.] By assumption, \ty{\typeBounds}{\Delta_0, l : \generics{C}{T}.T_{ST}}{s}{\assertExpr{l}{T_{ST}}}{\Unit}{\Delta_0, l : \generics{C}{T}.T_{ST}}. Let $\Delta' = \Delta$. By T-(), \ty{\typeBounds}{\Delta'}{l}{\unit}{\Unit}{\Delta'}. Since $\Sigma' = \Sigma$, $\Delta' = \Delta$, and \ok{\typeBounds, \Sigma, \Delta}, we have \ok{\typeBounds, \Sigma', \Delta'}.
		\end{description}
		\end{enumerate}

\item[Case: E-Disown.] $e = \disown l$.
	\begin{enumerate}
		\item By assumption, and because $e$ is closed:
    	\begin{enumerate}
			\item \ok{\typeBounds, \Sigma, \Delta}
			\item \stepsTo{\Sigma, \disown l}{\Sigma, l}
		\end{enumerate}
		\item There are two subcases:
		\begin{description}
			\item[Subcase: \ty{\typeBounds}{\Delta_0, l: \generics{C}{T}.\overline{S}}{s}{\disown \ l}{\Unit}{\Delta_0, l: T'}.] By inversion, \splitType{\generics{C}{T}.\overline{S}}{T}{T'}. Let $\Delta' = \Delta''$.  By T-(), \ty{\typeBounds}{\Delta'}{s}{\unit}{\Unit}{\Delta'}. Although the split compatibility lemma does not precisely apply here, an analogous argument does: any other alias to the object referenced by $l$ was previously compatible with $\generics{C}{T}.\overline{S}$, so we can see by case analysis of the definitions of compatibility and splitting that such aliases are also compatible with $T'$.
			\item[Subcase: {\ty{\typeBounds}{\Delta_0, l: \generics{C}{T}.Owned}{s}{\disown \ l}{\Unit}{\Delta_0, l: T'}}.] By inversion, \splitType{\generics{C}{T}.Owned}{T}{T'}. By T-(), \ty{\typeBounds}{\Delta'}{l}{\unit}{\Unit}{\Delta'}. Although the split compatibility lemma does not precisely apply here, an analogous argument does: any other alias to the object referenced by $l$ was previously compatible with $\generics{C}{T}.Owned$, so we can see by case analysis of the definitions of compatibility and splitting that such aliases are also compatible with $T'$.
		\end{description}
		\item In both subcases, $\Delta''' = \Delta''$, so $\lStronger{\Delta'''}{\typeBounds, \Sigma'}{\Delta''}$ by $<^l$-reflexivity.
	\end{enumerate}

\item[Case: E-Pack.] $e = \pack$.
	\begin{enumerate}
		\item By assumption, and because $e$ is closed:
    	\begin{enumerate}
			\item \ok{\typeBounds, \Sigma, \Delta}
			\item \stepsTo{\Sigma, \pack \ s}{\Sigma, \unit}
			\item \ty{\typeBounds}{\Delta_0, l: T, \overline{l.f: T_f}}{l}{\pack}{\Unit}{\Delta, l: T}. \textit{(Note that $\overline{l.f: T_f}$ can be any subset of the declared fields, including the empty subset.)}
		\end{enumerate}
		\item By inversion:
		\begin{enumerate}
			\item $l.f \notin dom(\Delta_0)$
			\item $contractFields(T) = \overline{T_{decl} \ f}$
            \item $\overline{\subtype{\typeBounds}{T_f}{T_{decl}}}$
		\end{enumerate}
		\item Let $\Delta' = \Delta$. By T-(), \ty{\typeBounds}{\Delta'}{l}{\unit}{\Unit}{\Delta'}. Note that every $T_f$ is a subtype of $T_{decl}$. The impact on $refTypes(\Sigma', \Delta', o)$ is that types defined for fields will replace types defined in $\Delta$. But because every replacement is a supertype of the type that it replaces, we have \ok{\typeBounds, \Sigma', \Delta'} by the \textit{subtype compatibility} lemma (\ref{subtype-compat}).
	\end{enumerate}
\end{description}
\end{proof}

\begin{theorem}[Asset retention]

Suppose:
\begin{enumerate}
	\item \ok{\typeBounds, \Sigma, \Delta}
	\item $o \in dom(\mu)$
	\item $refTypes(\Sigma, \Delta, o) = \overline{D}$
	\item \ty{\typeBounds}{\Delta}{s}{e}{T}{\Delta'}
	\item $e$ is closed
	\item \label{steps} \stepsTo{\Sigma, e}{\Sigma', e'}
	\item $refTypes(\Sigma', \Delta', o) = \overline{D'}$
    \item \label{nd} $\exists T' \in \overline{D}$ such that $\nonDisposable{\typeBounds}{T'}$
    \item \label{d} $\forall T' \in \overline{D'}: \disposable{\typeBounds}{T'}$
\end{enumerate}

Then in the context of a well-typed program, either $\nonDisposable{\typeBounds}{T}$ or $e = \mathbb{E}[\disown \ s]$, where $\rho(s) = o$.
\end{theorem}

\begin{proof}
By induction on the typing derivation.

\begin{description}
\item[Case: T-lookup.]
    In (\ref{steps}), the only rule that could have applied is E-lookup, which leaves $\Sigma$ unchanged. $\Delta'$ is the same as $\Delta$ except that some instances of $T_1$ have been replaced by $T_3$. If $\nonDisposable{\typeBounds}{T}$, it is proved. Otherwise, $\disposable{\typeBounds}{T}$, and by the definition of split, $\disposable{\typeBounds}{T_1}$ and $\disposable{\typeBounds}{T_3}$, so there was no change in disposability in $\Delta'$, contradicting the conjunction of (\ref{nd}) and (\ref{d}).

\item[Case: T-Let.] $e = \text{let } x: T = e_1 \text{ in } e_2$.
	There are two subcases, depending on the rule that was used for \stepsTo{\Sigma, e}{\Sigma', e'}:
	\begin{description}
		\item[Subcase: E-let.] $\Sigma'$ has a new mapping for a new indirect reference $l$, which may cause an additional alias to an object, but all previous aliases are preserved, so it cannot be the case that all non-disposable references are gone.
		\item[Subcase: E-letCongr.] The induction hypothesis applies to $e_1$ because \stepsTo{\Sigma, e_1}{\Sigma', e_1'}. This suffices to prove the case because there are no changes to $\Delta$.
	\end{description}

\item[Case: T-new.] By rule E-New, \stepsTo{\Sigma, \text{new } \generics{C}{T'}.S(\overline{l})}{\envUpdate{\mu[o \mapsto \generics{C}{T'}.S(\overline{\rho(l)})]}{\mu}{\Sigma}, o}. By inversion, $\ty{\typeBounds}{\Delta}{s}{\overline{s'}}{\overline{T}}{\Delta'}$.  By the induction hypothesis, any nondisposable references in $\Delta$ are preserved in $\Delta'$. The new $\Sigma'$ also preserves any existing nondisposable references.

\item[Case: T-this-field-def.]
    Rule \textit{E-field} leaves $\Sigma$ unchanged. By the \textit{split non-disposability} lemma (\ref{split-non-disposability}), if $\disposable{\typeBounds}{T_1}$, then $\disposable{\typeBounds}{T_3}$. No other types are changed in the typing context.

\item[Case: T-this-field-ctxt.]
	Same argument as for \textit{This-field-def}.

\item[Case: T-fieldUpdate.] Although $\Sigma'$ replaces a field, which may reference an object, the reference that was overwritten was disposable (by inversion).

\item[Case: T-inv.]
	The changes in $\Delta$ consist of replacing types with the results of $funcArg$. $\Sigma'$ has additional aliases to objects, but additional aliases cannot cause loss of owning references. We consider the cases for $funcArg$:
	\begin{description}
		\item[FuncArg-owned-unowned.] This case preserves ownership in the output type.
		\item[FuncArg-shared-unowned.] The input type here is not owned.
		\item[FuncArg-other.] If $owned(T_C.T_{STinput-decl})$, then in $\Delta$, the corresponding variable is an owning type. By substitution, ownership of the object will be maintained in the next context.
	\end{description}
	This represents a contradiction with the assumption that ownership was lost.

\item[Case: T-privInv.]
	This case is analogous to the case for Public-Invoke, but with additional aliases changed due to fields.

\item[Case: T-$\nearrow_p$.] $e = s \nearrow_{p} S'(\overline{x})$. 
	A rule $E-\nearrow_{p}$ applied, replacing an object that previously had a type consistent with $\generics{C}{T_A}.T_{ST}$  in $\mu$ with one that references an object in state $S'$. The new static context contains an owning reference to the new object, so ownership of $s$ was not lost.  For the dynamic context $\Sigma'$, it suffices to examine the references from fields of the old object ($\mu(\rho(l))$). It remains to consider the fields that were overwritten, but these all had types that were disposable (by inversion of T-$\nearrow_p$).

\item[Case: T-assertStates.] This rule causes no change in either $\Delta$ or $\Sigma$, which is a contradiction.
\item[Case: T-assertPermission.] This rule causes no change in either $\Delta$ or $\Sigma$, which is a contradiction.
\item[Case: T-assertInVar.] This rule causes no change in either $\Delta$ or $\Sigma$, which is a contradiction.
\item[Case: T-assertInVarAlready.] This rule causes no change in either $\Delta$ or $\Sigma$, which is a contradiction.

\item[Case: T-isInStaticOwnership.] $e = \ifExpr{x}{owned}{S}{e_1}{e_2}$.
	\begin{enumerate}
        \item If E-IsIn-Owned-Then applies, $\stepsTo{\Sigma, e}{\Sigma, e_1}$ (and by the preservation lemma, $e_1$ is well-typed). By the same argument as for T-lookup, no ownership was lost in $\Delta'$ and $\Delta''$; any consumed ownership is now in $T_1$. From the \textit{merging preserves nondisposability} lemma (\ref{merging-preserves-nondisposability}), we find a contradiction with the assumption that a type has changed from nondisposable to disposable in this step.
		\item Otherwise, E-IsIn-Else applies, and the same argument applies to $e_2$.
	\end{enumerate}
	
\item[Case: T-isInDynamic.] $e = \ifExpr{x}{shared}{S}{e_1}{e_2}$.  \mbox{}
    The same argument as in the T-isInStaticOwnership case applies, except that the situation is even simpler because $\Delta$ and $\Delta'$ agree that $x : T_C.Shared$.
    
\item[Case: T-IsIn-PermVar.] The argument is the same as for T-isInDynamic.

\item[Case: T-IsIn-Perm-Then.] E-IsIn-Perm-Then applies, and, $\Sigma' = \Sigma$. By the same argument as in the T-Lookup case, no ownership was lost in $\Delta'$, which contradicts the assumption.

\item[Case: T-IsIn-Perm-Else.] The argument is the same as for T-IsIn-Perm-Then, but with E-IsIn-Perm-Else.

\item[Case: T-IsIn-Unowned.] The argument is the same as for T-IsIn-Perm-Then, but with E-IsIn-Unowned.
    
\item[Case: T-disown.] Then $e = \disown \ s$.

\item[Case: T-pack.] Note that \pack{} leaves $\Sigma$ unchanged; the only change is removing $\overline{s.f : T_f}$ from $\Delta$. But by inversion, $\overline{\sameOwnership{T_f}{T_{decl}}}$. As a result, no ownership can change from $\Delta$ to $\Delta'$, contradicting the assumptions.

\item[Case: T-state-mutation-detection.] $e = \phibox{e'}{o}$. The step must have been either via E-Box-$\phi$ or via E-Box-$\phi$-congr. 	\begin{description}
		\item[Case: E-Box-$\phi$.] The change in E-Box-$\phi$ and state-mutation-detection has no impact on ownership, so this contradicts the assumptions.
		\item[Case: E-Box-$\phi$-congr.] We have the required property by the induction hypothesis, since the present rules make no changes themselves to $\Delta'$ and $\Sigma'$, which were provided inductively.
	\end{description}
	
\item[Case: T-reentrancy-detection.] $e = \psibox{e'}{o}$. The step must have been either via E-Box-$\psi$ or via E-Box-$\psi$-congr.
	\begin{description}
		\item[Case: E-Box-$\psi$.] The change in E-Box-$\psi$ and state-mutation-detection has no impact on ownership, so this contradicts the assumptions.
		\item[Case: E-Box-$\psi$-congr.] We have the required property by the induction hypothesis, since the present rules make no changes themselves to $\Delta'$ and $\Sigma'$, which were provided inductively.
	\end{description}

\end{description}
\end{proof}

\subsection{Supporting Lemmas}

\begin{lemma}[Memory consistency]
\label{memory-consistency}
If $\ok{\typeBounds, \Sigma, \Delta}$, then:
\begin{enumerate}
    \item If $l : \generics{C}{T'}.S \in \Delta$, then $\exists o. \rho(l) = o$ and $\mu(o) = \generics{C}{T'}.S(\overline{s})$.
	\item If $\ty{\typeBounds}{\Delta}{s}{e}{T}{\Delta'}$, and $l$ is a free variable of $e$, then $l \in dom(\rho)$.
\end{enumerate}
\end{lemma}

\begin{proof} \mbox{}
\begin{enumerate}
    \item Assume $l : \generics{C}{T'}.S \in \Delta$.
        Then $\rho(l) = o$ follows by inversion of global consistency. $\mu(o) = \generics{C^*}{T^*}.S'(\overline{o'})$ follows by inversion of reference consistency (which itself follows by inversion of global consistency).
        By inversion of reference consistency, $\subtype{\cdot}{\generics{C^*}{T^*}.S'}{\overline{D}}$.
        By definition of refTypes, $\generics{C}{T'}.S \in \overline{D}$, so $\subtype{\cdot}{\generics{C^*}{T^*}.S'}{\generics{C}{T'}.S}$.
        This implies that $C = C^*$, $S = S'$, and $\subtype{\cdot}{\overline{T^*}}{\overline{T'}}$ (by definition of subtyping).
	\item By induction on the typing derivation, we prove that if $l$ is a free variable of $e$, then $l \in dom(\Delta)$. Then the conclusion follows immediately from the definition of global consistency. We consider some example cases:
		\begin{description}
			\item[Case: T-lookup.] $s'$ is a free variable, but $s': T_1 \in \Delta$.
			\item[Case: T-let.] Any free variables in $e$ must be in $e_1$ or $e_2$. The result is obtained by induction on $e_1$ and $e_2$.
			\item[Case: $s \nearrow_p S'(\overline{x})$.] $s$ is a free variable, but $s: \generics{C}{T_A}.T_{ST} \in \Delta$.
			\item[Case: T-assertStates.] $x$ is a free variable, but $x \in dom(\Delta)$.
		\end{description}
		The remaining cases are similar to the above.
\end{enumerate}
\end{proof}

\begin{lemma}[Split Non-disposability]
\label{split-non-disposability}
	If \splitType{T_1}{T_2}{T_3}, and $T_1$ is not disposable, then $T_2$ is not disposable.
\end{lemma}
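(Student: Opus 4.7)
The plan is to prove the lemma by case analysis on the derivation of $\splitType{T_1}{T_2}{T_3}$. There are only four splitting rules (Split-Unowned, Split-shared, Split-owned-shared, Split-unit), so the case structure is small and each case is essentially a direct consequence of the definitions of $\disposable{\cdot}{\cdot}$, $\notOwned{\cdot}$, $\maybeOwned{\cdot}$, and $\nonAsset{\cdot}{\cdot}$. The key observation driving the argument is that $T_1$ being \emph{non}-disposable means, by inversion on the disposable judgment, that $T_1$ must simultaneously satisfy $\maybeOwned{T_1}$ and $\isAsset{\typeBounds}{T_1}$.

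First I would dispense with the two trivial cases. In Split-Unit we have $T_1 = T_2 = T_3 = \Unit$, but $\notOwned{\Unit}$ makes $\Unit$ disposable, contradicting the hypothesis that $T_1$ is non-disposable, so this case is vacuous. In Split-shared all three types are $T_C.\Shared$; again $T_C.\Shared$ is $\notOwned{}$ and hence disposable, so the hypothesis cannot hold and the case is vacuous.

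Next I would handle Split-owned-shared, where $T_2 = T_3 = T_C.\Shared$. Here inversion on the rule gives $\nonAsset{\typeBounds}{T_C.T_{ST}} = T_1$ as an antecedent. But a non-disposable type must be an asset (as noted above), which contradicts $\nonAsset{\typeBounds}{T_1}$, so this case is also vacuous. Finally, Split-Unowned gives $T_2 = T_1$ and $T_3 = T_C.\Unowned$, and the conclusion $T_2$ is non-disposable is immediate since $T_2$ is literally $T_1$.

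No case requires genuine reasoning beyond inversion; there is no main obstacle, just bookkeeping to ensure each split rule is treated and that the inversion on $\disposable{}{}$ is applied correctly. The only subtlety worth flagging explicitly in the write-up is that ``non-disposable'' is the negation of a judgment with two derivation rules, so one must note that being non-disposable forces both $\maybeOwned{T_1}$ (otherwise the $\notOwned{}$ rule would apply) and $\isAsset{\typeBounds}{T_1}$ (otherwise the second disposable rule would apply); this observation is what makes the Split-owned-shared and Split-shared cases vacuous.
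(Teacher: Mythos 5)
Your proposal is correct and follows essentially the same route as the paper's proof: a case analysis on the four splitting rules, observing that Split-Unowned yields $T_2 = T_1$ directly and that the remaining cases are vacuous because their antecedents (being $\notOwned{}$, or $\maybeOwned{}$ together with $\nonAsset{}{}$) force $T_1$ to be disposable. Your explicit inversion showing that non-disposability entails both $\maybeOwned{T_1}$ and $\isAsset{\typeBounds}{T_1}$ is a welcome elaboration of the paper's one-line "by inspection" argument, but it is not a different proof.
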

\begin{proof}
	By inspection of the definition of \splitType{T_1}{T_2}{T_3} and $owned$. Note that in the \textit{Split-owned-shared} and \textit{Split-states-shared} cases, although $owned(T_1)$, C is not an asset, which makes $T_1$ disposable.
\end{proof}

\begin{lemma}[Subtype Compatibility]
\label{subtype-compat}
If $\compatibleTypes{T}{T'}$, and $\subtype{\typeBounds}{T'}{T''}$, then $\compatibleTypes{T}{T''}$.
\end{lemma}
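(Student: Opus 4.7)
The plan is to proceed by induction on the derivation of $\subtype{\typeBounds}{T'}{T''}$. For rule <:-Unit, we have $T' = T'' = \Unit$, and compatibility is preserved trivially. The four interesting rules are <:-Matching-defs, <:-Matching-decls, <:-Implements-interface, and <:-Bound.

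For <:-Matching-defs and <:-Matching-decls, the contract/interface component is unchanged; only the permission-or-state component changes via $\subperm{\typeBounds}{T_{ST}}{T_{ST}'}$. To handle these cases I would establish (and simultaneously prove) the following sub-lemma: if $\compatibleTypes{T}{T_C.T_{ST}}$ and $\subperm{\typeBounds}{T_{ST}}{T_{ST}'}$, then $\compatibleTypes{T}{T_C.T_{ST}'}$. This sub-lemma is proved by induction on the subpermission derivation and case analysis on the compatibility rule used. The critical observation is that if $T$ was compatible with a type bearing an owning permission ($\overline{S}$ or $\Owned$), then by inversion on the compatibility rules (UOCompat, UStatesCompat, SymCompat) $T$ must be $T_C.\Unowned$, which remains compatible with the weakened target via UUCompat, USCompat, UOCompat, or UStatesCompat. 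If $T = T_C.\Shared$, then $T_{ST} \in \{\Shared, \Unowned\}$ by inversion (SCompat, USCompat + SymCompat), and by inspection of $<:_*$, the loosened $T_{ST}'$ can only be $\Unowned$ or $\Shared$, both compatible with $\Shared$. The $<:_*$-Var case would be discharged by appealing to the $\text{bound}_*$ machinery together with an outer induction on the subpermission derivation, and $<:_*$-Trans follows immediately by two applications of the induction hypothesis.

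For <:-Implements-interface, we have $T' = \generics{C}{T}.T_{ST}$ and $T'' = \substitute{\overline{T/T_G}}{\generics{I}{T'_I}.T_{ST}'}$. I would first apply the subpermission sub-lemma to obtain $\compatibleTypes{T}{\generics{C}{T}.T_{ST}'}$, then apply SubtypeCompat (interchanging sides via SymCompat as needed) to climb from the contract $\generics{C}{T}$ to the interface $\generics{I}{T'_I}$, and finally use ParamCompat to absorb any differences in the substituted type parameters. For <:-Bound, $T' = X.T_{ST}$ with $\bound{\typeBounds}{X.T_{ST}} = T_C.T_{ST}'$; here I would unfold the bound and reason as in the concrete case, using that compatibility between variable types ultimately reduces to their bounds.

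The main obstacle will be the sub-lemma about subpermission preserving compatibility, because it requires a clean joint case analysis over all pairings of compatibility rules and subpermission rules (with an additional layer of indirection for permission variables via $<:_*$-Var). A secondary subtlety is that the statement as phrased is asymmetric --- we only relax the \emph{second} argument --- so at each application of SymCompat we must remember to re-orient the inductive step so that the subtyping in the hypothesis always lands on the side we are allowed to weaken.
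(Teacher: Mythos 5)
Your proposal is correct and takes essentially the same route as the paper, whose entire proof is the one-line ``by straightforward case analysis of the subtyping relation''; your induction on the subtyping derivation, with the auxiliary observation that subpermission weakening of the second component preserves compatibility, is just that case analysis carried out in full.
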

\begin{proof}
By straightforward case analysis of the subtyping relation.
\end{proof}

\begin{lemma}[Subtyping reflexivity]
\label{subtyping-reflexivity}
For all types $T$, $\subtype{\typeBounds}{T}{T}$.
\end{lemma}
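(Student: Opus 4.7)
The plan is to prove reflexivity by a straightforward case analysis on the grammatical form of $T$, leveraging the corresponding reflexivity at the permission level that is already baked in as an axiom.

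Recall from the grammar that every type $T$ is either the unit type $\Unit$ or has the shape $T_C.T_{ST}$, where $T_C$ is either $\generics{D}{T}$ or a declaration variable $X$, and $T_{ST}$ is either a (nonempty) set of state names, a permission variable $p$, or one of the concrete permissions $\Owned$, $\Unowned$, $\Shared$. So I would split into these two cases. If $T = \Unit$, then the rule \textsc{<:-Unit} immediately yields $\subtype{\typeBounds}{\Unit}{\Unit}$, finishing this case with no further work.

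For the case $T = T_C.T_{ST}$, I would first invoke the axiom \textsc{$<:_*$-Refl} to obtain $\subperm{\typeBounds}{T_{ST}}{T_{ST}}$ for the permission/state component. Then I would apply the subtyping rule \textsc{<:-Matching-defs}, whose premise is exactly this subpermission judgement, to conclude $\subtype{\typeBounds}{T_C.T_{ST}}{T_C.T_{ST}}$. Note that \textsc{<:-Matching-defs} is stated for an arbitrary $T_C$, so it applies uniformly whether $T_C$ is a concrete generic type $\generics{D}{T}$ or a declaration variable $X$; there is no need to rely on \textsc{<:-Bound}, which would only give a subtyping into the interface bound rather than into $T$ itself.

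Because the permission-level reflexivity is already provided by the \textsc{$<:_*$-Refl} rule and the subtyping rule \textsc{<:-Matching-defs} is general enough to handle both shapes of $T_C$, there is essentially no technical obstacle to surmount; the lemma is a one-step consequence of the inference rules once the case split is made. The only thing to be careful about is to ensure that the permission/state $T_{ST}$ is syntactically valid in the first place (so that the reflexivity axiom applies uniformly to state sets, permission variables, and concrete permissions alike), which follows from the assumption that $T$ is a well-formed type.
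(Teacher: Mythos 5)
Your proof is correct and follows exactly the paper's argument: a case split on $T = \Unit$ versus $T = T_C.T_{ST}$, using the \textsc{<:-Unit} rule for the former and combining \textsc{$<:_*$-Refl} with \textsc{<:-Matching-defs} for the latter. The extra remark about why \textsc{<:-Bound} is not needed is a harmless elaboration beyond what the paper states.
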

\begin{proof}
\begin{description}
	\item[Case: $\Unit$.] Rule Unit applies.
    \item[Case: $T_C.T_{ST}$]. By rule Refl in the definition of the subpermission relation, rule \textit{Matching-definitions} applies.
\end{description}
\end{proof}

\begin{lemma}[Exclusivity of isAsset/nonAsset]
    For all types $T$:
    \begin{enumerate}
        \item If $\isAsset{\typeBounds}{T}$ is provable, then $\nonAsset{\typeBounds}{T}$ is not provable.
        \item If $\nonAsset{\typeBounds}{T}$ is provable, then $\isAsset{\typeBounds}{T}$ is not provable.
    \end{enumerate}
\end{lemma}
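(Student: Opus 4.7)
The plan is to prove the contrapositive of both parts simultaneously: we show that $\isAsset{\typeBounds}{T}$ and $\nonAsset{\typeBounds}{T}$ cannot both be derivable. We proceed by case analysis on the shape of $T$, matching the two inference rules that define each of these judgments.

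In the concrete contract case $T = \generics{D}{T'}.T_{ST}$, any derivation of $\isAsset{\typeBounds}{T}$ must use the first rule, yielding some state declaration of the form $\asset \; S \; \overline{F}$ appearing in $\possibleStates{\typeBounds}{\generics{D}{T'}.T_{ST}}$. Any derivation of $\nonAsset{\typeBounds}{T}$ must likewise use its first rule, which requires $\nonAssetState{\typeBounds}{ST}$ for \emph{every} $ST$ in $\possibleStates{\typeBounds}{\generics{D}{T'}.T_{ST}}$. Inspecting the single rule defining $\nonAssetState{\typeBounds}{\cdot}$, its conclusion matches only the syntactic form $S \; \overline{F}$ (no $\asset$ keyword), so an asset-tagged state declaration cannot satisfy $\nonAssetState$. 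Hence both judgments together would demand a single state declaration that is simultaneously of the form $\asset \; S \; \overline{F}$ and $S \; \overline{F}$, which is impossible by inspection of the syntax.

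In the variable case $T = X.T_{ST}$, both judgments must use their second rule: $\isAsset{\typeBounds}{T}$ requires an entry $\genericParam{\asset}{X}{p}{\generics{I}{T}.T_{ST_i}} \in \typeBounds$, while $\nonAsset{\typeBounds}{T}$ requires an entry $\genericParam{}{X}{p'}{\generics{I'}{T'}.T_{ST_i'}} \in \typeBounds$ \emph{without} the $\asset$ annotation. Here I would appeal to the implicit well-formedness condition on generic contexts enforced by $\genericsOk{\typeBounds}{\cdot}$, namely that each declaration variable $X$ has a unique binding in $\typeBounds$ (this is the clause of $\genericsOk$ that fixes the shape of any $T_G \in \typeBounds$ with $\Var{T_G} = X$). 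Since a single binding cannot both carry and lack the $\asset$ annotation, we derive a contradiction.

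The main obstacle is the variable case, because it requires appealing to well-formedness of $\typeBounds$ rather than to the syntax of the rules alone. I expect this to be handled by assuming that the context $\typeBounds$ in question is well-formed, which is a standard implicit hypothesis for such lemmas and is explicitly required everywhere the judgments appear in the type system (e.g., via $\genericsOk$ in \textsc{ContractOK}). The concrete case, by contrast, is immediate from pure syntactic inspection of the one rule defining $\nonAssetState$.
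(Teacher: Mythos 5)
Your proposal is correct and takes essentially the same approach as the paper, whose entire proof is the one-line remark ``by straightforward case analysis of the isAsset and nonAsset rules''; your write-up is simply a faithful elaboration of that case analysis. If anything, you are more careful than the paper in the variable case, where you correctly note that exclusivity of the two $\typeBounds$-lookup rules rests on an implicit well-formedness assumption that each declaration variable $X$ has a unique binding in $\typeBounds$.
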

\begin{proof}
By straightforward case analysis of the isAsset and nonAsset rules.
\end{proof}

\begin{lemma}[Exclusivity of isVar/nonVar]
    For all types $T$:
    \begin{enumerate}
        \item If $\isVar{T}$ is provable, then $\nonVar{T}$ is not provable.
        \item If $\nonVar{T}$ is provable, then $\isVar{T}$ is not provable.
    \end{enumerate}

    For all declaration types $T_C$:
    \begin{enumerate}
        \item If $\isVar{T_C}$ is provable, then $\nonVar{T_C}$ is not provable.
        \item If $\nonVar{T_C}$ is provable, then $\isVar{T_C}$ is not provable.
    \end{enumerate}

    For all permissions/states $T_{ST}$:
    \begin{enumerate}
        \item If $\isVar{T_{ST}}$ is provable, then $\nonVar{T_{ST}}$ is not provable.
        \item If $\nonVar{T_{ST}}$ is provable, then $\isVar{T_{ST}}$ is not provable.
    \end{enumerate}
\end{lemma}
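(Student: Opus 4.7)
The plan is to prove each of the three mutual-exclusivity statements by straightforward case analysis on the syntactic form of the argument, relying on the disjointness of the syntactic categories declared at the top of Figure~\ref{silica-syntax}. Specifically, $\textsc{DeclarationVariables}$ is disjoint from $\textsc{ContractNames} \cup \textsc{InterfaceNames}$, and $\textsc{PermissionVariables}$ is disjoint from the concrete permissions $\{\Owned, \Unowned, \Shared\}$ and from any nonempty state sequence $\overline{S}$. This disjointness is what makes each form uniquely determine which judgement can fire.

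For each of the three parts, the observation is that every $\isVar{\cdot}$ and $\nonVar{\cdot}$ rule is driven by the outermost syntactic form of its argument, so I would simply case-split on that form. For a declaration type $T_C$, it is either $\generics{D}{T}$ (matching only nonVar) or $X$ (matching only isVar). For a permission/state $T_{ST}$, it is either a concrete permission in $\{\Owned, \Unowned, \Shared\}$, a state sequence $\overline{S}$, or a permission variable $p$; the first two match only nonVar, and the third matches only isVar. For a type $T$, it is either $\Unit$ (only nonVar), $\generics{D}{T'}.T_{ST}$ (only nonVar, because isVar for types requires the contract part to be a declaration variable), or $X.T_{ST}$ (only isVar). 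In each case the two rule schemes partition the syntactic forms, so derivability of one immediately precludes derivability of the other, giving both directions.

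Since this is a purely structural argument with no induction needed, there is no real obstacle. The one subtle point worth flagging in the write-up is that the rule $\nonVar{\generics{D}{T}.T_{ST}}$ places no constraint on $T_{ST}$, so a type like $\generics{D}{T'}.p$ is still classified as nonVar even though its permission component is a variable; this is consistent with the exclusivity claim because the isVar rule for types looks only at the outer contract $X$, not at the permission component, and $\generics{D}{T'} \neq X$ by the disjointness of syntactic categories. With that point handled, the three statements follow by inversion on the at-most-one-applicable rule in each case.
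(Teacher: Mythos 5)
Your proposal is correct and matches the paper's proof, which is simply a straightforward case analysis of the $\isVar{\cdot}$ and $\nonVar{\cdot}$ rules; you have just spelled out the syntactic disjointness that makes the case analysis go through. The remark about $\generics{D}{T'}.p$ being classified as nonVar at the type level is a nice clarification but does not change the argument.
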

\begin{proof}
By straightforward case analysis of the isVar and nonVar rules.
\end{proof}

\begin{lemma}[Exclusivity of maybeOwned/notOwned]
    For all types $T$:
    \begin{enumerate}
        \item If $\maybeOwned{T}$ is provable, then $\notOwned{T}$ is not provable.
        \item If $\notOwned{T}$ is provable, then $\maybeOwned{T}$ is not provable.
    \end{enumerate}
\end{lemma}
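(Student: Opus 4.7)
The proof should proceed by straightforward case analysis on the derivations of $\maybeOwned{T}$ and $\notOwned{T}$, exploiting the rigid syntactic restrictions that each rule imposes on the type $T$. First I would invert $\notOwned{T}$: by inspection of the three rules, $T$ must syntactically be $\Unit$, $T_C.\Unowned$, or $T_C.\Shared$. Similarly, inverting $\maybeOwned{T}$ forces $T$ to be of the form $T_C.T_{ST}$ with $T_{ST} <:_* \Owned$, or of the form $T_C.p$ for a permission variable $p$.

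With these inversions in hand, both implications reduce to showing that the two classifications cannot overlap. The $\Unit$ case is immediate, since no $\maybeOwned$ rule has $\Unit$ in its conclusion. The $T_C.p$ case is also immediate: $p$ is syntactically a permission variable (by the prior exclusivity lemma on $\isVar$/$\nonVar$), so it cannot equal $\Unowned$ or $\Shared$, ruling out all $\notOwned$ rules. The only nontrivial case is when $T = T_C.\Unowned$ or $T = T_C.\Shared$ and we must rule out $T_{ST} <:_* \Owned$.

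The main obstacle is therefore a short auxiliary inversion lemma: $\Unowned \not<:_* \Owned$ and $\Shared \not<:_* \Owned$. I would prove this by induction on the derivation of $<:_*$. The base rules that conclude $T_{ST} <:_* \Owned$ are limited: \textsc{$<:_*$-Refl} requires the left side to already be $\Owned$; \textsc{$<:_*$-S-O} requires the left side to be a state set $\overline{S}$; \textsc{$<:_*$-O-*} concludes with $\Owned$ on the left; \textsc{$<:_*$-Var} requires the left side to be a permission variable; and \textsc{$<:_*$-U-U} concludes with $\Unowned$ on the right, not $\Owned$. So no rule can directly derive $\Unowned <:_* \Owned$ or $\Shared <:_* \Owned$, and the only remaining concern is \textsc{$<:_*$-Trans}. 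For transitivity, I would argue that any intermediate $T_{ST}'$ with $\Unowned <:_* T_{ST}'$ must itself satisfy the same restriction (by induction on the first subderivation, the only rules applicable with $\Unowned$ on the left are \textsc{Refl} giving $T_{ST}' = \Unowned$ and \textsc{U-U} giving $T_{ST}' = \Unowned$), at which point we need $\Unowned <:_* \Owned$, a strictly smaller instance of the same claim. The symmetric argument handles $\Shared$, where the only rules applicable with $\Shared$ on the left yield $T_{ST}' \in \{\Shared, \Unowned\}$.

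Once that auxiliary lemma is in place, both directions of the main lemma follow by a finite case split matching the conclusions of the two judgments. The proof is short but relies crucially on the inversion argument for $<:_*$, which I expect to be the only place where any real work is required.
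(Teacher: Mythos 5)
Your proposal is correct and takes essentially the same approach as the paper, which simply states ``by straightforward case analysis of the rules''; you have merely filled in the details, including the inversion argument showing $\Unowned \not<:_* \Owned$ and $\Shared \not<:_* \Owned$ that the paper leaves implicit. That auxiliary inversion (handling the \textsc{$<:_*$-Trans} case by first characterizing everything reachable from $\Unowned$ and $\Shared$) is exactly the right way to discharge the only nontrivial overlap.
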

\begin{proof}
By straightforward case analysis of the ownedState and notOwned rules.
\end{proof}

\begin{definition}[Non-disposability]
\begin{mathpar}

\inferrule*[right=ND-owned]{\maybeOwned{T_C.T_{ST}} \and
    \isAsset{\typeBounds}{T_C.T_{ST}} }
    { \nonDisposable{\typeBounds}{T_C.T_{ST}} }

\end{mathpar}
\end{definition}

\begin{lemma}[Exclusivity of disposability and non-disposability]
\label{disposable-xor-nondisposable}
For all types $T$:
	\begin{enumerate}
        \item If $\disposable{\typeBounds}{T}$ is provable, then $\nonDisposable{\typeBounds}{T}$ is not provable.
        \item If $\nonDisposable{\typeBounds}{T}$ is provable, then $\disposable{\typeBounds}{T}$ is not provable.
	\end{enumerate}
\end{lemma}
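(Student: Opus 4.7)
The plan is to reduce both directions to the exclusivity lemmas that have already been established earlier in the appendix (isAsset/nonAsset and maybeOwned/notOwned). Since the lemma is an ``if/then'' biconditional-style statement about two independently defined judgements, a straightforward proof by case analysis on the last rule used in each derivation should suffice; no induction on type structure is needed because neither $\disposable{\typeBounds}{T}$ nor $\nonDisposable{\typeBounds}{T}$ is defined recursively on $T$.

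First, I would unfold the definitions. The judgement $\nonDisposable{\typeBounds}{T}$ has a single introduction rule ND-owned, whose premises are $\maybeOwned{T}$ and $\isAsset{\typeBounds}{T}$. The judgement $\disposable{\typeBounds}{T}$ has exactly two introduction rules: one with premise $\notOwned{T}$, and one with premises $\maybeOwned{T}$ and $\nonAsset{\typeBounds}{T}$. So any proof of both $\disposable{\typeBounds}{T}$ and $\nonDisposable{\typeBounds}{T}$ must pair the single $\nonDisposable$ case against one of these two $\disposable$ cases.

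Next I would dispatch the two resulting subcases. In the first subcase, we would have both $\notOwned{T}$ (from the first $\disposable$ rule) and $\maybeOwned{T}$ (from ND-owned), which directly contradicts the exclusivity lemma for $\maybeOwned$/$\notOwned$. In the second subcase, we would have both $\nonAsset{\typeBounds}{T}$ (from the second $\disposable$ rule) and $\isAsset{\typeBounds}{T}$ (from ND-owned), contradicting the exclusivity lemma for $\isAsset$/$\nonAsset$. Both parts (1) and (2) of the statement follow from the same argument, since the contradiction derived is symmetric in the two judgements.

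I do not expect any real obstacle here: the two supporting exclusivity lemmas do all the work, and what remains is purely bookkeeping over at most two rule combinations. The only point requiring a little care is making sure the case analysis is exhaustive, i.e.\ that there are no other rules for $\disposable$ or $\nonDisposable$ hiding elsewhere in the system; a quick scan of the auxiliary judgement section confirms the two $\disposable$ rules and the single $\nonDisposable$ rule shown above are the complete set.
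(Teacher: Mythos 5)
Your proof is correct and follows essentially the same route as the paper's: a case analysis on the (few) introduction rules for the two judgements, discharging each combination via the isAsset/nonAsset and maybeOwned/notOwned exclusivity lemmas. The only cosmetic difference is that the paper lists a separate \textit{D-Unit} case for $\disposable{\typeBounds}{\Unit}$, which your argument already subsumes since $\nonDisposable{\typeBounds}{\Unit}$ has no applicable rule.
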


\begin{proof}
	\begin{enumerate}
        \item Consider the cases for $\disposable{\typeBounds}{T}$.
			\begin{description}
                \item[Case: D-Owned.] Let $T = T_C.T_{ST}$.
                    By inversion, $\maybeOwned{T_C.T_{ST}}$ and $\nonAsset{\typeBounds}{T_C.T_{ST}}$.
                    Then we cannot prove $nonDisposable$, which requires $\isAsset{\typeBounds}{T_C.T_{ST}}$.

                \item[Case: D-not-owned.] There is no rule by which to prove $\nonDisposable{\typeBounds}{T}$.
                \item[Case: D-Unit.] There is no rule by which to prove $\nonDisposable{\typeBounds}{T}$.
			\end{description}
        \item To prove $\nonDisposable{\typeBounds}{T}$, we must use ND-Owned; so $T = T_C.T_{ST}$, and we must show that $\maybeOwned{T_C.T_{ST}}$ and $\isAsset{\typeBounds}{T_C.T_{ST}}$.
            But this directly contradicts the premises of D-not-owned and D-owned, and D-unit does not apply.
            So there is no rule by which to prove $\disposable{\typeBounds}{T}$.
	\end{enumerate}
\end{proof}

\begin{lemma}[Interface substitution]
    \label{lem:subcon-subs}
    If
    \begin{enumerate}
        \item $\ty{\typeBounds}{\Delta, s' : \generics{I}{T}.T_{ST}}{s}{e}{T}{\Delta'}$
        \item $\subtype{\typeBounds}{\generics{C}{T'}}{\generics{I}{T}}$
        \item $\ok{C}$
    \end{enumerate}
    then $\ty{\typeBounds}{\Delta, s' : \generics{C}{T'}.T_{ST}}{s}{e}{T}{\Delta''}$, where
    \begin{enumerate}
        \item if $s' : \generics{I}{T}.T_{ST}' \in \Delta'$, then $\Delta'' = \Delta', s' : \generics{C}{T'}.T_{ST}' $
        \item otherwise $\Delta'' = \Delta'$.
    \end{enumerate}
\end{lemma}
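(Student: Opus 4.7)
The plan is to proceed by induction on the derivation of
$\ty{\typeBounds}{\Delta, s' : \generics{I}{T}.T_{ST}}{s}{e}{T}{\Delta'}$,
case-analyzing on the last typing rule used. In each case we must exhibit a derivation whose input context replaces $\generics{I}{T}.T_{ST}$ by $\generics{C}{T'}.T_{ST}$ for $s'$, and check that the output context either coincides with $\Delta'$ (when $s'$ has been consumed or absent from it) or contains the corresponding substituted binding $s':\generics{C}{T'}.T_{ST}'$. Rules that do not inspect the binding for $s'$ at all are routine: we apply the induction hypothesis to each premise involving $s'$ and reassemble the derivation with the same rule, preserving shape of $\Delta'$.

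The cases that require real work are the ones that look up, update, or invoke something through $s'$. For \textsc{T-lookup} on $e = s'$, I will use the fact that $\splitType{-}{-}{-}$ is driven by the structure of the permission/state component and by whether the contract is an asset (via the \textsc{Split-owned-shared} rule); since $\subtype{\typeBounds}{\generics{C}{T'}}{\generics{I}{T}}$, any split on $\generics{I}{T}.T_{ST}$ can be mirrored by a corresponding split on $\generics{C}{T'}.T_{ST}$, using that $\nonAsset{}{\generics{I}{T}.Owned}$ (required by $\genericsOk{}{-}$ for the interface parameter) together with $\ok{C}$ force the asset status to line up. For \textsc{T-this-field-def} and \textsc{T-this-field-ctxt}, I will appeal to the fact that $C\implements I$ and $\ok{C}$ yield an inclusion of $\intersectFields$ up to substitution of type parameters, so any field of $\generics{I}{T}$ in the given set of states is also a field of $\generics{C}{T'}$ with the same (substituted) declared type. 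For \textsc{T-$\nearrow_p$}, the hypotheses concern states of the object's contract and disposability of \textit{unionFields}; both are stable under passing from the interface to the implementing contract, again via the interface-substitution-compatibility encoded by \textsc{$<:$-Implements-interface}. The \textsc{T-fieldUpdate}, \textsc{T-disown}, \textsc{T-pack}, assertion, and dynamic-check rules go through by the same observation applied to their respective premises, using reflexivity of subtyping and the fact that subpermissions on $T_{ST}$ are unaffected by changing only the contract component.

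The main obstacle will be \textsc{T-inv} (and the analogous \textsc{T-privInv}) with $s_1 = s'$. Here the derivation reads the signature of $m$ via $\transaction{\typeBounds}{\generics{m}{T_M}}{\generics{I}{T}}$, but after substitution we must read it via $\transaction{\typeBounds}{\generics{m}{T_M}}{\generics{C}{T'}}$. The key tool is $\implementOk{\overline{T_G}}{\generics{I}{T}}{M}$, which is a premise of the contract well-formedness rule guaranteed by $\ok{C}$: it gives subtyping of the argument types, subpermission of the formal parameter modes, a \emph{reversed} subpermission on the receiver's initial mode, and a subtype on the return type. I will chain these inequalities with the antecedents of the original \textsc{T-inv} (using \textsc{$<:_*$-Trans} and subtype transitivity) to re-establish the premises of \textsc{T-inv} for the $\generics{C}{T'}$ version. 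A separate obligation is to line up the $\funcArg$ outputs for the receiver: because $\funcArg$ depends only on the declared input/output modes of the formal parameter and the mode of the passed reference, and because these match up to the subpermissions extracted from \textit{implementOk}, the resulting residual type for $s'$ in $\Delta''$ will be the one demanded by the statement of the lemma, namely $\generics{C}{T'}.T_{ST}'$ whenever the original rule produced $\generics{I}{T}.T_{ST}'$.

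Two auxiliary facts I expect to need as small sublemmas are: (i) the split relation is preserved under replacing the contract component by a subtype (with the asset-status side condition discharged as above), and (ii) $\possibleStates{\typeBounds}{\generics{C}{T'}.T_{ST}} \subseteq \stateNames{I}$ under the substitution induced by $C\implements I$, so that state sets appearing in dynamic checks and assertions remain well-formed after substitution. Together with the $\implementOk$-driven argument for invocations, these close out the induction.
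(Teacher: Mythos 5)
Your proposal is correct and follows essentially the same route as the paper's proof: induction on the typing derivation with the invocation case as the crux, discharged by chaining the subtyping and subpermission facts supplied by $\implementOk{\overline{T_G}}{\generics{I}{T}}{M}$ (guaranteed by $\ok{C}$) with the premises of the original \textsc{T-inv} instance. The paper dispatches the non-invocation cases you labor over far more directly --- interfaces have no fields and no private transactions, so field access, field update, and state transitions through an interface-typed $s'$ simply cannot arise, and every remaining rule inspects only the permission/state component of $s'$'s type --- and it additionally treats the case where $s'$ occurs in \emph{argument} position of an invocation, which your write-up omits but which follows immediately from the transitivity of subtyping you already invoke.
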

\begin{proof}
    By induction on the typing derivation.
    The relevant cases are Inv and P-Inv; all other cases will be identical, because $x$ has the same permission or state.
    Because interfaces don't have fields, there must not be any field assignments or access involving $x$, so we don't need to consider those.

    \begin{description}
        \item[Case: Inv] In this case, $e = s_1.\generics{m}{T_M}(\overline{s_2})$.

            If $s' = s_1$, with the assumption that $\okIn{m}{C}$, we have:
            \begin{enumerate}
                \item $\transaction{\typeBounds}{\generics{m}{T_M}}{\generics{I}{T}} = T \ \generics{m}{T_M'}(\overline{T_{C_x}.T_x \trans T_{xST} \ x}) \ T_{this} \trans \ T_{this}'$
                \item $\ok{C}$
                \item $\subperm{\typeBounds}{T_{ST}}{T_{this}}$
                \item $\overline{\subtype{\typeBounds}{T_{s2}}{T_{C_x}.T_x}}$
	            \item $T_{s1}' = funcArg(T_C.T_{STs1}, T_C.T_{this}, T_C.T_{this}')$
                \item $\overline{T_{s2}' = funcArg(T_{s2}, T_x, T_{C_x}.T_{xST})}$
            \end{enumerate}

            So then
            \begin{enumerate}
                \item $\transactionName{m} \in \transactionNames{C}$
                \item $tdef(m, C) = M = T' \ \generics{m}{T_M'}(\overline{T_{C_x}'.T_x' \trans T_{xST}' \ x}) \ T_{this}^{*} \trans \ T_{this}^{**}$
                \item $\implementOk{\typeBounds}{\generics{I}{T}}{M}$.
            \end{enumerate}

            By definition of implementOk, this implies that the invocation is still well-typed.

            If $s' \in \overline{s_2}$, then $\subtype{\typeBounds}{\generics{I}{T}.T_{ST}^{*}}{T}$ for some argument of type $T$.
            But then because subtyping is transitive, $\generics{C'}{T'}.T_{ST}^*$ is also a subtype of $T$, so the invocation is still safe.

        \item[Case: P-Inv] Identical to the Inv case, except that we cannot invoke a private transaction on $s'$, as interfaces do not have private transactions, so $s'$ must be one of the arguments.
    \end{description}
\end{proof}

\begin{lemma}[Permission Variable Substitution]
    \label{lem:permvar-subs}
    Suppose
    \begin{enumerate}
        \item $\subperm{\typeBounds}{T_{ST}}{p}$
        \item $\ty{\typeBounds}{\Delta, x : T_C.p}{s}{e}{T}{\Delta'}$
    \end{enumerate}

    Then $\ty{\typeBounds}{\Delta, x : T_C.T_{ST}}{s}{e}{T}{\Delta'}$.
\end{lemma}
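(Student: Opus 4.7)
The plan is to proceed by induction on the derivation of
$\ty{\typeBounds}{\Delta, x : T_C.p}{s}{e}{T}{\Delta'}$, with a case analysis on the last typing rule applied. For almost every rule, $x$ either does not appear in the principal position, or appears only in a way that depends on its contract but not on the concrete form of its mode, so the same rule re-applies verbatim with $T_C.T_{ST}$ in place of $T_C.p$; the only real work is in the handful of rules that inspect the mode of $x$.

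The cases I would check carefully are (i) T-lookup when the looked-up variable is $x$, (ii) T-inv and T-privInv when $x$ is the receiver or an argument, (iii) T-$\nearrow_p$ when $x$ is the transitioning reference, (iv) the assert/T-IsIn family when $x$ is being tested, and (v) T-fieldUpdate when $x.f$ appears on the left. In every one of these the relevant antecedent ultimately asks something of the form $\subperm{\typeBounds}{\mathit{mode}(x)}{\cdot}$, $\maybeOwned{\cdot}$, $\disposable{\typeBounds}{\cdot}$, or a split $\splitType{\cdot}{\cdot}{\cdot}$. For each, I would use $\subperm{\typeBounds}{T_{ST}}{p}$ together with transitivity of $<:_*$ (rule $<:_*$-Trans) to propagate antecedents from $p$ to $T_{ST}$; this is the main workhorse. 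In particular, in T-inv the subpermission check $\subperm{\typeBounds}{T_{STs1}}{T_{this}}$ and its argument analogues still succeed because the passed mode has only become more specific, and \textit{funcArg} is monotone in its first argument among the cases that can now apply, so the output type for $x$ remains the one the original derivation computed.

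The delicate case is T-IsIn-PermVar, because in the original derivation the test is against the permission variable $p$, and that rule branches into the two sub-derivations $e_1$ at $x : T_C.p$ and $e_2$ at $x : T_C.T_{ST}''$ (where $T_{ST}''$ is the ``other'' perm). After substituting $T_{ST}$ for the type of $x$, the rule T-IsIn-PermVar no longer matches structurally; instead one of T-IsIn-Perm-Then, T-IsIn-Perm-Else, T-IsIn-Unowned, T-IsIn-StaticOwnership, or T-IsIn-Dynamic must apply, depending on whether $T_{ST}$ is $\Owned$-like, a state set, $\Shared$, $\Unowned$, or another permission variable. I would dispatch on $T_{ST}$ and, in each sub-case, appeal to the inductive hypothesis on the appropriate premise (only one of $e_1$, $e_2$ is now reachable) together with, where needed, auxiliary subtype/substitution lemmas such as the subtype substitution lemma used in the E-IsIn-PermVar preservation case. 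The same dispatch handles the T-IsIn-Perm-Then/Else cases when $x$ itself is the variable being tested: since $\subperm{\typeBounds}{T_{ST}}{p}$, whichever branch was chosen for $p$ remains admissible for $T_{ST}$.

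The main obstacle I foresee is exactly this T-IsIn-PermVar case: the proof is not a clean ``rule-for-rule'' substitution but instead a structural change of rule, and it is where one has to be most careful that the output context $\Delta'$ is preserved. For the remaining rules, careful bookkeeping of the split and \textit{funcArg} outputs is tedious but entirely routine. The overall structure mirrors the standard substitution lemma for typestate calculi; the novelty is just that the substitutee is a permission rather than a value, so the heavy lifting comes from subpermission transitivity rather than from variable capture or renaming.
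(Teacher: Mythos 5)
Your overall strategy---induction on the typing derivation, using transitivity of $<:_*$ to re-establish each mode-sensitive antecedent---is sound, and it is essentially the induction the paper performs once, generically, in its subtype-replacement lemma. The paper's own proof of this lemma is a one-liner: from $\subperm{\typeBounds}{T_{ST}}{p}$ one gets $\subtype{\typeBounds}{T_C.T_{ST}}{T_C.p}$ by the matching-declarations subtyping rule, and then the subtype substitution corollary (\ref{subtype-substitution}) applies directly. Your inlined version buys self-containedness at the cost of redoing work the paper has already packaged. Note also that the packaged version only guarantees that the output binding for $x$ is a \emph{subtype} of the original, which is the honest statement: your claim that $\Delta'$ is preserved exactly is stronger than what, e.g., the T-lookup case delivers, since splitting $T_C.T_{ST}$ can yield a result type and residual that are proper subtypes of those obtained by splitting $T_C.p$.

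The concrete problem is your treatment of T-IsIn-PermVar. This lemma changes only the \emph{context} entry for $x$; it does not rewrite the expression. So if $e = \ifExpr{x}{\text{Perm}}{q}{e_1}{e_2}$, the test target is still syntactically a permission variable after the replacement, and T-IsIn-PermVar still matches (its side condition $\text{Perm} = \ToPermission{T_{ST}}$ survives because every $T_{ST}$ with $\subperm{\typeBounds}{T_{ST}}{p}$ is \Owned-like, so $\ToPermission{T_{ST}} = \Owned = \ToPermission{p}$). The dispatch you propose into T-IsIn-Perm-Then/Else, T-IsIn-Unowned, T-IsIn-StaticOwnership, or T-IsIn-Dynamic would fail here: those rules require the tested target to be a concrete permission or state set, which it is not. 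That case analysis belongs to genuine type-parameter instantiation, where $p$ is also replaced inside $e$---and indeed the paper performs exactly that dispatch in the E-IsIn-PermVar case of preservation, \emph{after} invoking this lemma on the branch, not inside its proof. So the delicate case you flagged is actually the easy one: drop the dispatch and simply re-apply T-IsIn-PermVar, using the inductive hypothesis on the else-branch premise.
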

\begin{proof}
   Follows from \ref{subtype-substitution}
\end{proof}

\begin{lemma}[Exclusivity of subpermission]
\label{lem:subperm-exclusivity}
    For any permissions $P$ and $P'$:
    \begin{enumerate}
        \item If $\subperm{\cdot}{P}{P'}$ is provable, then $\notsubperm{\cdot}{P}{P'}$ is not provable.
        \item If $\notsubperm{\cdot}{P}{P'}$ is provable, then $\subperm{\cdot}{P}{P'}$ is not provable.
    \end{enumerate}
\end{lemma}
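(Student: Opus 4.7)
The plan is to reduce both parts of the biconditional to antisymmetry of $<:_*$ on concrete permissions. By inversion on the sole rule defining $\notsubperm{\cdot}{P}{P'}$, the judgment is provable exactly when $\subperm{\cdot}{P'}{P}$ holds and $P \neq P'$. Hence part (1) amounts to showing that if $\subperm{\cdot}{P}{P'}$ is derivable with $P \neq P'$ then $\subperm{\cdot}{P'}{P}$ is not derivable, and part (2) is its contrapositive. Both follow once antisymmetry is established on $\{\Owned, \Unowned, \Shared\}$.

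To establish antisymmetry, I would first characterize exactly which ordered pairs $(P, P')$ of concrete permissions admit a derivation of $\subperm{\cdot}{P}{P'}$. Direct inspection of the rules shows that six pairs are derivable by a single non-transitivity rule: $(\Owned,\Owned)$, $(\Unowned,\Unowned)$, and $(\Shared,\Shared)$ via $<:_*$-Refl; $(\Owned,\Unowned)$ and $(\Shared,\Unowned)$ via $<:_*$-U-U; and $(\Owned,\Shared)$ via $<:_*$-O-*. No direct rule derives $(\Unowned,\Owned)$, $(\Unowned,\Shared)$, or $(\Shared,\Owned)$.

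The main obstacle is the $<:_*$-Trans rule, which introduces an intermediate $T_{ST_2}$ that need not itself be a concrete permission. Since the type-bound context is empty, $<:_*$-Var cannot apply, ruling out the case that $T_{ST_2}$ is a permission variable. I would handle the remaining possibilities by a subsidiary induction on derivations, showing that $\subperm{\cdot}{P}{\overline{S}}$ is not derivable for any concrete $P$ (no rule's conclusion has a concrete permission on the left and a state set on the right, so only $<:_*$-Trans could apply, and by the inductive hypothesis on its right premise we would need another state-set intermediate, descending infinitely---formally ruled out by the induction on derivation height). When $T_{ST_2} \in \{\Owned, \Unowned, \Shared\}$, I would case-analyze on $T_{ST_2}$ and invoke the induction hypothesis on each sub-derivation to verify that the composition lands in the list of six derivable pairs.

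Having pinned down the derivable set, antisymmetry is immediate by inspection: none of the three forbidden pairs is derivable, so no two distinct concrete permissions satisfy $<:_*$ in both directions. Part (1) then follows by contradiction---assuming $\notsubperm{\cdot}{P}{P'}$ in addition to $\subperm{\cdot}{P}{P'}$ yields $\subperm{\cdot}{P'}{P}$ with $P \neq P'$, violating antisymmetry---and part (2) follows symmetrically.
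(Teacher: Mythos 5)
Your proof is correct and follows essentially the same route as the paper's: reduce both parts to antisymmetry of $<:_*$ on concrete permissions and establish that by case analysis of the rules. The paper's version is a two-sentence sketch of this argument; your treatment is more rigorous where it matters, namely in showing that the $<:_*$-Trans intermediate cannot escape $\{\Owned, \Unowned, \Shared\}$ (no rule puts a state set to the right of a concrete permission, and $<:_*$-Var is blocked by the empty context), which closes the one gap a careless reading of the rules might leave open.
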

\begin{proof}
By case analysis of the subpermission rules, we can see that every pair of permissions is related.
The only way that $\subperm{\cdot}{P}{P'}$ \textbf{and} $\subperm{\cdot}{P'}{P}$ can be true is if $P = P'$, but then we cannot prove $\notsubperm{\cdot}{P}{P'}$.
\end{proof}

\begin{lemma}[Split compatibility]
\label{split-compatibility}
If \ty{\typeBounds}{\Delta}{s}{\overline{s'}}{\overline{T}}{\Delta'} and \ok{\typeBounds, \Sigma, \Delta} then \ok{\typeBounds, \Sigma, \Delta'}.
\end{lemma}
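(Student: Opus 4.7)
The plan is to prove the lemma by induction on the typing derivation for the sequence $\overline{s'}$. Since each $s'_i$ is a simple expression (variable or location), the only typing rule applicable at each step is T-lookup, which modifies $\Delta$ solely by splitting the type of one binding via $\splitType{T_1}{T_2}{T_3}$ and adding a fresh binding (or substituting) for the result. Thus the problem reduces to showing that a single split preserves global consistency, and the sequence case follows by chaining.

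For the base case ($\overline{s'}$ empty), we have $\Delta' = \Delta$, and consistency is immediate. For the inductive step, I would suppose $\ty{\typeBounds}{\Delta}{s}{s'_1}{T_1}{\Delta_1}$ and $\ty{\typeBounds}{\Delta_1}{s}{\overline{s'_{\geq 2}}}{\overline{T_{\geq 2}}}{\Delta'}$, apply the induction hypothesis after first proving $\ok{\typeBounds, \Sigma, \Delta_1}$ from the single-step claim, and then conclude $\ok{\typeBounds, \Sigma, \Delta'}$. The single-step claim requires checking that after $\splitType{T_1}{T_2}{T_3}$ replaces $b : T_1$ with $b : T_3$ (and possibly introduces a new binding of type $T_2$ for a fresh location representing the result, per the extended T-Lookup over $b$), reference consistency still holds at every object; the domain and range conditions of global consistency hold trivially because no new objects or indirect references are introduced in the heap $\mu$ or $\rho$.

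The core of the argument is a case analysis on the rule used to derive $\splitType{T_1}{T_2}{T_3}$. For Split-Unowned, the reference set for $\rho(b)$ gains only an $\Unowned$ alias (from $T_3 = T_C.\Unowned$), which is compatible with every permission by rules UUCompat, USCompat, UOCompat, and UStatesCompat, while $T_2 = T_1$ preserves the original alias. For Split-shared, both resulting types are $\Shared$, compatible with the existing $\Shared$ alias via SCompat and with any other $\Unowned$ aliases. For Split-owned-shared, the critical observation is that the original aliases to $\rho(b)$ that were present in $\Delta$ must, by $\ok{\typeBounds, \Sigma, \Delta}$ and the compatibility rules, have been at most $\Unowned$ (since $T_1$ was maybe-owned and no $\Shared$ aliases are compatible with an owned type); replacing the owned alias by two $\Shared$ ones is thus still pairwise compatible with everything in $refTypes$. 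Split-unit is immediate since $\Unit$ imposes no object-level constraints.

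The main obstacle will be the Split-owned-shared case, where we must invoke the Alias Compatibility rules carefully to rule out pre-existing $\Shared$ or $\Owned$ aliases to the same object and then verify that the pair of new $\Shared$ references is simultaneously compatible with each other and with all $\Unowned$ aliases and field-backed aliases in $refTypes(\Sigma, \Delta, \rho(b))$. A subtlety is that when $b$ is a field $s.f$ rather than a variable or location, the new type $T_3$ must also remain a legal refinement of the declared field type; since splitting only weakens permissions and never changes the contract, this follows directly from $\subperm{\typeBounds}{T_3}{T_1}$ (by inspection of split) and transitivity of subtyping, so no separate argument is needed beyond noting that $refFieldTypes$ and $ft$ treat the new $\Delta$ binding uniformly.
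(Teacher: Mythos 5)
Your proposal is correct and follows essentially the same route as the paper: the paper likewise reduces the claim to the single-expression case, argues that replacing the split binding's type preserves consistency by case analysis on the splitting rules (delegated there to the mechanized theorem \texttt{splittingRespectsHeap}), and then iterates over the sequence. The only detail worth noting is that in the Split-owned-shared case a pre-existing \Shared{} alias cannot always be ruled out (because of \emph{StateLockCompatible}), but the conclusion still holds since the two new \Shared{} references remain compatible with it.
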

\begin{proof}
For one expression, it suffices to show that replacing $T$ with $T_3$ in $\Delta$ leaves the remaining context consistent with $\Sigma$. The proof of this is by cases of splitting; this is theorem \texttt{splittingRespectsHeap} in heapLemmasforSplitting.agda. For multiple expressions, simply iterate the argument.
\end{proof}

\begin{lemma}[Substitution]
\label{substitution}
If \ty{\typeBounds}{\Delta, x: T_x}{s}{e}{T'}{\Delta', x : T_x'}, then \ty{\typeBounds}{\Delta, l: T_x}{s}{[l/x]e}{T'}{\Delta', l: T_x'}
\end{lemma}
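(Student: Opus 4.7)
The plan is to prove the lemma by straightforward induction on the typing derivation of $\ty{\typeBounds}{\Delta, x: T_x}{s}{e}{T'}{\Delta', x : T_x'}$. The key observation is that $l$ and $x$ are given the same type $T_x$, and the typing rules treat variables and indirect references uniformly (they are both in the set of simple expressions $s$). Hence every premise of every rule that references $x$ can be re-derived with $l$ in place of $x$, producing a derivation of the substituted judgment. Because $l$ is a fresh indirect reference rather than a bound variable name, no $\alpha$-conversion or capture-avoidance is needed for the binder cases.

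Most cases are essentially mechanical. For T-lookup with $e = x$, the substituted expression is $[l/x]x = l$, and the identical splitting premise $\splitType{T_x}{T_2}{T_3}$ lets T-lookup retype $l$ in the renamed context. For compound rules (T-let, T-new, T-fieldUpdate, T-inv, T-privInv, $\nearrow_p$, the T-IsIn family, the T-assert family, T-disown, T-pack, and the two boxing rules), the premises decompose into (a) subexpression typing derivations, to which the inductive hypothesis applies; and (b) side conditions such as subtyping, subpermission, $\textit{disposable}$, $\textit{possibleStates}$, $\textit{funcArg}$, and well-formedness, all of which are purely about types and contracts and are unaffected by renaming a term-level variable. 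Rebuilding the conclusion rule by rule gives the substituted judgment.

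The only subtleties worth flagging explicitly are (i) that the substitution $[l/x]$ must act on both expressions and context entries, so that any field-override entries of the form $x.f : T$ become $l.f : T$ in $\Delta$ and $\Delta'$; and (ii) that when the turnstile subscript equals $x$ (i.e., $s = x$, the $\this$ case that arises at the use site of this lemma in E-Inv), the subscript must be renamed to $l$ as well. The lemma is stated for the case where these renamings either do not occur or are implicit in the notation $[l/x]$; in the intended use within preservation, $x$ is either $\this$ or a formal parameter, the context contains no stale overrides, and the fresh location $l$ is chosen not to clash with any existing binding.

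The main obstacle will be the private invocation rule T-privInv and the field rules T-this-field-def / T-this-field-ctxt / T-fieldUpdate, because they interact with field-override entries in $\Delta$ whose subjects may themselves be the variable being substituted. Handling these uniformly requires extending $[l/x]$ to act on context subjects (not only on free occurrences in $e$), and then checking that the lookup of $l.f$ in the extended substituted context yields the same type as the lookup of $x.f$ in the original. Once this extension is in place, each of those cases reduces to invoking the inductive hypothesis on the subexpressions and re-applying the rule with the renamed subjects.
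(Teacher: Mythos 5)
Your proposal is correct and takes essentially the same approach as the paper, whose entire proof is the one-line instruction to substitute $l$ for $x$ throughout the derivation; your elaboration (uniform treatment of $x$ and $l$ as simple expressions, renaming of field-override subjects $x.f$ to $l.f$, and of the turnstile subscript when $s = x$) just makes explicit the details that the paper leaves implicit.
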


\begin{proof}
Substitute $l$ for $x$ throughout the previous proof.
\end{proof}

\begin{lemma}[Subtype replacement]
\label{subtype-replacement}
	If 
		\begin{itemize}
			\item \ty{\typeBounds}{\Delta, x: T_x}{s}{e}{T'}{\Delta', x : T_x'}
			\item \subtype{\typeBounds}{T_x''}{T_x} 
			\item \sameOwnership{T_x''}{T_x}
		\end{itemize}
		then \ty{\typeBounds}{\Delta, x: T_x''}{s}{e}{T'}{\Delta', x: T_x'''} where \subtype{\typeBounds}{T_x'''}{T_x'}.
\end{lemma}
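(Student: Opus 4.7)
The proof will proceed by induction on the typing derivation of $\ty{\typeBounds}{\Delta, x: T_x}{s}{e}{T'}{\Delta', x : T_x'}$. The idea in every case is that the typing rules only examine the type $T_x$ through subtyping checks (\subtype{}{}{} or \subperm{}{}{}) and ownership-sensitive predicates (\textit{maybeOwned}, \textit{notOwned}, \textit{disposable}, \textit{isAsset}), all of which are stable under simultaneous replacement of $T_x$ by a subtype $T_x''$ with the same ownership. For rules that do not mention $x$ (or mention it only through contexts threaded to subderivations), the result follows directly from the induction hypothesis, possibly composed with transitivity of subtyping (Lemma~\ref{subtyping-reflexivity} gives reflexivity and the definition of $<:$ is easily seen to be transitive) to recover $\subtype{\typeBounds}{T_x'''}{T_x'}$.

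The interesting cases are those where the rule actually inspects the type annotated on $x$. For \textbf{T-lookup} on $x$ itself, I will use the fact that splitting \splitType{T_1}{T_2}{T_3} is preserved under narrowing the input with an ownership-preserving subtype: case-analyzing the split rules shows that $T_x''$ splits as $T_x'' \Rrightarrow T_2' / T_3'$ for some $T_2', T_3'$ with $\subtype{\typeBounds}{T_2'}{T_2}$ and $\subtype{\typeBounds}{T_3'}{T_3}$, giving the required weaker output. For \textbf{T-this-field-def} and \textbf{T-this-field-ctxt}, narrowing $T_x$ can only add fields to $\mathit{intersectFields}(T_x)$, so the field lookup still succeeds. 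For \textbf{T-inv} and \textbf{T-privInv}, when $x$ is the receiver $s_1$, I rely on transitivity of subpermission (\subperm{}{}{}) to maintain $\subperm{\typeBounds}{T_{STs1}}{T_{this}}$ after narrowing, and I use the \textit{funcArg} rules to check that the output type $T_{s1}'$ computed from $T_x''$ is a subtype of the one computed from $T_x$. For \textbf{T-$\nearrow_p$}, subtyping preserves both the subpermission premise and the disposability of the union-fields (since $\sameOwnership{T_x''}{T_x}$ prevents an asset-owning reference from being introduced).

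For the dynamic state tests (\textbf{T-IsInStaticOwnership}, \textbf{T-IsIn-Dynamic}, \textbf{T-IsIn-PermVar}, \textbf{T-IsIn-Perm-Then/Else}, \textbf{T-IsIn-Unowned}), I apply the induction hypothesis to both branches after narrowing the type of $x$ used in each branch (e.g.\ replacing $T_C.\overline{S}$ by $T_C.\overline{S}$ unchanged in the then-branch because $T_x''$ still refines to the tested states, and refining the residual $T_C.(\overline{S_x}\setminus\overline{S})$ appropriately in the else-branch using $\possibleStates{\typeBounds}{\cdot}$ monotonicity). I then combine the two output contexts via \textit{merge}, invoking Lemma~\ref{merge-subtyping} to conclude that the merged output is itself a subtype of the original. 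For \textbf{T-disown} and \textbf{T-pack}, narrowing is straightforward since the premises use only \subperm{}{}{} or \sameOwnership{}{}{} (which is preserved by the $\sameOwnership{T_x''}{T_x}$ hypothesis together with transitivity of $\approx$).

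The main obstacle, I expect, will be the invocation cases together with the shared-mode dynamic state test. Invocations are delicate because the output type of the receiver is computed by the \textit{funcArg} function, which branches on whether the formal declares \Unowned or not: I need to check each of the three \textit{funcArg} cases and verify that narrowing the ``passed'' type produces a ``passed'' type that still satisfies the premise and yields a subtype result; the \textsc{funcArg-shared-unowned} and \textsc{funcArg-owned-unowned} cases in particular require the $\sameOwnership{T_x''}{T_x}$ hypothesis, because otherwise a narrowing from \Owned to \Unowned would destroy the shape of the output. The shared-mode dynamic check is tricky because the rule temporarily promotes a \Shared reference to an owning state-set type inside the then-branch and relies on $\Sigma_\phi$ for safety; fortunately this promotion happens uniformly regardless of the precise input subtype, so narrowing goes through provided we rebuild the context with the \textit{same} promoted type. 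Once these cases are dispatched, the remaining cases are routine applications of the induction hypothesis together with subtyping transitivity.
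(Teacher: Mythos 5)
Your proposal is correct and follows essentially the same route as the paper: induction on the typing derivation, with the crux being that type splitting in \textsc{T-lookup} is preserved under narrowing to an ownership-equal subtype (the $\sameOwnership{T_x''}{T_x}$ hypothesis being exactly what rules out the \textsc{Split-shared} failure), and the remaining cases dispatched by the induction hypothesis plus subtyping transitivity. The paper writes out only \textsc{T-lookup} and \textsc{T-IsInStaticOwnership} explicitly, so your systematic enumeration of the invocation, transition, and merge cases is if anything more complete, though note that in \textsc{T-lookup} the split of $T_x''$ may yield a result type that is a strict subtype of $T'$ rather than $T'$ itself --- a wrinkle the paper's own proof glosses over as well.
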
 
\begin{proof}
By induction on the typing derivation and the subtyping derivation. Relevant cases include:

\begin{description}
        \item[Case: T-lookup.]  \mbox{}
        	\begin{enumerate}
				\item By assumption:
            	\begin{enumerate}
    				\item $\subtype{\typeBounds}{T_x''}{T_x}$
    			\end{enumerate}				
        
        \item By inversion of T-lookup: 
        	\begin{enumerate}
				\item \label{subtype-split} $\splitType{T_x}{T'}{T_x'}$
			\end{enumerate}				
		
		\item Note that it suffices to show that \splitType{T_x''}{T'}{T_x'''}. Consider the cases for \ref{subtype-split}:
			\begin{description}
				\item[Case: Split-unowned]
					$T_x' = T_C.\Unowned$. Split-Unowned applies to $T_x''$, resulting in $T_x''' = T_C.\Unowned$.
				\item[Case: Split-shared]
					By assumption, $T_x = T_C.\Shared$. If $T_x'' = T_C.\Shared$, then the result follows by Split-Shared. Otherwise, $\maybeOwned{T_x''}$, but this contradicts the assumption that \sameOwnership{T_x''}{T_x}.
					
				\item[Case: Split-owned-shared]
					By inversion of maybeOwned, we have the following cases:
						\begin{description}
							\item[Subcase: $T_x = T_C.p$]
								All subtypes of $T_C.p$ are themselves maybeOwned and nonAsset, so Split-owned-shared applies.
							\item[Subcase: $T_x = T_C.\Owned$]
								All subtypes of $T_C.\Owned$ are themselves maybeOwned and nonAsset, so Split-owned-shared applies.
							\item[Subcase: $T_x = T_C.\overline{S}$]
								All subtypes of $T_C.\overline{S}$ are themselves maybeOwned and nonAsset, so Split-owned-shared applies.
						\end{description}
				\item[Case: Split-unit]
					$T_x = T_x' = \Unit$. Split-unit applies for $T_x''$, since the only subtype of $\Unit$ is $\Unit$. Then $T_x''' = \Unit$, which is a subtype of $T_x'$.
			\end{description}
		\end{enumerate}
		
        \item[Case: T-IsIn-StaticOwnership.] \subtype{\typeBounds}{T_x''}{T_x} results in a smaller set of initial possible states for $x$, resulting in a potentially smaller set of possible states for $x$ in the resulting context. This explains why it is not necessarily the case that $T_x''' = T_x''$.
        
\end{description}
\end{proof}

\begin{corollary}[Subtype substitution]
\label{subtype-substitution}
	If 
		\begin{itemize}
			\item \ty{\typeBounds}{\Delta, x: T_x}{s}{e}{T'}{\Delta', x : T_x'} and 
			\item \subtype{\typeBounds}{T_x''}{T_x} 
		\end{itemize}
		then \ty{\typeBounds}{\Delta, l: T_x''}{s}{[l/x]e}{T'}{\Delta', l: T_x'''} where \subtype{\typeBounds}{T_x'''}{T_x'}.
\end{corollary}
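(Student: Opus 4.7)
The plan is to derive this corollary by composing the two preceding results: the Substitution Lemma (to rename $x$ to $l$ under an identical input type) and the Subtype Replacement Lemma (to weaken the input type from $T_x$ down to $T_x''$). Specifically, I would first apply Substitution to the hypothesis $\ty{\typeBounds}{\Delta, x: T_x}{s}{e}{T'}{\Delta', x : T_x'}$ to obtain $\ty{\typeBounds}{\Delta, l: T_x}{s}{[l/x]e}{T'}{\Delta', l: T_x'}$, and then apply Subtype Replacement with the given $\subtype{\typeBounds}{T_x''}{T_x}$ to conclude $\ty{\typeBounds}{\Delta, l: T_x''}{s}{[l/x]e}{T'}{\Delta', l: T_x'''}$ with $\subtype{\typeBounds}{T_x'''}{T_x'}$, as required.

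The principal obstacle is that Subtype Replacement carries an extra hypothesis, $\sameOwnership{T_x''}{T_x}$, which the corollary omits. I would discharge this by case analysis on the derivation of $\subtype{\typeBounds}{T_x''}{T_x}$, appealing to the subpermission rules \textsc{Refl}, \textsc{S-S'}, \textsc{S-O}, \textsc{O-*}, \textsc{U-U}, and \textsc{Var}. Inspecting these rules shows that whenever $T_x$ is maybeOwned (i.e.\ of form $T_C.\overline{S}$, $T_C.\Owned$, or $T_C.p$ with an owned bound), any subtype $T_x''$ is again maybeOwned, so $\sameOwnership{T_x''}{T_x}$ follows immediately and the composition above is direct. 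The only residual cases are $T_x = T_C.\Unowned$ or $T_x = T_C.\Shared$ with $T_x''$ maybeOwned (e.g.\ $\overline{S} <:_* \Shared$ through $\Owned$).

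For those residual cases, the surplus ownership in $T_x''$ is harmless because the derivation of the original judgement only ever exercised $x$ at the weaker permission of $T_x$. I would handle them by appealing to Split to decompose $T_x''$ into a component of the same ownership as $T_x$ (which is fed into Subtype Replacement) together with a residual retained in the output context as the type of $l$. The resulting output type $T_x'''$ is then recovered as the join of the residual with the $T_x'$ produced by the weakened application, and $\subtype{\typeBounds}{T_x'''}{T_x'}$ follows by inspection of the Split rules, since splitting only produces supertypes or ownership-preserving refinements of its input. The routine parts of the argument — verifying subtyping for each splitting case, and checking that no typing rule in the original derivation inspects ownership in a way that would be destabilized by the weaker $T_x$ — are mechanical given the structural analysis already performed in the proof of Subtype Replacement.
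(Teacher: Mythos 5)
Your main line of attack is exactly the paper's: the published proof of this corollary is the single sentence ``Follows by applying both [Subtype Replacement] and [Substitution],'' i.e.\ precisely the composition you describe. Where you go beyond the paper is in noticing that this composition does not literally typecheck: Subtype Replacement carries the hypothesis $\sameOwnership{T_x''}{T_x}$, which the corollary drops, and the paper never discharges it. That observation is correct and worth making explicit. Your case analysis also correctly isolates where the hypothesis is automatic (whenever $T_x$ is maybe-owned, every subtype is maybe-owned, so $\approx$-O-O applies) and where it genuinely fails ($T_x = T_C.\Unowned$ or $T_C.\Shared$ with $T_x''$ owned, which the rule $<:_*$-U-U does permit).

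The weak point is your treatment of those residual cases. The proposed mechanism --- split $T_x''$ into an ownership-matching component plus a residual, run Subtype Replacement on the component, and then recover $T_x'''$ as ``the join of the residual with the $T_x'$ produced'' --- is not executable in this calculus: $\Delta$ binds a single type to $l$, there is no judgment that recombines a retained residual with the output type of the same variable, and the $\oplus$ operation is defined only for merging the two branches of a conditional, not for rejoining a split. So as written the residual cases are not actually closed. Two honest ways out: either (a) observe that every use of the corollary in the paper (the $<^l$-substitution corollary, whose hypothesis already packages $\sameOwnership{}{}$, and the E-IsIn-PermVar case, where both types are owned) satisfies the ownership side condition, and restate the corollary with that hypothesis added; or (b) prove the residual cases directly by induction on the typing derivation, showing that a derivation using $x$ at $\Unowned$ or $\Shared$ remains valid when $x$'s type is strengthened to an owned subtype --- which is plausible but is a new induction, not a consequence of the two cited lemmas.
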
 
\begin{proof}
	Follows by applying both \ref{subtype-replacement} and \ref{substitution}.
\end{proof}

\begin{corollary}[l-stronger substitution]
\label{l-stronger-substitution}
If \ty{\typeBounds}{\Delta}{s}{e}{T}{\Delta'} and $\lStronger{\Delta''}{\typeBounds, \Sigma'}{\Delta}$, then \ty{\typeBounds}{\Delta''}{s}{e}{T}{\Delta'''} with $\lStronger{\Delta'''}{\typeBounds, \Sigma'}{\Delta'}$.
\end{corollary}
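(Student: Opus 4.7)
The plan is to reduce the corollary to repeated applications of the \emph{subtype replacement} lemma and the \emph{substitution} lemma already proved. Unfolding the definition of $\lStronger{\Delta''}{\typeBounds, \Sigma'}{\Delta}$, for every binding $l' : T' \in \Delta$ there is a corresponding $l : T \in \Delta''$ with $\subtype{\typeBounds}{T}{T'}$, $\sameOwnership{T}{T'}$, and $\Sigma'_\rho(l) = \Sigma'_\rho(l')$. Each such pair will be reconciled by a single application of either the substitution lemma (to rename $l'$ to $l$ when the indirect references differ) or subtype replacement (to strengthen $T'$ to $T$ when the locations agree but the types do not).

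First I would induct on the number of bindings on which $\Delta$ and $\Delta''$ disagree. The base case, when the two contexts coincide pointwise, is immediate with $\Delta''' = \Delta'$ and reflexivity of $<^l$. For the inductive step I pick one offending binding $l' : T' \in \Delta$. If its partner in $\Delta''$ lives at a different indirect reference $l \neq l'$, I first apply the substitution lemma to rename $l'$ to $l$ throughout $e$; the $<^l$ condition $\Sigma'_\rho(l) = \Sigma'_\rho(l')$ ensures that the underlying object, and hence the dynamic meaning, is unchanged. Once the locations agree, apply subtype replacement with the refinement supplied by the $<^l$ hypothesis. The inductive hypothesis, applied to the resulting smaller disagreement, yields the desired derivation $\ty{\typeBounds}{\Delta''}{s}{e}{T}{\Delta'''}$.

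To obtain the final claim $\lStronger{\Delta'''}{\typeBounds, \Sigma'}{\Delta'}$, I would thread the strengthening through the chain of steps. Subtype replacement guarantees that the post-context binding for the strengthened location is itself a subtype of the original post-context binding, and it preserves ownership because $\sameOwnership{T}{T'}$ holds at the input. Substitution leaves types untouched, and since $\Sigma'_\rho(l) = \Sigma'_\rho(l')$, the $<^l$ requirement on $\rho$-images is immediately satisfied. Composing these invariants across all strengthened bindings yields the required $<^l$ relationship between $\Delta'''$ and $\Delta'$.

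The main obstacle is coordinating the two operations so that each application of subtype replacement remains licensed by the same-ownership clause of $<^l$; rules that can refine the possible state of a variable across branches, such as T-IsIn-StaticOwnership, require that subtype replacement deliver an output type that is itself a proper subtype of the original output rather than an unrelated type. Provided the full conclusion of subtype replacement---including the subtype relationship between the new and old output types---is available, this composition is routine, which is why the paper can dispatch the corollary with a one-line proof.
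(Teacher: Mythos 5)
Your proof takes essentially the same route as the paper's: the paper dispatches this corollary in one line by induction over the context, applying the subtype-substitution corollary (which is exactly your combination of the substitution and subtype-replacement lemmas) together with the definition of $<^l$, using the same-ownership clause to license each replacement just as you do. The only wrinkle---shared with the paper's own one-liner, since subtype substitution likewise concludes with a typing of $[l/x]e$---is that your location-renaming step literally produces a typing of $[l/l']e$ rather than of $e$ itself, so it does not distinguish your argument from the paper's.
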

\begin{proof}
By induction on $\Delta'$, applying \ref{subtype-substitution} and the definition of $\lStronger{\Delta''}{\typeBounds, \Sigma'}{\Delta}$.
\end{proof}

\begin{lemma}[l-stronger consistency]
\label{l-stronger-consistency}
If  $\lStronger{\Delta'}{\typeBounds, \Sigma'}{\Delta}$ and \ok{\typeBounds, \Sigma, \Delta} then \ok{\typeBounds, \Sigma, \Delta'}.
\end{lemma}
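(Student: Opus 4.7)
The plan is to unfold the definition of global consistency on $\ok{\typeBounds, \Sigma, \Delta'}$ and discharge each conjunct using the l-stronger hypothesis together with the assumed consistency $\ok{\typeBounds, \Sigma, \Delta}$. Several conjuncts are immediate because $\Sigma$ is unchanged: the constraint $range(\rho) \subset dom(\mu) \cup \{\unit\}$, the permission-variable coverage condition on $\permVarMap$, and the shape constraint that every indirect reference entry has a contract type all survive verbatim, using that the subtyping $\subtype{\typeBounds}{T}{T'}$ required by l-stronger preserves the syntactic shape of the type (it cannot turn a $\generics{C}{T}.T_{ST}$ into $\Unit$ or vice versa). The containment $dom(\Delta') \subset dom(\rho) \cup dom(\mu)$ follows directly: for each $l' : T' \in \Delta'$, l-stronger supplies $l : T \in \Delta$ with $\rho(l) = \rho(l')$, so since $l \in dom(\rho)$ by consistency of $\Delta$, we get $l' \in dom(\rho)$ too.

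The real work is reference consistency: for each $o \in dom(\mu)$, every pair of aliases to $o$ recorded in $\Delta'$, in the fields of heap objects, and in $\rho$ must remain pairwise compatible. Aliases originating from fields of heap objects or from $\rho$ are unchanged from $\Sigma$, so they are exactly the ones already known compatible with the $\Delta$-aliases. Thus the only novel cases are (a) a $\Delta'$-alias paired with a non-$\Delta'$-alias, and (b) two $\Delta'$-aliases to the same object. For case (a), let $l' : T' \in \Delta'$ alias $o$; l-stronger gives $l : T \in \Delta$ with $\subtype{\typeBounds}{T}{T'}$, $\sameOwnership{T}{T'}$, and $\rho(l) = o$. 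Reference consistency of $\Delta$ then ensures $T$ is compatible with every other alias $T_0$ of $o$, and Lemma \ref{subtype-compat} (subtype compatibility) lifts this to $T'$ and $T_0$.

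Case (b) is the main obstacle, because the existential form of l-stronger permits distinct $\Delta'$-aliases $l_1' : T_1'$ and $l_2' : T_2'$ to be witnessed by the \emph{same} $l : T \in \Delta$, and also because the pairwise compatibility predicate is sensitive to ownership (two owned aliases to one object are never compatible). The key leverage is the $\sameOwnership{T}{T_i'}$ conjunct in the definition of l-stronger: since both $T_1'$ and $T_2'$ share ownership with $T$, they share ownership with each other, and in particular cannot simultaneously upgrade a non-owning $T$ into two owning references. When the witnesses are distinct, $T_1$ and $T_2$ were already compatible in $\Delta$, and two applications of Lemma \ref{subtype-compat} (one on each side) transfer the compatibility to $T_1'$ and $T_2'$. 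When they coincide, ownership equality forces $T_1'$ and $T_2'$ to both be non-owning or both derive ownership from the same underlying source, and a direct case analysis of the compatibility rules confirms the pair is compatible.

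Finally, the side-condition in reference consistency that heap objects have the right number of fields, and the $StateLockCompatible$ exception for objects in $\Sigma_\phi$, both depend only on $\Sigma$ and on the existence of some alias of appropriate form; these are inherited from $\ok{\typeBounds, \Sigma, \Delta}$ and are unaffected by passing to supertypes. I expect the main difficulty of the formal write-up to be case (b) above, both because the definition of l-stronger does not enforce a bijection between its witnesses and because the subtype compatibility lemma alone is not strong enough without the sameOwnership side condition; beyond that point the proof is a mechanical unfolding of the definitions.
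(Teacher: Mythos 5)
Your overall strategy --- unfold global consistency, observe that the $\Sigma$-only conjuncts carry over, and use Lemma \ref{subtype-compat} to transfer reference consistency to the new aliases --- is exactly the skeleton of the paper's proof, which reads in its entirety ``by induction on $\Delta$ and application of subtype compatibility.'' Your write-up is considerably more careful than the paper's, in particular in isolating the pairwise compatibility of two $\Delta'$-aliases as the real content of the lemma.

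There are, however, two concrete problems. First, you instantiate the $<^l$ hypothesis in the converse direction from the printed definition. As defined, $\lStronger{A}{\typeBounds,\Sigma}{B}$ quantifies over the entries of the \emph{weaker} context $B$ and produces witnesses in $A$ whose types are \emph{subtypes}; so the hypothesis $\lStronger{\Delta'}{\typeBounds,\Sigma'}{\Delta}$ says that every $l : T \in \Delta$ has some $l' : T' \in \Delta'$ with $T' <: T$ on the same object. You use it as: every $l' : T' \in \Delta'$ has a witness $l : T \in \Delta$ with $T <: T'$. Your reading is the one your argument needs --- Lemma \ref{subtype-compat} only transfers compatibility from a type to its \emph{supertypes}, and you need the hypothesis to constrain \emph{every} entry of $\Delta'$, which the literal reading does not --- but as written your case (a) does not follow from the stated hypothesis, and you should say explicitly which reading you are taking and check it against the lemma's use site (E-letCongr in preservation). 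Second, your case (b) does not actually close: if two distinct entries $l_1' : T_1'$ and $l_2' : T_2'$ of $\Delta'$ alias the same object and share a single witness $l : T \in \Delta$ with, say, $T = T_C.\overline{S}$ and $T_1' = T_2' = T_C.\Owned$, then both subtyping and $\sameOwnership$ are satisfied, yet two $\Owned$ aliases to one object are never compatible (there is no Owned--Owned compatibility rule, consistent with unicity of ownership). The $\sameOwnership$ conjunct does not rescue this; you need an additional hypothesis (e.g., that the witnesses are distinct for distinct entries, or that $\Delta'$ introduces at most one owning alias per object), which neither the definition of $<^l$ nor the paper's one-line proof supplies.
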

\begin{proof}
By induction on $\Delta$ and application of subtype compatibility (\ref{subtype-compat}).
\end{proof}

\begin{lemma}[Strengthening]
If \ty{\typeBounds}{\Delta, s': T_0}{s}{e}{T}{\Delta', s': T_1}, and $s'$ does not occur free in $e$, then \ty{\typeBounds}{\Delta}{s}{e}{T}{\Delta'}.
\end{lemma}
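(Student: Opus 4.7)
The plan is to proceed by induction on the derivation of $\ty{\typeBounds}{\Delta, s': T_0}{s}{e}{T}{\Delta', s': T_1}$. The key observation is that since $s'$ does not occur free in $e$, no typing rule can have been applied that uses $s'$ in a ``principal'' position (as the receiver of a field access, invocation, transition, assignment, lookup, disown, assert, or dynamic-state-test). Consequently, whichever rule concludes the judgment, it never inspected or consumed $s': T_0$ among its premises, and it never produced $s': T_1$ by modifying that mapping. Hence every premise derives a judgment whose input and output contexts each still contain $s'$ at the same type, and by the induction hypothesis each premise retypes in the corresponding context with $s'$ deleted. Reassembling the rule with these strengthened premises yields $\ty{\typeBounds}{\Delta}{s}{e}{T}{\Delta'}$, establishing $T_1 = T_0$ as a by-product.

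First I would dispatch the base cases, which are the value rules (T-lookup on $b \neq s'$, T-()), together with the rules whose premises do not recurse on subexpressions (T-assert*, T-disown on references other than $s'$, T-pack). For these, inspection of each rule and the hypothesis that $s' \notin FV(e)$ shows that the rule is applied to some $b$ (or some $s.f$) distinct from $s'$, so removing $s'$ from the input and output contexts leaves the rule applicable unchanged. In the T-pack and T-this-field-* cases I would additionally observe that any $s'.f$ entry cannot appear in the context being inspected, because $s \neq s'$ (otherwise field access through $s$ would implicitly reference $s'$, contradicting freshness) and field overrides in $\Delta$ are keyed by the receiver subscript $s$.

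Next I would handle the rules with recursive premises: T-let, T-fieldUpdate, T-inv, T-privInv, T-$\nearrow_p$, T-state-mutation-detection, T-reentrancy-detection, and the whole family of T-IsIn rules. For T-let the induction hypothesis applies directly to $e_1$ and to $e_2$, noting that $s'$ does not become bound by the let (variable freshness) so $s'$ remains non-free in the recursive expressions. For the invocation and transition rules I would check that none of the auxiliary side-conditions ($\funcArg$, $unionFields$, $fieldTypes_s$, $\disposable{}{}$, $\subsOk{}{}{}$) depend on $s': T_0$; they range over formal parameters, fields of $s$, or generic bounds, none of which reference $s'$. The sub-derivations' input/output contexts can be written as $\Delta^* , s': T_0$ and $\Delta^{**}, s': T_0$ throughout, so the induction hypothesis gives strengthened versions and the rules reassemble.

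The main obstacle will be the T-IsIn rules, whose output contexts are produced by the auxiliary function $merge(\Delta', \Delta'')$. There I must verify that $merge$ commutes with deletion of a shared entry: that is, if $\Delta' = \Delta'_0, s': T_1$ and $\Delta'' = \Delta''_0, s': T_1$ arise from the two branches, then $merge(\Delta'_0, \Delta''_0) = \Delta_f \setminus \{s': T_1\}$ whenever $merge(\Delta', \Delta'') = \Delta_f$. This follows by case analysis on the merge rules (the $\oplus$ rule for the matching $s'$ entries contributes exactly $s': T_1 \oplus T_1 = T_1$, and the Dispose-disposable case is inapplicable since $s'$ is in both contexts), but spelling this out carefully is where the routine book-keeping lives. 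Once that commutation lemma is in hand, the T-IsIn cases close mechanically, and the induction is complete.
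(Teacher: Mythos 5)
Your proposal is correct and takes essentially the same approach as the paper, which proves this lemma by induction on the typing derivation with the observation that, since $s'$ does not occur free in $e$, the binding $s':T_0$ is never needed in any rule's premises. Your version simply spells out the case analysis (including the $merge$ commutation detail for the T-IsIn rules) that the paper leaves implicit.
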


\begin{proof}
By induction on the typing derivation. Since $s'$ does not occur free in $e$, $s'$ must not be needed in either proof.
\end{proof}

\begin{lemma}[Weakening]
If \ty{\typeBounds}{\Delta}{s}{e}{T}{\Delta'}, and $s'$ does not occur free in $e$, then \ty{\typeBounds}{\Delta, s': T_0}{s}{e}{T}{\Delta', s': T_1}.
\end{lemma}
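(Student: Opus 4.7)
The plan is to proceed by structural induction on the derivation of $\ty{\typeBounds}{\Delta}{s}{e}{T}{\Delta'}$, showing in each case that extending both the input and the output context with a binding $s' : T_0$ (where $s'$ does not occur free in $e$) yields a valid derivation, with $T_1 = T_0$ in essentially every case. The intuition is that no typing rule inspects, consumes, or rewrites a binding unless that binding's variable appears syntactically in the expression being checked; since $s'$ is absent from $e$, every rule can simply carry $s': T_0$ through untouched from its input to its output context.

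First I would handle the ``local'' rules that do not recurse: \textsc{T-lookup}, \textsc{T-()}, \textsc{T-assertStates}, \textsc{T-assertPermission}, \textsc{T-assertInVar(Already)}, \textsc{T-disown}, and \textsc{T-pack}. For these, one simply observes that the premises continue to hold when we extend the contexts by $s' : T_0$: the split judgment in \textsc{T-lookup} is independent of $s'$; the field-consistency premises of \textsc{T-pack} only range over bindings of the form $s.f$, and since $s'$ is a fresh variable distinct from any such field name, no new obligations are introduced. Next I would handle the compound rules whose premises type subexpressions with strictly smaller derivations: \textsc{T-let}, \textsc{T-new}, \textsc{T-this-field-def/ctxt}, \textsc{T-fieldUpdate}, \textsc{T-inv}, \textsc{T-privInv}, \textsc{T-$\nearrow_p$}, \textsc{T-state-mutation-detection}, and \textsc{T-reentrancy-detection}. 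Each of these follows by applying the induction hypothesis to each subderivation (using that free variables of a subexpression are also free variables of the whole expression, so $s'$ remains absent) and then reassembling the conclusion with $s': T_0$ threaded through every intermediate context.

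The subtle cases are the conditional rules \textsc{T-IsInStaticOwnership}, \textsc{T-IsIn-Dynamic}, \textsc{T-IsIn-PermVar}, \textsc{T-IsIn-Perm-Then/Else}, and \textsc{T-IsIn-Unowned}, because each branch is typed separately and the output context is obtained via $\mathit{merge}$. Here the main obstacle is a small auxiliary observation: $\mathit{merge}$ commutes with extension by a fresh binding, in the sense that
\[
  \mathit{merge}(\Delta_1, s': T_0;\; \Delta_2, s': T_0) \;=\; \mathit{merge}(\Delta_1;\Delta_2),\, s': T_0,
\]
which follows directly from the $\oplus$-rule of the merge definition (applied to the identical bindings for $s'$) and from the \textsc{Dispose-disposable} case being irrelevant since $s'$ appears on both sides. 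With this fact in hand, the conditional cases reduce to applying the induction hypothesis on each branch and then merging.

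Finally, I would dispatch the well-formedness rules \textsc{PublicTransactionOK} and \textsc{PrivateTransactionOK} and the auxiliary judgments, none of which introduce new difficulties because the transaction body is typechecked in a context the programmer pins down up front, so weakening lifts directly through the body derivation. The hard step, as noted, is purely the merge-commutation lemma for the dynamic-state-test rules; once that observation is made, every remaining case is a routine reapplication of the induction hypothesis with $s' : T_0$ threaded through, yielding $T_1 = T_0$ in the output context.
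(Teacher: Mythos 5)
Your proposal is correct and takes essentially the same approach as the paper, which simply says ``by induction on the typing derivation; since $s'$ does not occur free in $e$, it is not needed in the proof.'' You flesh out the case analysis (including the useful observation that $\mathit{merge}$ commutes with extension by a binding present identically on both sides, and that one may take $T_1 = T_0$ throughout), but the underlying argument is the one the paper intends.
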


\begin{proof}
By induction on the typing derivation. Since $s'$ does not occur free in $e$, $s'$ must not be needed in either proof.
\end{proof}

\begin{lemma}[Merge consistency]
If \ok{\typeBounds, \Sigma, \Delta} and \ok{\typeBounds, \Sigma, \Delta'}, then \ok{\typeBounds, \Sigma, merge(\Delta, \Delta')}.
\end{lemma}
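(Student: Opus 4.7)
The plan is to induct on the derivation of $merge(\Delta, \Delta') = \Delta''$, using the three defining rules (Sym, $\oplus$, Dispose-disposable) as cases. For each rule, I will show that every binding that ends up in $\Delta''$ has a type that is still compatible, in the sense of \compatibleTypes{}{}, with every other alias recorded in $\Sigma$, and that the domain and free-variable conditions required by \textbf{ok} still hold. Since the heap $\Sigma$ is unchanged by $merge$, the only obligation is to show that the updated static bindings remain consistent with $\Sigma$.

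For the Sym case, the conclusion is immediate from the inductive hypothesis applied with the two contexts swapped, since global consistency is symmetric in the sense that it does not distinguish the origin of a binding. For the Dispose-disposable case, a binding $x : T$ is simply dropped; removing a binding from a consistent context can only weaken the set of aliases tracked in the context, and since $T$ is disposable, no outstanding linearity or ownership obligation is violated. Here I would also rely on the fact that $x$ does not occur in $\Delta'$, so no aliasing constraints from the other branch reference it.

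The main work is in the $\oplus$ case, where we extend $merge(\Delta_0; \Delta_0') = \Delta_0''$ with a new binding $d : T \oplus T'$. By the inductive hypothesis we have \ok{\typeBounds, \Sigma, \Delta_0''}, so it suffices to show that $T \oplus T'$ is a valid type for $d$ in $\Sigma$, i.e., that it is compatible with every other alias type in $refTypes(\Sigma, \Delta_0'', d)$ (when $d$ points to some object $o$) and that $d \in dom(\rho) \cup dom(\mu)$. For the latter, the domain condition, membership is inherited from either $\Delta$ or $\Delta'$. For the former, I would case-split on the definition of $T \oplus T'$:
\begin{itemize}
    \item In the reflexive case $T \oplus T = T$, compatibility is immediate from \ok{\typeBounds, \Sigma, \Delta}.
    \item When $T \oplus T' \in \{T_C.\Owned, T_C.\overline{S \cup S'}\}$, the result is at least as permissive (with respect to $<:_*$) as $T$, so I appeal to subtype compatibility (Lemma~\ref{subtype-compat}) using the fact that $T$ itself was compatible with all aliases in $\Sigma$.
    \item When $T \oplus T' = T_C.\Unowned$, compatibility holds trivially since \Unowned{} is compatible with every other mode.
    \item The declaration-variable and interface-contract cases reduce to the permission-level cases via the recursive definition, so the same subtype-compatibility argument applies.
\end{itemize}

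The hard part will be this $\oplus$ case, specifically ensuring that the joined type does not over-claim ownership relative to other aliases recorded outside the two merged branches. The key observation that makes this go through is that both $\Delta$ and $\Delta'$ were individually consistent with the \emph{same} $\Sigma$, so any owning alias elsewhere in $\Sigma$ already forced $T$ and $T'$ to be non-owning on the merged variable; hence $T \oplus T'$ cannot introduce a fresh owning reference and reference consistency is preserved. With this invariant established, the three cases compose into a direct inductive proof of the lemma.
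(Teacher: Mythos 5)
Your proposal is correct and follows essentially the same route as the paper's proof: induction on the merge derivation, with Sym and Dispose-disposable handled trivially and the real work in the $\oplus$ case, where one shows that $T \oplus T'$ remains compatible with every alias in $refTypes(\Sigma, \cdot, d)$. The only (harmless) difference is that you discharge the owned and state-union subcases uniformly via $\oplus$-subtyping plus subtype compatibility, whereas the paper case-analyzes what the other alias could be; both arguments go through.
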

\begin{proof}

By induction on $merge(\Delta, \Delta')$.
\begin{description}
\item[Case: Sym.] By the induction hypothesis, \ok{\typeBounds, \Sigma, merge(\Delta', \Delta)}, and $merge(\Delta', \Delta) = merge(\Delta, \Delta')$.

\item[Case: $\oplus$.] By inversion, $\Delta = \Delta'', x : T$ and $\Delta' = \Delta''', x : T'$. Because $\Delta''$ is a subset of $\Delta'', x : (T \oplus T')$, by the induction hypothesis, \ok{\typeBounds, \Sigma, merge(\Delta'', \Delta''')} (the induction hypothesis applies because $dom(\Delta'') \subset dom(\Delta'', x : (T \oplus T'))$ and $\Delta''(x') = \Delta(x')$ for all $x \neq x'$). Therefore, by the definition of consistency, it suffices to show that $T \oplus T'$ is compatible with all $T'' \in refTypes(\Sigma, \Delta'')$. We assume that either $\compatibleTypes{T''}{T}$ or $\compatibleTypes{T''}{T'}$.
	\begin{description}
		\item[Subcase: $T \oplus T = T$.] Anything compatible with T is trivially compatible with T.
        \item[Subcase: $T_C.Owned \oplus T_C.\overline{S} = T_C.Owned$.] If $T''$ is compatible with $T_C.Owned$, then it is proved. Otherwise, $T''$ is compatible with $T_C.\overline{S}$ (by inspection of the definition of $\leftrightarrow$).
            In particular, $T''$ must be $T_C.Unowned$, in which case rule UnownedOwnedCompat applies.
        \item[Subcase: $T_C.Shared \oplus T_C.Unowned = T_C.Unowned$]. If $T''$ is compatible with $T_C.Unowned$, then it is proved.
            Otherwise, $T''$ is compatible with $T_C.Shared$, and by definition of $\leftrightarrow$, either $T'' = T_C.Shared$ or $T'' = T_C.Unowned$.
            The later case was already addressed, and in the former case, SharedCompat gives $\compatibleTypes{T''}{T_C.Shared}$.
		\item[Subcase: $T_C.\overline{S} \oplus T_C.\overline{S'} = T_C.(S \cup S')$.] The only compatibility rule that could have applied was UnownedStatesCompat, and it still applies to $T_C.(S \cup S')$.
        \item[Subcase: $\generics{C}{T}.T_{ST} \oplus \generics{I}{T'}.T_{ST}' = \generics{I}{T \oplus T'}.T_{ST} \oplus \generics{I}{T \oplus T'}.T_{ST}' = \generics{I}{T^*}.T_{ST}^*$.]
            If $T''$ is compatible with $\generics{C}{T}.T_{ST}$, then it will also be compatible with $\generics{I}{T^*}.T_{ST}^*$ by SubtypeCompat, ParamCompat, and application of one of the other subcases for $T_{ST}^*$.
            If $T''$ is compatible with $\generics{I}{T'}.T_{ST}'$, then it will also be compatible with $\generics{I}{T^*}.T_{ST}^*$ by ParamCompat and application of one of the other subcases for $T_{ST}^*$.

        \item[Subcase: $\generics{D}{T}.T_{ST} \oplus \generics{D}{T'}.T_{ST}' = \generics{D}{T \oplus T'}.T_{ST} \oplus \generics{D}{T \oplus T'}.T_{ST}' = \generics{D}{T^*}.T_{ST}^*$.]
            If $T''$ is compatible with $\generics{D}{T}.T_{ST}$, then it will also be compatible with $\generics{D}{T^*}.T_{ST}^*$ by ParamCompat, and application of one of the other subcases for $T_{ST}^*$.
            If $T''$ is compatible with $\generics{D}{T'}.T_{ST}$, then it will also be compatible with $\generics{D}{T^*}.T_{ST}^*$ by ParamCompat, and application of one of the other subcases for $T_{ST}^*$.
	\end{description}

\item[Case: Dispose-disposable.] Eliminating a variable from a context that is already consistent with $\Sigma$ leaves a context that is still consistent with $\Sigma$. Note that this rule does not allow removing bindings of the form $x.f: T$ because removing those bindings could result in inconsistencies, since then the types of those fields would (incorrectly) be assumed to be according to their declarations.

\end{description}
\end{proof}

\begin{lemma}[Merge Subtyping]
\label{merge-subtyping}
If $l: T \in merge(\Delta^*, \Delta^{**})$, then $l: T_1 \in \Delta^*$ and $l: T_2 \in \Delta^{**}$ with \subtype{\typeBounds}{T_1}{T}, \subtype{\typeBounds}{T_2}{T}, \sameOwnership{T_1}{T}, and \sameOwnership{T_2}{T}.
\end{lemma}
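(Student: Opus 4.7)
The plan is to proceed by induction on the derivation of $merge(\Delta^*, \Delta^{**}) = \Delta_m$, where $\Delta_m$ is the merged context containing $l : T$. There are only three rules for constructing $merge$, namely \textsc{Sym}, \textsc{$\oplus$}, and \textsc{Dispose-disposable}, so the induction is a straightforward case analysis on which rule produced the binding for $l$.

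First I would handle the easy structural cases. For \textsc{Sym}, the inductive hypothesis applied to the swapped merge yields $l : T_2 \in \Delta^{**}$ and $l : T_1 \in \Delta^*$ with the required properties, and symmetry of $\sameOwnership{}{}$ (given by rule $\approx$-Sym) lets us swap the roles. For \textsc{Dispose-disposable}, the binding for $l$ in $\Delta_m$ must come from the smaller inner merge, since the rule discards its disposable entry; so the inductive hypothesis immediately applies. This leaves the only ``interesting'' rule, \textsc{$\oplus$}, where $l : T_1 \in \Delta^*$, $l : T_2 \in \Delta^{**}$, and $T = T_1 \oplus T_2$; I must then establish $\subtype{\typeBounds}{T_i}{T}$ and $\sameOwnership{T_i}{T}$ for $i \in \{1,2\}$.

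The bulk of the proof is a case analysis on the definition of $\oplus$. I would handle each defining clause in turn:
\begin{itemize}
\item $T \oplus T = T$: Both properties follow from reflexivity of $<:$ (Lemma~\ref{subtyping-reflexivity}) and $\approx$-Refl.
\item $T_C.\Owned \oplus T_C.\overline{S} = T_C.\Owned$: Subtyping is immediate on the left; on the right, use $<:_*$-S-O combined with \textsc{<:-Matching-defs}. Both operands are $\maybeOwned{}$, so $\approx$-O-O applies.
\item $T_C.\Shared \oplus T_C.\Unowned = T_C.\Unowned$: Subtyping follows from $<:_*$-U-U (applied twice, once trivially). Both operands are $\notOwned{}$, giving $\sameOwnership{}{}$ via $\approx$-U-U.
\item $T_C.\overline{S} \oplus T_C.\overline{S'} = T_C.(\overline{S} \cup \overline{S'})$: Subtyping follows from $<:_*$-S-S'. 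Both operands are $\maybeOwned{}$ (state sets are always owning), so $\approx$-O-O applies.
\item The two clauses that combine a $\generics{C}{T}$ with a $\generics{I}{T}$ (or combine two $\generics{D}{T}$'s with possibly different parameters into $\generics{D}{T}$): apply \textsc{<:-Implements-interface} (or \textsc{<:-Matching-decls}), then recursively invoke the previous cases on the residual $T_{ST} \oplus T_{ST}'$.
\end{itemize}
For the recursive permission/state merge in the last two clauses, I would either prove a small auxiliary lemma that for any $T_{ST_1}, T_{ST_2}$, both $\subperm{\typeBounds}{T_{ST_i}}{T_{ST_1} \oplus T_{ST_2}}$ and the ownership of each operand agrees with that of the combined permission, or simply reduce directly to the first four bullets.

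The main obstacle is the interface/contract mixing case. Here the merged type may shift from the concrete contract $\generics{C}{T}$ up to the interface $\generics{I}{T}$ that $C$ implements, which requires the type-parameter substitution embedded in \textsc{<:-Implements-interface} to line up correctly; in particular, I need to be careful that the $T_{ST} \oplus T_{ST}'$ produced at the interface level is well-defined and that subtyping relates each of the original states/permissions to that combined permission. Once that bookkeeping is in place, establishing $\sameOwnership{}{}$ is routine because $\oplus$ only ever relates permissions of the same ``ownership flavor'' (either both maybeOwned or both notOwned), so one of the two rules $\approx$-O-O or $\approx$-U-U always discharges it.
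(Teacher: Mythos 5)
Your proposal is correct and follows essentially the same route as the paper: induction on the merge derivation, with the \textsc{Sym} and \textsc{Dispose-disposable} cases discharged by the induction hypothesis and the $\oplus$ case handled by a clause-by-clause analysis of the definition of $\oplus$ (which the paper factors out as the separate $\oplus$-subtyping lemma, Lemma~\ref{oplus-subtyping}, rather than inlining it). Your treatment of the ownership-equality obligations via $\approx$-O-O and $\approx$-U-U matches the paper's closing observation that each clause of $\oplus$ preserves ownership flavor.
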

\begin{proof}
By induction on the merge judgment.

\begin{description}
	\item[Case: Sym.] The conclusion follows immediately from the induction hypothesis.
	\item[Case: $\oplus$.] In each subcase, the conclusion follows from the induction hypothesis and the $\oplus$ subtyping lemma (\ref{oplus-subtyping}).
	\item[Case: Dispose-disposable.] $d \nin merge(\Delta^*, \Delta^{**})$, so the conclusion follows immediately from the induction hypothesis.
\end{description}
\end{proof}

\begin{lemma}[$\oplus$ subtyping]
\label{oplus-subtyping}
	If $T_1 \oplus T_2 = T$, then \subtype{\typeBounds}{T_1}{T} and \subtype{\typeBounds}{T_2}{T}.
\end{lemma}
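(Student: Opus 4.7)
The plan is to proceed by case analysis on the definition of $\oplus$, with a nested induction on the permission/state structure for the two compositional cases. For each of the six defining clauses of $\oplus$, I will exhibit a derivation of $\subtype{\typeBounds}{T_1}{T}$ (and symmetrically $\subtype{\typeBounds}{T_2}{T}$), using exactly the relevant subtyping and subpermission rules. Since $\oplus$ is a partial operation, the statement is understood to range only over inputs on which $\oplus$ is defined.

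The atomic cases are immediate from single subpermission rules. For $T\oplus T = T$, apply reflexivity ($<:$-Matching-defs with $<:_*$-Refl). For $T_C.\Owned \oplus T_C.\overline{S} = T_C.\Owned$, the nontrivial direction needs $\subperm{\typeBounds}{\overline{S}}{\Owned}$, which is exactly $<:_*$-S-O. For $T_C.\Shared \oplus T_C.\Unowned = T_C.\Unowned$, we need $\subperm{\typeBounds}{\Shared}{\Unowned}$, which is $<:_*$-U-U. For $T_C.\overline{S}\oplus T_C.\overline{S'} = T_C.(\overline{S}\cup\overline{S'})$, both directions follow from $<:_*$-S-S' together with $\overline{S}\subseteq \overline{S}\cup\overline{S'}$ and $\overline{S'}\subseteq \overline{S}\cup\overline{S'}$. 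In every atomic case the resulting $<:_*$ derivation is then lifted through $<:$-Matching-defs.

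The compositional cases are the interesting ones. For $\generics{D}{T}.T_{ST}\oplus \generics{D}{T}.T_{ST}' = \generics{D}{T}.(T_{ST}\oplus T_{ST}')$, I will appeal to an auxiliary permission-level version of the lemma: if $T_{ST}\oplus T_{ST}' = T_{ST}^{*}$ is defined, then $\subperm{\typeBounds}{T_{ST}}{T_{ST}^{*}}$ and $\subperm{\typeBounds}{T_{ST}'}{T_{ST}^{*}}$. This sub-lemma is proved by the same case analysis used for the atomic cases above. Given it, $<:$-Matching-defs closes the case. For $\generics{C}{T}.T_{ST}\oplus \generics{I}{T}.T_{ST}' = \generics{I}{T}.(T_{ST}\oplus T_{ST}')$ (with $def(C) = \contract\ \generics{C}{T_G}\ \implements\ \generics{I}{T}\{\ldots\}$), the first direction uses $<:$-Implements-interface together with the permission-level sub-lemma to derive $\subtype{\typeBounds}{\generics{C}{T}.T_{ST}}{\generics{I}{T}.(T_{ST}\oplus T_{ST}')}$, while the second direction is a direct application of $<:$-Matching-decls.

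The main obstacle I anticipate is bookkeeping around the permission-level $\oplus$ sub-lemma, particularly making sure its induction is well-founded on the syntactic form of $T_{ST}$ and that every clause it invokes is a rule listed under $<:_*$; in particular, the only case where transitivity of $<:_*$ is needed is the one producing $\Owned$ from $\overline{S}$ combined with later relaxation, but here direct appeals to $<:_*$-S-O, $<:_*$-S-S', and $<:_*$-U-U suffice. Permission variables do not appear inside $\oplus$ (its defining cases only cover the four concrete forms above together with reflexivity), so no appeal to $<:_*$-Var is required.
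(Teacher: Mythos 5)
Your proposal is correct and follows essentially the same route as the paper's proof: a case analysis on the defining clauses of $\oplus$, discharging each atomic case with the corresponding subpermission rule and handling the two compositional cases by an inductive appeal to the permission-level combination $T_{ST}\oplus T_{ST}'$ together with \textsc{Matching-decls} and \textsc{Implements-interface}. Your explicit permission-level sub-lemma is just a cleaner packaging of what the paper calls ``the induction hypothesis,'' so there is no substantive difference.
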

\begin{proof}
		\begin{description}
			\item[Case: $T \oplus T$.] It is proved by reflexivity of $<:$.
			\item[Case: $T_C.Owned \oplus T_C.\overline{S}$.] \subtype{\typeBounds}{T_C.Owned}{T_C.Owned} and \subtype{\typeBounds}{T_C.\overline{S}}{T_C.Owned}.
			\item[Case: $T_C.Shared \oplus T_C.Unowned$.] \subtype{\typeBounds}{T_C.Shared}{T_C.Unowned} and \subtype{\typeBounds}{T_C.Unowned}{T_C.Unowned}.
			\item[Case: $T_C.\overline{S} \oplus T_C.\overline{S'}$.] \subtype{\typeBounds}{T_C.\overline{S}}{T_C.(S \cup S')} and \subtype{\typeBounds}{T_C.\overline{S'}}{T_C.(S \cup S')}.
			\item[Case: $\generics{C}{T}.T_{ST} \oplus \generics{I}{T}.T_{ST}'$.] \subtype{\typeBounds}{\generics{C}{T}.T_{ST}}{\generics{I}{T}.(T_{ST} \oplus T_{ST}')} and \subtype{\typeBounds}{\generics{I}{T}.T_{ST}'}{\generics{I}{T}.(T_{ST} \oplus T_{ST}')} by rule \textit{implements-interface} and the induction hypothesis.
			\item[Case: $\generics{D}{T}.T_{ST} \oplus \generics{D}{T}.T_{ST}'$.] \subtype{\typeBounds}{\generics{D}{T}.T_{ST}}{\generics{D}{T}.(T_{ST} \oplus T_{ST}')} and \subtype{\typeBounds}{\generics{D}{T}.T_{ST}'}{\generics{D}{T}.(T_{ST} \oplus T_{ST}')} by rule Matching-Declarations and the induction hypothesis.
		\end{description}
	Note that in each of the above cases, \sameOwnership{T_1}{T_2}.

\end{proof}

\begin{theorem}[Unicity of ownership]
    If $\ok{\typeBounds, \Sigma, \Delta}$, and $o \mapsto \generics{C}{T}.S(\ldots) \in \mu$, and $refTypes(\Sigma, \Delta, o) = \overline{D}$, then at most one $T \in \overline{D}$ is either $\generics{C}{T}.\overline{S}$ or $\generics{C}{T}.Owned$.
\end{theorem}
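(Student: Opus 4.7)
The strategy is to proceed by contradiction. Suppose for the sake of contradiction that there are two distinct positions $i \neq j$ in $\overline{D}$ such that $D_i$ and $D_j$ are both of the owning form, i.e., each is either $\generics{C}{T}.\overline{S}$ or $\generics{C}{T}.\Owned$. From $\ok{\typeBounds, \Sigma, \Delta}$, inversion of global consistency gives reference consistency $\Sigma, \Delta \vdash \ok{o}$, which requires that $\compatibleTypes{D_i}{D_j}$ or $StateLockCompatible(D_i, D_j)$.

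The core of the argument is an auxiliary lemma: for any pair of types $U_1, U_2$ with $\compatibleTypes{U_1}{U_2}$, at most one of $U_1, U_2$ has its mode part in $\{\overline{S}, \Owned\}$. I would prove this by induction on the derivation of $\leftrightarrow$, inspecting each rule. The base rules UUCompat, USCompat, UOCompat, UStatesCompat each contain a $\Unowned$ mode on one side, and SCompat has $\Shared$ on both sides, so the conclusion holds directly. The structural rule SymCompat is immediate from the IH. For SubtypeCompat and ParamCompat, the premise preserves the mode parts $T_{ST}, T_{ST}'$ of both sides, so the IH applied to the premise yields the conclusion for the conclusion's types. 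Thus no pair of owning modes can be made compatible.

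Next, by inspection of the definition $StateLockCompatible(T_1, T_2) \triangleq o \in \Sigma_\phi \land ((i \neq j) \implies owned(T_i) \land T_j = \generics{C}{T}.\Shared)$, exactly one of the two arguments is owning (the other being $\Shared$); in particular, two owning types never satisfy this relation either. Combining the two observations contradicts the assumption that $D_i$ and $D_j$ are both owning, completing the proof.

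The main obstacle I anticipate is treating the structural compatibility rules (SubtypeCompat, ParamCompat) correctly, since they do not themselves constrain the modes; the key observation that makes the induction go through cleanly is that both rules relate conclusions whose mode components are identical to those of their premises, so ``owning-ness'' of either side is preserved and the inductive hypothesis applies directly. Beyond that, the proof is essentially a finite case analysis on the compatibility and state-lock-compatibility rules.
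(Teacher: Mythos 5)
Your proposal is correct and follows essentially the same route as the paper's proof: invert reference consistency to get that every pair of aliases is either compatible or state-lock-compatible, observe by case analysis of the $\leftrightarrow$ rules that no two owning modes are compatible, and note that $StateLockCompatible$ forces one side to be $\Shared$. The only difference is that you spell out the induction over the structural rules (SymCompat, SubtypeCompat, ParamCompat) explicitly, whereas the paper asserts the incompatibility facts by inspection.
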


\begin{proof}
    By inversion of reference consistency, $\forall T_1, T_2 \in \overline{D}, \compatibleTypes{T_1}{T_2} \text{ or }(o \in \Sigma_\phi \text { and } T_i = \generics{C}{T}.S \text { and } T_j = \generics{C}{T}.Shared (i \neq j))$. Note that $\generics{C}{T}.Owned$ is not compatible with either $\generics{C}{T}.Owned$ or $\generics{C}{T}.\overline{S}$, and $\generics{C}{T}.\overline{S}$ is not compatible with $\generics{C}{T}.\overline{S}$. If there were more than one alias of type $\generics{C}{T}.Owned$ or $\generics{C}{T}.\overline{S}$, they would be incompatible, which would be a contradiction. Even if $o \in \Sigma_\phi$, the aliases are restricted to shared and state-specifying aliases, and never more than one state-specifying alias exists.
\end{proof}

\begin{lemma}[Merging preserves nondisposability]
\label{merging-preserves-nondisposability}
    Suppose $\Delta_1, \Delta_2$ are static contexts. If ($s : T \in \Delta_1$ or $s : T \in \Delta_2$), $\nonDisposable{\typeBounds}{T}$, and $merge(\Delta_1, \Delta_2) = \Delta$, then $s : T' \in \Delta$ such that $\nonDisposable{\typeBounds}{T'}$.
\end{lemma}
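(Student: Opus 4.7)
The plan is to proceed by case analysis on the derivation of $merge(\Delta_1, \Delta_2) = \Delta$ (equivalently, on the three defining rules \textsc{Sym}, $\oplus$, and \textsc{Dispose-disposable}). Without loss of generality (using \textsc{Sym}), I assume $s : T \in \Delta_1$ with $\nonDisposable{\typeBounds}{T}$. By inversion on the definition of $nonDisposable$, $T = T_C.T_{ST}$ with $\maybeOwned{T_C.T_{ST}}$ and $\isAsset{\typeBounds}{T_C.T_{ST}}$.

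First I would dispatch the case where $s \notin \mathrm{Dom}(\Delta_2)$. The only merge rule that could eliminate $s$ is \textsc{Dispose-disposable}, which requires $\disposable{\typeBounds}{T}$. By exclusivity of disposability and nondisposability (Lemma~\ref{disposable-xor-nondisposable}), this contradicts $\nonDisposable{\typeBounds}{T}$, so this subcase is impossible under our assumption that $merge$ is defined. Hence $s$ must appear in both contexts, and by the $\oplus$ rule we get $s : T \oplus T_2 \in \Delta$ for some $T_2 \in \Delta_2$.

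The main work is then showing $\nonDisposable{\typeBounds}{T \oplus T_2}$ by case analysis on the definition of $\oplus$. The reflexive case $T \oplus T = T$ is immediate. The case $T_C.\Owned \oplus T_C.\overline{S} = T_C.\Owned$ is handled because the result is $\maybeOwned$ and $\isAsset$ depends only on the contract's states, which are unchanged. The case $T_C.\overline{S} \oplus T_C.\overline{S'} = T_C.(\overline{S} \cup \overline{S'})$ is similar: the union of owned state sets is still $\maybeOwned$, and asset-ness is preserved because it is determined by the states of the contract. The case $T_C.\Shared \oplus T_C.\Unowned$ cannot arise, because neither operand is $\maybeOwned$, so neither could be nondisposable; this would contradict our assumption that at least one of $T, T_2$ is nondisposable (after applying \textsc{Sym} as needed). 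For the two declaration cases $\generics{C}{T}.T_{ST} \oplus \generics{I}{T}.T_{ST}'$ and $\generics{D}{T}.T_{ST} \oplus \generics{D}{T}.T_{ST}'$, I apply the result inductively to the state/permission components and argue that asset-ness lifts to the interface via the \textsc{implementOk} condition (which forces the interface state to be labeled $\asset$ whenever the implementing contract's state is).

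The main obstacle I expect is the interface case: showing that when one side supplies a concrete contract $\generics{C}{T}$ and the other supplies an interface $\generics{I}{T}$ that $C$ implements, the isAsset property survives the passage to $\generics{I}{T}.(T_{ST} \oplus T_{ST}')$. I would handle this with a small supporting observation, reading off from \textsc{implementOk} and the definition of $\isAsset{\typeBounds}{\cdot}$ on interfaces, that an asset state in $C$ corresponds to an asset state in $I$, so $\isAsset{\typeBounds}{\generics{I}{T}.(T_{ST} \oplus T_{ST}')}$ follows from $\isAsset{\typeBounds}{\generics{C}{T}.T_{ST}}$. Combined with the inductively-established $\maybeOwned$ property of the merged permission/state, this closes the case.
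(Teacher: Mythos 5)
Your proof is correct and follows essentially the same route as the paper's: a case analysis on the merge rules, with the \textsc{Dispose-disposable} case refuted via the exclusivity of disposability and non-disposability, and the $\oplus$ case discharged by checking each clause of $\oplus$. Your treatment of the $\oplus$ case is in fact more complete than the paper's, which only observes that ownership is preserved; since non-disposability also requires the type to be an asset, your explicit check that asset-ness survives the union of state sets and the lift from a contract type to an interface type (via the \textsc{implementOk} condition) supplies a step the paper leaves implicit.
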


\begin{proof}
By case analysis on $merge(\Delta_1, \Delta_2)$.

\begin{description}
	\item[Case: Sym]. The induction hypothesis applies to $merge(\Delta, \Delta')$ since the lemma was stated symmetrically.
	\item[Case: $\oplus$.] Note that in all cases of the definition of $T_1 \oplus T_2 = T_3$, if either $owned(T_1)$ or $owned(T_2)$, then $owned(T_3)$as well.
    \item[Case: Dispose-disposable.] Without loss of generality, suppose $s : T \in \Delta_1$. By inversion, $x \nin \Delta_2$. By assumption, $\nonDisposable{\typeBounds}{T}$. But by inversion, $\disposable{\typeBounds}{T}$. This is a contradiction (\ref{disposable-xor-nondisposable}).
\end{description}

\end{proof}

\end{document}